\newtheorem{theorem}{Theorem}
\newtheorem{lemma}[theorem]{Lemma}
\newtheorem{corollary}[theorem]{Corollary}
\newtheorem{proposition}[theorem]{Proposition}
\theoremstyle{definition}
\newtheorem{definition}[theorem]{Definition}
\theoremstyle{remark}
\newtheorem{remark}[theorem]{Remark}
\definecolor{lightblue}{rgb}{.80,.85,1}
\definecolor{khaki}{rgb}{.94,.90,.55}
\definecolor{salmon}{rgb}{1.0,0.62,0.47}
\definecolor{grey}{rgb}{.5,.5,.5}
\definecolor{darkorchid}{rgb}{0.60,0.20,0.80}
\definecolor{venitian}{rgb}{0.8,0.12,0.2}
\newcommand{\defunder}[1]{\underset{\text{def.}}{#1} \:}
\newcommand{\refitem}[1]{\textcolor{darkgray}{\sffamily\bfseries\upshape\mathversion{bold}\ref{#1}}}
\newcommand{\refhypo}[1]{\textcolor{darkgray}{\sffamily\bfseries\upshape\mathversion{bold}\ref{#1}}}
\DeclarePairedDelimiter{\thicknormaux}{\bracevert}{\bracevert}
\NewDocumentCommand{\thicknorm}{som}{%
  \IfBooleanTF{#1}
    {\thicknormaux*{#3}}
    {\IfNoValueTF{#2}
       {\thicknormaux*{\mkern-5mu\vphantom{dq}#3\mkern-5mu}}
       {\thicknormaux[#2]{#3}}%
    }%
    }
\DeclarePairedDelimiter\abs{\lvert}{\rvert}
\let\oldabs\abs
\def\abs{\@ifstar{\oldabs}{\oldabs*}}
\def\widebreve{\mathpalette\wide@breve}
\def\wide@breve#1#2{\sbox\z@{$#1#2$}%
     \mathop{\vbox{\m@th\ialign{##\crcr
\kern0.08em\brevefill#1{0.8\wd\z@}\crcr\noalign{\nointerlineskip}%
                    $\hss#1#2\hss$\crcr}}}\limits}
\def\brevefill#1#2{$\m@th\sbox\tw@{$#1($}%
  \hss\resizebox{#2}{\wd\tw@}{\rotatebox[origin=c]{90}{\upshape(}}\hss$}
\DeclareFontFamily{U}{matha}{\hyphenchar\font45}
\DeclareFontShape{U}{matha}{m}{n}{
      <5> <6> <7> <8> <9> <10> gen * matha
      <10.95> matha10 <12> <14.4> <17.28> <20.74> <24.88> matha12
      }{}
\DeclareSymbolFont{matha}{U}{matha}{m}{n}
\DeclareFontFamily{U}{mathx}{\hyphenchar\font45}
\DeclareFontShape{U}{mathx}{m}{n}{
      <5> <6> <7> <8> <9> <10>
      <10.95> <12> <14.4> <17.28> <20.74> <24.88>
      mathx10
      }{}
\DeclareSymbolFont{mathx}{U}{mathx}{m}{n}
\DeclareMathDelimiter{\vvvert}{0}{matha}{"7E}{mathx}{"17}
\newcounter{defproblem}
\newcommand{\Rspace}{\ensuremath{\mathbb{R}}\xspace}
\newcommand{\M}{\ensuremath{\mathcal{M}}\xspace}
\newcommand{\D}{\ensuremath{\mathcal{D}}\xspace}
\newcommand{\Hcal}{\ensuremath{\mathcal{H}}\xspace}
\newcommand{\Hscr}{\ensuremath{\mathscr{H}}\xspace}
\newcommand{\Dim}{N}
\newcommand{\Rrel}{\ensuremath{\mathscr{R}}\xspace}
\newcommand{\Srel}{\ensuremath{\mathscr{S}}\xspace}
\newcommand{\Qrel}{\ensuremath{\mathscr{Q}}\xspace}
\newcommand{\styleitem}[1]{\textcolor{darkgray}{\sffamily\bfseries\upshape\mathversion{bold}#1}\xspace}
\newcommand{\ita}{\styleitem{(a)}}
\newcommand{\itb}[1]{\styleitem{(b$_{#1}$)}}
\newcommand{\Conv}[1]{\operatorname{conv}#1}
\newcommand{\Aff}[1]{\operatorname{aff}#1}
\newcommand{\Reach}[1]{\operatorname{reach}#1}
\newcommand{\reach}{\ensuremath{\mathcal{R}}\xspace}
\newcommand{\Offset}[2]{#1^{\oplus #2}}
\newcommand{\Diam}[1]{\operatorname{Diam}(#1)}
\newcommand{\Sep}[1]{\operatorname{separation}(#1)}
\newcommand{\height}[1]{\operatorname{height}(#1)}
\newcommand{\Height}[2]{\operatorname{height}(#1,#2)}
\newcommand{\Protection}[2]{\operatorname{protection}(#1,#2)}
\newcommand{\MA}[1]{\operatorname{axis}(#1)}
\newcommand{\VectorSpace}[1]{\operatorname{Vec}(#1)}
\newcommand{\Tangent}[2]{\mathbf{T}_{#1}#2}
\newcommand{\Normal}[2]{\mathbf{N}_{#1}#2}
\def\restrict#1{\raise-.5ex\hbox{\ensuremath|}_{#1}}
\newcommand\restr[2]{{
  \left.\kern-\nulldelimiterspace 
  #1 
  \vphantom{\big|} 
  \right|_{#2} 
}}
\newcommand \Above[2]{\genfrac{}{}{0pt}{0}{#1}{#2}}
\newcommand{\Domain}[1]{\operatorname{Domain}(#1)}
\newcommand{\Range}[1]{\operatorname{Range}(#1)}
\newcommand{\prTangent}[2]{\pi_{\Tangent {#1} \M}(#2)}
\newcommand{\Vertexset}[1]  {\operatorname{Vert}{#1}}
\newcommand{\Shadow}[1]{\boldsymbol{\lvert} #1 \boldsymbol{\rvert}}
\newcommand{\relint}[1]{\operatorname{relint}(#1)}
\newcommand{\US}[1]{\boldsymbol{\lvert} #1 \boldsymbol{\rvert}}
\newcommand{\Del}[1]{\operatorname{Del}(#1)}
\newcommand{\FlatDel}[2]{\operatorname{FlatDel}_{\M}(#1,#2)}
\newcommand{\Prestar}[2]{\operatorname{Prestar}_{P,\M}(#1,#2)}
\newcommand{\Delstar}[2]{\operatorname{Star}_{P,\M}(#1,#2)}
\title{Flat Delaunay Complexes for Homeomorphic Manifold Reconstruction}
\author{Dominique Attali\footnote{Univ. Grenoble Alpes, CNRS,
  GIPSA-lab, Grenoble, France. \texttt{Dominique.Attali@grenoble-inp.fr}}
\and
André Lieutier\footnote{Dassault systèmes, Aix-en-Provence,
France. \texttt{andre.lieutier@3ds.com}}
}
\begin{document}

\maketitle

\begin{abstract}
  Given a smooth submanifold of the Euclidean space, a finite point cloud
  and a scale parameter, we introduce a construction which we call the
  {\em flat Delaunay complex} (FDC). This is a variant of the
  tangential Delaunay complex (TDC) introduced by Boissonnat et
  al. \cite{boissonnat2014manifold,boissonnat2018geometric}. Building
  on their work, we provide a short and direct proof that when the
  point cloud samples sufficiently nicely the submanifold and is
  sufficiently safe (a notion which we define in the paper), our
  construction is homeomorphic to the submanifold. Because the proof
  works even when data points are noisy, this allows us to propose a
  perturbation scheme that takes as input a point cloud sufficiently
  nice and returns a point cloud which in addition is sufficiently
  safe. Equally importantly, our construction provides the framework
  underlying a variational formulation of the reconstruction problem
  which we present in a companion paper \cite{socg22}.
\end{abstract}

\section{Introduction}


In this paper, we consider a variant of the {\em tangential Delaunay
  complex} for triangulating smooth $d$-dimensional
  submanifolds of $\Rspace^\Dim$ that we call the {\em flat Delaunay
  complex}.

\subparagraph{Manifold reconstruction and learning.}
In many practical situations, the shape of interest is only known
through a finite set of data points. Given these data points as input,
it is then natural to try to construct a triangulation of the shape,
that is, a set of simplices whose union is homeomorphic to the
shape. This problem has given rise to many research works in the
computational geometry community, motivated by applications to 3D
model reconstruction and manifold learning; see for instance
\cite{edelsbrunner1994triangulating,amenta02:_simpl_algor,dey2007curve,boissonnat2014manifold,boissonnat2006provably,khoury2021restricted}
to mention a few of them.

In manifold learning, data sets typically live in high dimensional
spaces but are assumed to be distributed near unknown relatively low
dimensional smooth manifolds. In this context, reconstruction
algorithms have to deal efficiently with manifolds having an arbitrary
codimension and, most importantly, should have a complexity which is
only polynomial in the ambient dimension.
The  tangential Delaunay complex of Boissonnat et al. \cite{boissonnat2014manifold} and
\cite[section 8.2]{boissonnat2018geometric} enjoys this polynomial complexity with respect to 
the ambient dimension.

\subparagraph{Tangential Delaunay complex (TDC).}
Consider a set of data points $P$ that sample a smooth $d$-submanifold
\M of $\Rspace^\Dim$. The idea of the TDC is that, given as input $P$
together with the tangent spaces {$\Tangent p \M$} for each $p \in P$,
it is possible to triangulate $\M$ locally around a point $p \in P$ by
considering the Delaunay complex of $P$ restricted to $\Tangent p \M$
and collecting Delaunay simplices incident to $p$; see
\cite{boissonnat2014manifold,boissonnat2018geometric}. In those papers, the resulting collection of simplices
is called the {\em star} of $p$ and its
computation is made efficient by observing that restricting the
Delaunay complex $P$ to the tangent space $\Tangent p \M$ boils down
to projecting points of $P$ onto $\Tangent p \M$ and computing a
$d$-dimensional weighted Delaunay complex of the projected points, the
weight of the projection of $q \in P$ being the squared distance
between $q$ and $\Tangent p \M$. The tangential Delaunay complex (TDC)
is defined as the union of the stars of all points in $P$.



The stars in the TDC are said to be consistent if any simplex in the TDC
belongs to the star of each of its vertices.  The authors prove in
particular that (1) when the data set is sufficiently dense with
respect to the reach of \M, a weight assignment -- through Moser
Tardos Algorithm \cite{moser2010constructive} -- makes the stars
consistent and (2) that when the stars are consistent, the TDC is a
triangulation of the manifold, more precisely, the TDC is embedded and
the projection onto $\M$ restricted to the TDC is an homeomorphism.

\subparagraph{Our contributions.} We propose a construction called the
{\em flat Delaunay complex} (FDC) that exhibits the same behavior as
the TDC described above.  First, it has a geometric characterization
of simplices analog to that of the TDC, the only difference being that
around each point $p$, we replace the computation of the weighted
Delaunay complex by that of an unweighted one and, as a counterpart,
restrict computations inside a ball of radius $\rho$ around
$p$. While, from an application perspective, our FDC would lead to
similar practical algorithms than the TDC, we claim that it brings
significant theoretical contributions.


First, while the criterion of star consistency in TDC is simple and
elegant, the proof of homeomorphism for TDC, once this consistency is
assumed, prove to be rather involved, requiring in particular the use
of a lemma by Whitney about the projection of oriented PL
pseudo-manifolds; see \cite[Lemma 5.14]{boissonnat2014manifold},
\cite{boissonnat2020local} and \cite[Lemma 15a, Appendix
  II]{whitney2005geometric}.  Our construction defines instead what we
call {\em prestars} everywhere in space, not merely at the points of the
data set and, for each $d$-simplex $\sigma$ in the FDC, requires these
prestars to agree at every pair of points in $\Conv\sigma$ and not
merely at the vertices of $\sigma$. This allows us to give a
more direct and, in our opinion, more insightful proof for the
homeomorphism.

Second, as in the proof of correctness for TDC, a crucial ingredient
consists in quantifying some metric distorsion between projections on
various affine $d$-spaces.  By considering metric distorsions in the
context of relations instead of maps (as in Gromov-Hausdorff distance
definition \cite[Section 5.30]{bridson2013metric}), we are able to
generalize stability results to the case of noisy data points. By
assuming $P \subseteq \Offset \M \delta$ instead of $P \subseteq \M$,
this gives us the flexibility to perturb the data points and ensure
correctness of the FDC after some particular perturbation.

Third, the framework of the FDC is particularly convenient for supporting
the proof of correctness of a linear variational formulation, which
we present in a companion paper \cite{socg22}.


%
%
%
%
%

\section{Preliminaries}
\label{section:preliminaries}

In this section, we review the necessary background and explain some of our terms.

\subsection{Subsets and submanifolds}
%
Given a subset $A \subseteq \Rspace^\Dim$, the affine space spanned by
$A$ is denoted by $\Aff A$ and the convex hull of $A$ by $\Conv{A}$.
The {\em medial axis} of $A$, denoted as $\MA{A}$, is the set of points
in $\Rspace^\Dim$ that have at least two closest points in $A$. The
{\em projection map} $\pi_A : \Rspace^\Dim \setminus \MA{A} \to A$
associates to each point $x$ its unique closest point in $A$. The {\em
  reach} of $A$ is the infimum of distances between $A$ and its medial
axis and is denoted as $\Reach A$. By definition, the projection map
$\pi_A$ is well-defined on every subset of $\Rspace^\Dim$ that does not intersect the
medial axis of $A$. In particular, letting the {\em $r$-tubular
  neighborhood} of $A$ be the set of points $\Offset A r = \{x
\in \Rspace^\Dim \mid d(x,A) \leq r \}$, the projection map $\pi_A$ is
well-defined on every $r$-tubular neighborhood of $A$ with $r < \Reach
A$. For short, we say that a subset $\sigma \subseteq \Rspace^\Dim$ is
$\rho$-small if it can be enclosed in a ball of radius $\rho$.

Throughout the paper, $\M$ designates a compact $C^2$ $d$-dimensional
submanifold of $\Rspace^\Dim$ for $d < \Dim$. For any point $m \in
\M$, the tangent plane to $m$ at \M is denoted as $\Tangent m
\M$. Because \M is $C^2$ and therefore $C^{1,1}$, the reach of \M is
positive \cite{federer-59}. We let $\reach$ be a fixed finite constant such that $0 <
\reach \leq \Reach{\M}$.

\subsection{Simplicial complexes}
%
In this section, we review some background notation on simplicial
complexes. For more details, the reader is referred to
\cite{munkres1993elements}. We also introduce the concept of faithful
reconstruction which encapsulates what we mean by a ``desirable''
approximation of a manifold.

All simplicial complexes that we consider are abstract. An
\emph{abstract simplicial complex} is a collection $K$ of finite
non-empty sets, such that if $\sigma$ is an element of $K$, so is
every non-empty subset of $\sigma$. The element $\sigma$ of $K$ is
called an \emph{abstract simplex} and its \emph{dimension} is one less
than its cardinality. The \emph{vertex set} of $K$ is the union of its
elements, $\Vertexset K = \bigcup_{\sigma \in K} \sigma$.  We are
interested in the situation where the vertex set of $K$ is a subset of
$\Rspace^\Dim$. In that situation, each abstract simplex $\sigma
\subseteq \Rspace^\Dim$ is naturally associated to a geometric simplex
defined as $\Conv{\sigma}$. The dimension of $\Conv{\sigma}$ is the
dimension of the affine space $\Aff{\sigma}$ and cannot be larger than
the dimension of the abstract simplex $\sigma$. When $\dim(\sigma) =
\dim(\Aff \sigma)$, we say that $\sigma$ is
\emph{non-degenerate}. Equivalently, the vertices of $\sigma$ form an affinely
independent set of points.

  Given a set of simplices $\Sigma$ with vertices in $\Rspace^\Dim$ (not necessarily forming a
simplicial complex), let us define the \emph{shadow} of $\Sigma$ as
the subset of $\Rspace^\Dim$ covered by the relative interior of
the geometric simplices associated to abstract simplices in $\Sigma$,
$\Shadow{\Sigma} = \bigcup_{\sigma \in \Sigma} \relint{\Conv{\sigma}}$.
  We shall
say that  $\Sigma$  is \emph{geometrically realized} (or
\emph{embedded}) if (1)
$\dim(\sigma)=\dim(\Aff\sigma)$ for all $\sigma \in \Sigma$ and (2)
$\Conv(\alpha \cap\beta) = \Conv\alpha \cap \Conv\beta$ for all
$\alpha, \beta \in \Sigma$.
%
%
%
%
%
%
%
\begin{definition}[Faithful reconstruction]
  Consider a subset $A \subseteq \Rspace^\Dim$ whose reach is positive,
  and a simplicial complex $K$ with a vertex set in $\Rspace^\Dim$. We
  say that $K$ \emph{reconstructs $A$ faithfully}  (or is a
  \emph{faithful reconstruction} of $A$) if the following three
  conditions hold:
  \begin{description}[itemsep=1pt,parsep=1pt,topsep=4pt]
  \item[\styleitem{Embedding:}] $K$ is geometrically realized;
  \item[\styleitem{Closeness:}] $\US K$ is contained in the $r$-tubular neighborhood of $A$ for some $0 \leq r < \Reach A$;
  \item[\styleitem{Homeomorphism:}] The restriction of $\pi_A: \Rspace^\Dim \setminus \MA{A} \to A$ to $\US K$ is a homeomorphism.
  \end{description}
\end{definition}
%


\subsection{Height, circumsphere and smallest enclosing ball}

All simplices we consider in the paper are abstract, unless
explicitely stated otherwise. The \emph{height} of a simplex $\sigma$
is $\height{\sigma} = \min_{v \in \sigma} d(v,\Aff(\sigma \setminus
\{v\}))$. The height of $\sigma$ vanishes if and only if $\sigma$ is
degenerate. If $\sigma$ is non-degenerate, then, letting
$d=\dim\sigma=\dim\Aff\sigma$, there exists a unique $(d-1)$-sphere
that circumscribes $\sigma$ and therefore at least one $(N-1)$-sphere
that circumscribes $\sigma$. Hence, if $\sigma$ is non-degenerate, it
makes sense to define $S(\sigma)$ as the smallest $(N-1)$-sphere that
circumscribes $\sigma$. Let $Z(\sigma)$ and $R(\sigma)$ denote the
center and radius of $S(\sigma)$, respectively.  Let $c_\sigma$ and
$r_\sigma$ denote the center and radius of the smallest $N$-ball
enclosing $\sigma$, respectively. Clearly, $r_\sigma \leq R(\sigma)$
and both $c_\sigma$ and $Z(\sigma)$ belong to $\Aff \sigma$. The
intersection $S(\sigma) \cap \Aff \sigma$ is a $(d-1)$-sphere which is
the unique $(d-1)$-sphere circumscribing $\sigma$ in $\Aff \sigma$.

\subsection{Delaunay complexes}

Consider a finite point set $Q \subseteq \Rspace^\Dim$. We say
  that an $(N-1)$-sphere is $Q$-empty if it is the boundary of a ball
  that contains no points of $Q$ in its interior. We say that $\sigma
  \subseteq Q$ is a \emph{Delaunay simplex} of $Q$ if there exists an
  $(N-1)$-sphere that circumscribes $\sigma$ and is $Q$-empty.  The
set of Delaunay simplices form a simplicial complex called the
\emph{Delaunay complex} of $Q$ and denoted as $\Del{Q}$.

\begin{definition}[General position]
  Let $d = \dim(\Aff Q)$. We say that $Q
  \subseteq \Rspace^\Dim$ is in \emph{general position} if no $d+2$
  points of $Q$ lie on a common $(d-1)$-dimensional sphere.
\end{definition}

\begin{lemma}
When $Q$ is in general position, $\Del{Q}$ is geometrically realized.
\end{lemma}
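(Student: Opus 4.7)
The plan is to use the classical paraboloid lifting, which converts the problem into a statement about the face lattice of a convex polytope. First I would reduce to the case $\dim \Aff Q = \Dim$: since any $(\Dim-1)$-sphere through points of $Q$ meets $\Aff Q$ in a sphere of dimension at most $d-1$ with the same $Q$-emptiness property (where $d = \dim \Aff Q$), and conversely every empty $(d-1)$-sphere in $\Aff Q$ extends to an empty $(\Dim-1)$-sphere in $\Rspace^\Dim$, the complex $\Del{Q}$ is unchanged if we view $Q$ inside $\Aff Q$, identified with $\Rspace^d$. From now on I would assume $\Dim = d$.

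Next, I introduce the lifting map $\phi : \Rspace^d \to \Rspace^{d+1}$ defined by $\phi(x) = (x, \|x\|^2)$, and consider the convex polytope $P = \Conv \phi(Q)$. The standard correspondence is that $k+1$ points of $\Rspace^d$ lie on a common $(d-1)$-sphere (or affine hyperplane) if and only if their lifts lie on a common affine hyperplane of $\Rspace^{d+1}$, and such a sphere is $Q$-empty exactly when the corresponding hyperplane is a lower supporting hyperplane of $P$. Hence $\sigma$ is a Delaunay simplex of $Q$ if and only if $\phi(\sigma)$ is contained in the vertex set of some lower face of $P$. The general position hypothesis --- no $d+2$ points of $Q$ on a common $(d-1)$-sphere --- translates via the lift to: no $d+2$ points of $\phi(Q)$ lie on a common affine hyperplane of $\Rspace^{d+1}$. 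Consequently, every lower face of $P$ has at most $d+1$ affinely independent vertices, i.e., is a simplex.

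Finally, the lower faces of $P$ form a geometric simplicial complex $\widehat K$ embedded in $\Rspace^{d+1}$. Let $\pi : \Rspace^{d+1} \to \Rspace^d$ be the projection forgetting the last coordinate. The lower envelope of $P$ is the graph of a convex piecewise-linear function, so $\pi$ restricted to $\widehat K$ is injective and has affine inverse on each face; in particular $\pi$ preserves affine independence on vertex sets. Non-degeneracy of each $\sigma \in \Del{Q}$ (condition (1) of geometric realization) then follows from the affine independence of the vertices of $\phi(\sigma)$. For condition (2), the intersection of two faces of a convex polytope is itself a face, and here every face is a simplex, so $\Conv \phi(\alpha) \cap \Conv \phi(\beta) = \Conv \phi(\alpha \cap \beta)$; applying the affine bijection $\pi$ yields $\Conv\alpha \cap \Conv\beta = \Conv(\alpha \cap \beta)$. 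The main delicate point is to make rigorous the bijection between $(d-1)$-spheres in $\Rspace^d$ and non-vertical affine hyperplanes in $\Rspace^{d+1}$ given by the lift, together with the matching between $Q$-emptiness of a sphere and the lower-support condition on the corresponding hyperplane; once that is in place, the conclusion reduces to standard polytope theory.
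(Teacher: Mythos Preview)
The paper states this lemma without proof, treating it as a standard background fact, so there is no argument in the text to compare yours against directly. Your paraboloid-lifting approach is exactly the standard route for this result.

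There is, however, one genuine gap---and it actually traces back to an imprecision in the paper's hypothesis rather than to your method. From ``no $d+2$ points of $Q$ on a common $(d-1)$-sphere'' you correctly deduce that every lower face of the lifted polytope has at most $d+1$ vertices, but you then assert that such a face ``is a simplex''. That step requires those vertices to be affinely independent, which does not follow from the stated hypothesis. For a concrete failure take $d=3$ and let $q_0,\dots,q_3\in Q$ be four points on a common circle; they are coplanar and hence affinely dependent, yet infinitely many $2$-spheres pass through them, and the remaining points of $Q$ can be placed so that one such sphere is $Q$-empty while no five points of $Q$ lie on a common $2$-sphere. Then $\{q_0,\dots,q_3\}\in\Del{Q}$ is a degenerate abstract $3$-simplex and condition~(1) of geometric realization fails; in the lifted picture the corresponding lower face is a planar quadrilateral, not a tetrahedron. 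The usual fix is to strengthen ``general position'' to also require that no $d+1$ points of $Q$ lie on a common hyperplane; with that addition every lower face is indeed a simplex and your argument goes through unchanged.
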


%
%
%
%
%

\section{Flat Delaunay complex}
\label{section:flat-delaunay-complex}

For simplicity, whenever $x \in \Rspace^\Dim \setminus \MA{\M}$, we
shall write $x^* = \pi_\M(x)$ for the projection of $x$ onto \M.
Afterwards, we assume once and for all that $P \subseteq \Rspace^\Dim
\setminus \MA{\M}$, so that the projection $p^* = \pi_\M(p)$ is
well-defined at every point $p \in P$. Given \M, $P$ and a scale
parameter $\rho \geq 0$, we introduce a construction which we call the
flat Delaunay complex of $P$ with respect to \M at scale $\rho$
(Section \ref{section:flat-delaunay-complex-definition}) and
  make some preliminary remarks (Section
  \ref{section:flat-delaunay-complex-preliminaries}).

\subsection{Definitions}
\label{section:flat-delaunay-complex-definition}

  \begin{figure}[htb]
    \begin{center}
      \includegraphics[width=0.495\linewidth]{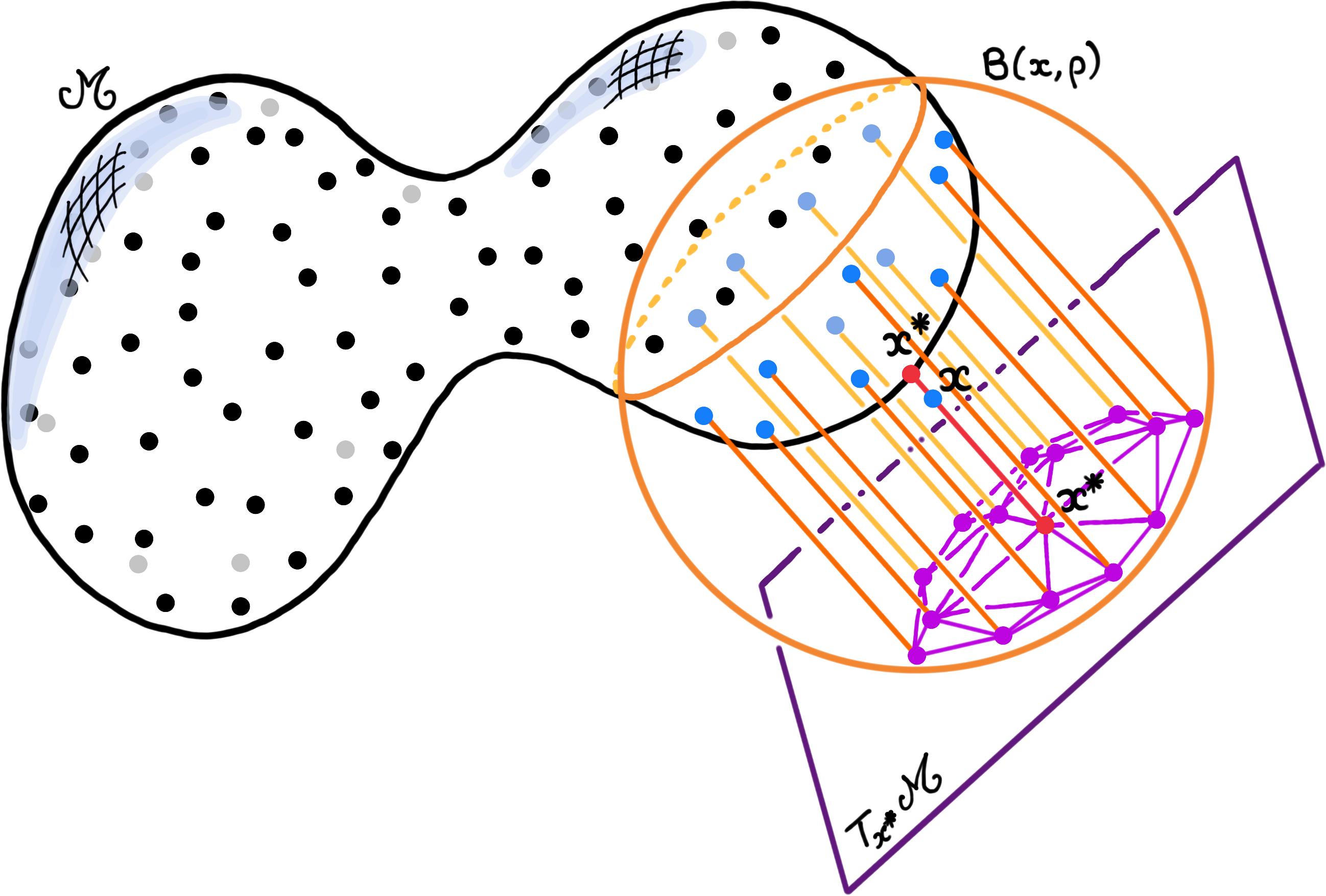}\hfill
      \includegraphics[width=0.495\linewidth]{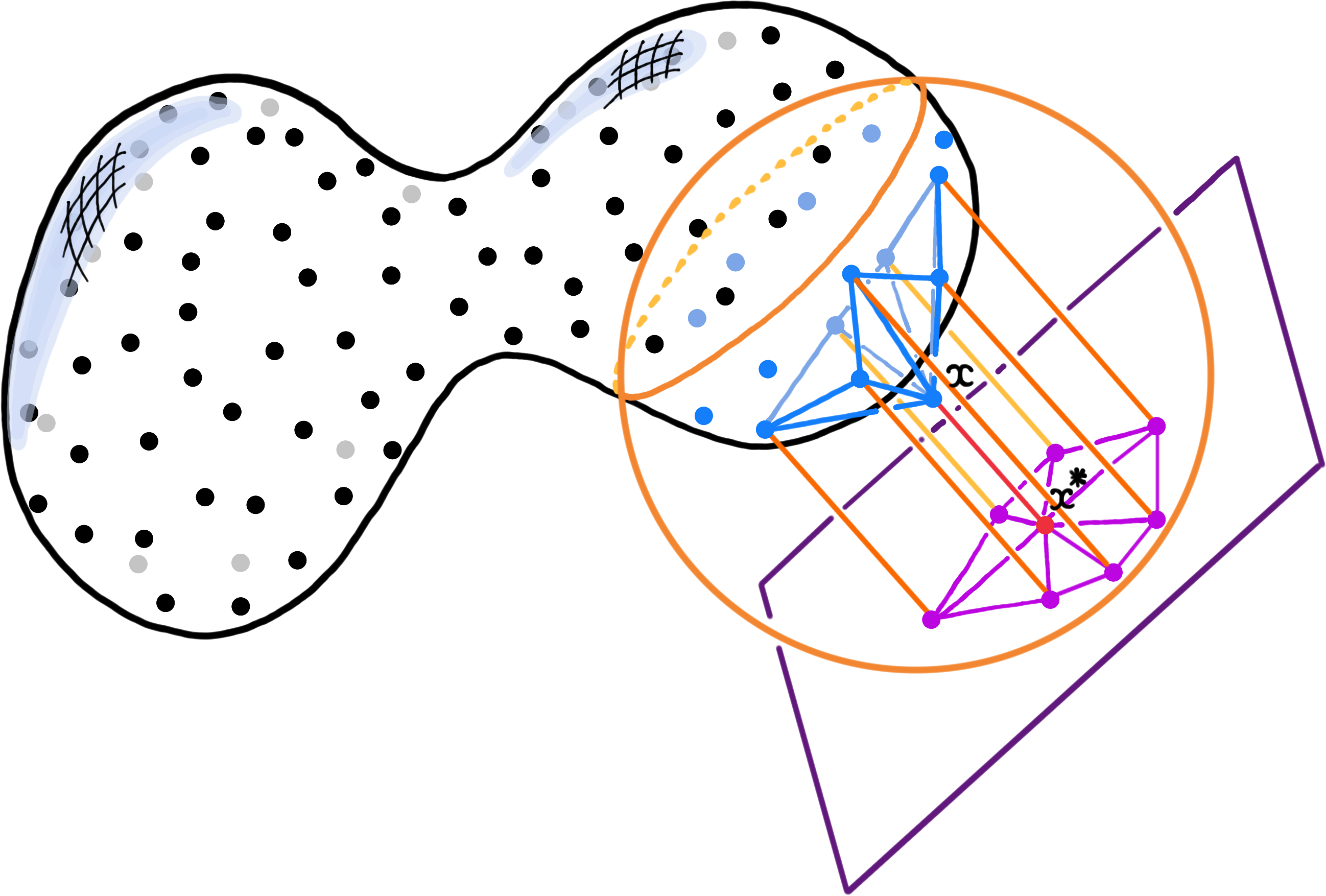}      
    \end{center}
    \caption{Construction of the prestar of $x$ at scale $\rho$. Left:
      Points in $P \cap B(x^\ast,\rho)$ are projected onto the tangent
      space $\Tangent {x^\ast} \M$ and the Delaunay complex of the
      projected points is computed. For clarity, we have translated
      $\Tangent {x^\ast} \M$. Right: The star of $x^\ast$ (in purple)
      is the set of simplices that cover $x^\ast$ and the prestar of
      $x$ (in blue) is the set of simplices $\sigma \in P \cap
      B(x^\ast,\rho)$ whose projection belongs to the star of $x^\ast$. \label{figure:building-flatdel}}
  \end{figure}

\begin{definition}[Stars and Prestars]
  Given a point $m \in \M$, we call the {\em star of $m$ at scale
    $\rho$} the set of simplices
  \[
  \Delstar m \rho = \{ \tau \in \Del{\pi_{\Tangent m \M}(P \cap B(m,\rho))} \mid m \in \Conv \tau \}.
  \]
  Given a point $x \in \Rspace^\Dim \setminus \MA{\M}$, we call the
  {\em prestar of $x$ at scale $\rho$} the set of simplices
  \[
  \begin{split}
    \Prestar x \rho = \{ &\sigma \subseteq P \cap B(x^*,\rho)
    \mid \pi_{\Tangent {x^*} \M}(\sigma) \in \Delstar {x^*} \rho   \}.
  \end{split}
  \]
\end{definition}
 Figure \ref{figure:building-flatdel} illustrates the
  construction of the prestar of a point $x$ in $P$.

\begin{remark}
  \label{remark:same-projections-same-prestar}
  By definition, if two points $x$ and $y$ share the same projection
  onto \M, that is, if $x^* = y^*$, then they also share the
  same prestar at scale $\rho$, that is, $\Prestar x \rho = \Prestar y
  \rho$. In particular, $\Prestar x \rho = \Prestar {x^*} \rho$
    whenever the projection at $x$ is well-defined.
\end{remark}

\begin{definition}[Flat Delaunay complex]
  \label{definition:flat-Delaunay-complex}
  The {\em flat Delaunay complex of $P$ with respect to \M at scale
    $\rho$} is the set of simplices
  \[
  \FlatDel P \rho = \bigcup_{p \in P} \Prestar p \rho.
  \]
\end{definition}

Note that the flat Delaunay complex is not necessarily a
simplicial complex but becomes one under the assumptions of our two main theorems
(Theorems \ref{theorem:homeomorphism-from-structural-conditions} and \ref{theorem:homeomorphism-from-sampling-conditions}).

\subsection{Preliminary remarks}
\label{section:flat-delaunay-complex-preliminaries}

\begin{remark}
  \label{remark:simplices-in-prestar-or-tangdel-are-small}
  By definition, if a simplex $\sigma$ belongs to the prestar of some
  point $x$ at scale $\rho$, then $\sigma$ fits in a ball of radius
  $\rho$ and therefore is $\rho$-small and so are simplices
  in $\FlatDel P \rho$.
\end{remark}

\begin{remark}
  \label{remark:same-projections}
   For all points $x$ at which $d(x,\M) < \Reach{M}$ and all $m \in
   \M$, we have that $m = \pi_{\Tangent {m} \M}(x) \iff m = \pi_\M(x)$.
\end{remark}

We now provide two alternate characterizations of simplices in
  the prestar. The first one is a direct consequence of the above
  remark and will be useful in the proof of
  Theorem~\ref{theorem:homeomorphism-from-sampling-conditions} and the
  second one will facilitate the proof of
  Lemma~\ref{lemma:delloc-characterization}.

\begin{remark}[First characterization of prestars]
  \label{remark-prestar-characterization}
  If $x$ is a point  at which $d(x,\M) < \Reach \M$ and $\rho < \Reach \M$, then:
  \[
    \sigma \in \Prestar x \rho ~\iff~
    \begin{cases}
    \sigma \subseteq P \cap B(x^*,\rho)\\
    \pi_{\Tangent {x^*} \M}(\sigma) \in \Del{\pi_{\Tangent {x^*} \M}(P \cap B(x^*,\rho))}\\
    x^* \in \pi_{\M}( \Conv\sigma)
    \end{cases}
  \]
\end{remark}

\begin{remark}[Second characterization of prestars]
  \label{remark-prestar-alternative-definition}
  For all simplices $\sigma$ such that $\Conv \sigma
    \subseteq \Offset \M \rho$ with $\rho < \Reach \M$
  and all $m \in \pi_\M(\Conv \sigma)$,
  \[
  \sigma \in \Prestar m \rho ~\iff~
  \begin{cases}
    \sigma \subseteq P \cap B(m,\rho) \\
    \pi_{\Tangent {m} \M}(\sigma) \in \Del{\pi_{\Tangent {m} \M}(P \cap B({m},\rho))}
  \end{cases}
  \]
\end{remark}
  Indeed, applying Remark \ref{remark-prestar-characterization} with $x = m$,
  we observe that the last condition on the right side of the
  equivalence is redundant because $m^* = m \in \pi_\M(\Conv \sigma)$.

%
%
%
%
%

\section{Faithful reconstruction from structural conditions}
\label{section:faithful-from-structural-conditions}

In this section, we exhibit a set of structural conditions
under which $\FlatDel P \rho$ is a faithful reconstruction of
\M. These conditions are encapsulated in our first reconstruction
theorem (Theorem
\ref{theorem:homeomorphism-from-structural-conditions} below). Among
the conditions, we find that every $\rho$-small $d$-simplex $\sigma
\subseteq P$ must have its prestars in agreement at scale $\rho$:

\begin{definition}[Prestars in agreement]
  We say that the {\em prestars of $\sigma$ are in agreement at scale
    $\rho$} if for all $x,y \in \Conv{\sigma}$, the following
  equivalence holds: $\sigma \in \Prestar x \rho \iff \sigma \in
  \Prestar y \rho$.
\end{definition}

Compare to the work in
\cite{boissonnat2018geometric,boissonnat2014manifold}, we define the
prestars everywhere in space, not merely at the data points $P$ and we
enforce the prestars to agree at every pair of points in $\Conv\sigma$
and not merely at the vertices of $\sigma$. This trick allows us to
provide a short proof that agreement of prestars, among other
lighter conditions, imply that the flat Delaunay complex is a faithful
reconstruction of $\M$.

\begin{theorem}[Faithful reconstruction from structural conditions]
  \label{theorem:homeomorphism-from-structural-conditions}
  Suppose that $P \subseteq \Offset \M \rho$ with $\rho <
  \frac{\reach}{2}$ and assume that the following structural conditions
  are satisfied:
  \begin{enumerate}[label=\styleitem{(\arabic*)},itemsep=1pt,parsep=1pt,topsep=4pt]

  \item \label{hypo:structural-proj-M}
    For every $\rho$-small $d$-simplex $\sigma \subseteq P$, the
    map $\restr{\pi_\M}{\Conv\sigma}$ is injective;

  \item \label{hypo:structural-proj-onto-tangent-plane}
    For all $m \in \M$, the map $\restr{\pi_{\Tangent m \M}}{P \cap B(m,\rho)}$ is injective;

  \item \label{hypo:structural-locally-manifold}
    For all $m \in \M$, $\US{\Delstar m \rho}$ is homeomorphic to $\Rspace^d$ in a neighborhood of $m$;

  \item \label{hypo:structural-geometrically-realized}
    For all $m \in \M$, $\Delstar m \rho$ is geometrically realized;
    
  \item \label{hypo:structural-agreement} Every $\rho$-small $d$-simplex
    $\sigma \subseteq P$ has its prestars in agreement at scale $\rho$.
  \end{enumerate}
  Then,
  \begin{itemize}
  \item $\FlatDel P \rho$ is a simplicial complex;
  \item For all $m \in \M$,  $\Prestar m \rho = \{ \sigma \in \FlatDel P \rho \mid m \in \pi_\M(\Conv\sigma) \}$;
  \item $\FlatDel P \rho$ is a faithful reconstruction of \M.
  \end{itemize}  
\end{theorem}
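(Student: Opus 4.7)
The plan is to first exploit the agreement condition \refhypo{hypo:structural-agreement} to propagate prestar membership across the convex hull of every $d$-simplex of $\FlatDel P \rho$, and then combine this propagation with the local star structure guaranteed by \refhypo{hypo:structural-locally-manifold} and \refhypo{hypo:structural-geometrically-realized} and the injectivity conditions \refhypo{hypo:structural-proj-M} and \refhypo{hypo:structural-proj-onto-tangent-plane} to derive the three conclusions. The cornerstone is the following propagation lemma: if $\sigma$ is a $\rho$-small $d$-simplex belonging to $\FlatDel P \rho$, then by Definition \ref{definition:flat-Delaunay-complex} some $p \in P$ satisfies $\sigma \in \Prestar p \rho$. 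Since $\pi_{\Tangent{p^*}\M}$ is affine it commutes with convex combinations, so $p^* \in \Conv(\pi_{\Tangent{p^*}\M}(\sigma)) = \pi_{\Tangent{p^*}\M}(\Conv\sigma)$; hence there is $x_0 \in \Conv\sigma$ with $\pi_{\Tangent{p^*}\M}(x_0) = p^*$. Since $\Conv\sigma \subseteq B(p^*,\rho) \subseteq \Offset \M \rho$ with $\rho < \reach \leq \Reach\M$, Remark \ref{remark:same-projections} turns this into $\pi_\M(x_0) = p^*$. Remark \ref{remark:same-projections-same-prestar} then gives $\sigma \in \Prestar{x_0}\rho$, and condition \refhypo{hypo:structural-agreement} extends this to $\sigma \in \Prestar x \rho$ for every $x \in \Conv\sigma$, and in particular to $\sigma \in \Prestar m \rho$ for every $m \in \pi_\M(\Conv\sigma)$.

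From this propagation the first two bullets follow without much additional effort. The star characterisation $\Prestar m \rho = \{ \sigma \in \FlatDel P \rho \mid m \in \pi_\M(\Conv\sigma) \}$ is immediate in the $\supseteq$ direction for $d$-simplices and extends to faces once $\FlatDel P \rho$ is known to be closed under taking faces. For the latter closure, and hence for the simplicial complex property, I would exploit \refhypo{hypo:structural-locally-manifold}: the local $d$-Euclidean structure of $\US{\Delstar m \rho}$ near $m$ forces every face of a $d$-simplex of the star at $m$ to appear in the star of some nearby point of \M, and the second characterisation of prestars in Remark \ref{remark-prestar-alternative-definition} lifts this to closure of $\FlatDel P \rho$ under faces. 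Closeness of the reconstruction is immediate from Remark \ref{remark:simplices-in-prestar-or-tangdel-are-small}, since each simplex sits in a ball of radius $\rho$ centred at a point of $\M$, so $\US{\FlatDel P \rho} \subseteq \Offset \M \rho$ with $\rho < \reach/2$. For embedding, pick $x \in \Conv\alpha \cap \Conv\beta$ with $\alpha, \beta \in \FlatDel P \rho$: propagation places both simplices in $\Prestar{x^*}\rho$, and projecting into $\Tangent{x^*}\M$ and invoking condition \refhypo{hypo:structural-geometrically-realized}, with \refhypo{hypo:structural-proj-onto-tangent-plane} allowing vertices to be lifted injectively, gives $\Conv(\alpha \cap \beta) = \Conv\alpha \cap \Conv\beta$.

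The main obstacle is the homeomorphism claim, which I would decompose into three sub-steps. First, local homeomorphism at each $m \in \M$: condition \refhypo{hypo:structural-locally-manifold} ensures $\US{\Delstar m \rho}$ covers a neighbourhood of $m$ in $\Tangent m \M$ homeomorphic to $\Rspace^d$, and $\rho < \reach/2$ makes $\pi_\M$ a homeomorphism on that neighbourhood, so by propagation the shadow of $\Prestar m \rho$ projects homeomorphically onto a neighbourhood of $m$ in \M. Second, global injectivity: if $\pi_\M(x) = \pi_\M(y) = m$ with $x \in \relint{\Conv\alpha}$ and $y \in \relint{\Conv\beta}$, the star characterisation puts both $\alpha, \beta \in \Prestar m \rho$, and the local picture at $m$ combined with \refhypo{hypo:structural-proj-M} forces $x = y$. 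Third, surjectivity follows from covering $\M$ by these local charts, continuity is automatic, and compactness of $\US{\FlatDel P \rho}$ together with Hausdorffness of $\M$ upgrade the continuous bijection to a homeomorphism. The hardest step is the first one: transferring the triangulated $d$-disk structure from $\Tangent m \M$ across the nonlinear projection $\pi_\M$ onto \M without losing either the local Euclidean character of the shadow or the geometric realisation of the star.
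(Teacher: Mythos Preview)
Your overall skeleton matches the paper's closely: propagation of prestar membership via \refhypo{hypo:structural-agreement}, then the prestar formula, embedding via \refhypo{hypo:structural-geometrically-realized}, and finally the homeomorphism. Two points deserve attention.

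First, your argument for closure under faces is both vaguer and harder than necessary. Saying that ``every face of a $d$-simplex of the star at $m$ appears in the star of some nearby point of $\M$'' would require comparing stars computed with \emph{different} tangent spaces $\Tangent m \M$ and $\Tangent{m'}\M$, which is precisely the kind of stability one is trying to avoid at this structural stage. The paper instead argues directly from the characterisation in Remark~\ref{remark-prestar-characterization}: once propagation gives $\sigma \in \Prestar x \rho$ for every $x \in \Conv\sigma$, one simply observes that for any face $\tau \subseteq \sigma$ and any $x \in \Conv\tau \subseteq \Conv\sigma$, each of the three conditions defining $\sigma \in \Prestar x \rho$ passes to $\tau$ (the Delaunay one because faces of Delaunay simplices are Delaunay, the third because $x \in \Conv\tau$). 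No ``nearby point'' is needed.

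Second, and more importantly, the step you flag as hardest---transferring the local $d$-disk structure across the nonlinear map $\pi_\M$---is one the paper sidesteps entirely. The paper never uses $\pi_\M$ to build the local chart. Instead it uses the \emph{affine} projection $\pi_{\Tangent m \M}$: by \refhypo{hypo:structural-proj-onto-tangent-plane} this map induces a simplicial isomorphism $\Prestar m \rho \to \Delstar m \rho$ between two geometrically realised complexes, hence a homeomorphism $\US{\Prestar m \rho} \to \US{\Delstar m \rho}$, and the latter is locally $\Rspace^d$ by \refhypo{hypo:structural-locally-manifold}. This instantly gives that $\US{\FlatDel P \rho}$ is a $d$-manifold. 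Injectivity of $\pi_\M$ on $\US{\FlatDel P \rho}$ then follows almost for free: if $\pi_\M(x) = \pi_\M(y) = m$ then by Remark~\ref{remark:same-projections} also $\pi_{\Tangent m \M}(x) = \pi_{\Tangent m \M}(y) = m$, and the homeomorphism just built forces $x=y$. Your compactness/Hausdorff argument for the final upgrade to a homeomorphism is fine (the paper uses domain invariance instead, but either works).
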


Before giving the proof, we start with a remark.

\begin{remark}
  \label{remark:projection-well-defined-for-simplex}
  For any simplex $\sigma \subseteq \Offset \M \rho$ that can be
  enclosed in a ball of radius $\rho < \frac{\reach}{2}$, then $\Conv
  \sigma \subseteq \Offset \M {2\rho}$.  Indeed, for all $x \in \Conv
  \sigma$, $d(x,\M) \leq d(x,\sigma) + \rho \leq 2\rho$. Hence, the
  map $\restr{\pi_\M}{\Conv \sigma}$ is well-defined.
\end{remark}



\begin{proof}[Proof of Theorem \ref{theorem:homeomorphism-from-structural-conditions}]
We prove the lemma in seven (short) stages:

\medskip

\noindent {\bf (a)} First, we prove the following implication:
\[
\begin{cases}
  \sigma \in \Prestar x \rho\\
  \text{for all $x\in\Conv\sigma$}
\end{cases}
\implies \quad
\begin{cases}
  \tau \in \Prestar x \rho\\
  \text{for all $\tau \subseteq \sigma$ and all $x \in \Conv\tau$}
\end{cases}
\]
Indeed, suppose that $\sigma \in \Prestar x \rho$ for all
$x\in\Conv\sigma$. Using Remark \ref{remark-prestar-characterization},
this is equivalent to saying that for all $x \in \Conv \sigma$:
\begin{equation*}
  \begin{cases}
    \sigma \subseteq P \cap B(x^*,\rho)\\
    \pi_{\Tangent {x^*} \M}(\sigma) \in \Del{\pi_{\Tangent {x^*} \M}(P \cap B(x^*,\rho))}\\
    x^* \in \pi_{\M}( \Conv\sigma)
  \end{cases}
\end{equation*}
Letting $x$ be any point of $\Conv \tau$ and using $\tau \subseteq \sigma$, we
obtain that:
\begin{equation*}
  \begin{cases}
    \tau \subseteq P \cap B(x^*,\rho)\\
    \pi_{\Tangent {x^*} \M}(\tau) \in \Del{\pi_{\Tangent {x^*} \M}(P \cap B(x^*,\rho))}\\
    x^* \in \pi_{\M}( \Conv\tau)
  \end{cases}
\end{equation*}
But, using again Remark \ref{remark-prestar-characterization}, this
translates into saying that $\tau \in \Prestar x \rho$ for all $x \in
\Conv \tau$ as desired.

\medskip

\noindent {\bf (b)} Second, we establish the following implication:
\[
\text{$\tau \in \Prestar m \rho$ for some $m\in \M$}
\implies
\text{$\tau \in
  \Prestar x \rho$ for all $x \in \Conv\tau$}.
\]
Consider a simplex $\tau \in \Prestar m \rho$ for some $m\in \M$ and
let us show that $\tau \in \Prestar x \rho$ for all $x \in \Conv\tau$.
Since $\tau \in \Prestar m \rho$, this implies that $\tau' =
\pi_{\Tangent {m} \M}(\tau) \in \Delstar m \rho$. Because of
hypothesis \refhypo{hypo:structural-locally-manifold}, $\US{\Delstar m \rho}$
is homeomorphic to $\Rspace^d$ and therefore contains at least
one $d$-simplex $\sigma' \supseteq \tau'$. Since $\sigma' \in
  \Delstar m \rho$, it follows from the definition of the star that $\sigma' \subseteq
  \pi_{\Tangent m \M}(P \cap B(m,\rho))$.  Because of hypothesis
  \refhypo{hypo:structural-proj-onto-tangent-plane}, the projection
  $\pi_{\Tangent m \M}$ restricted to $P \cap B(m,\rho)$ is injective
  and therefore there exists a unique $\sigma \subseteq P \cap
  B(m,\rho)$ such that $\sigma' = \pi_{\Tangent m \M}(\sigma)$ and
  furthermore $\sigma' \supseteq \tau'$ implies that $\sigma \supseteq
  \tau$. Since $\sigma \subseteq P \cap B(m,\rho)$ and $\sigma' \in
  \Delstar m \rho$, we get that $\sigma \in \Prestar m \rho$.  By
  hypothesis \refhypo{hypo:structural-agreement}, the prestars of $\sigma$
  are in agreement at scale $\rho$ and therefore $\sigma \in \Prestar
  m \rho$ implies $\sigma \in \Prestar x \rho$ for all $x \in \Conv
  \sigma$. Using the previous stage, we get that $\tau \in \Prestar x
  \rho$ for all $x \in \Conv \tau$ as desired.

\medskip

\noindent {\bf (c)} Third, we prove that $\FlatDel P \rho$ is
a simplicial complex. Consider $\sigma \in \FlatDel P \rho$ and $\tau
\subseteq \sigma$ and let us prove that $\tau \in \FlatDel P \rho$. By
definition of the flat Delaunay complex, we can find $m \in \M$ such
that $\sigma \in \Prestar m \rho$. Using Stage {\bf (b)}, we deduce
that $\sigma \in \Prestar x \rho$ for all $x \in \Conv
\sigma$. Using Stage {\bf (a)}, $\tau \in \Prestar x \rho$ for all
$x \in \Conv \tau$. By picking $x \in \tau \subseteq P$, this shows that $\tau \in \FlatDel P \rho$.

\medskip

\noindent {\bf (d)} Fourth, we claim that
\begin{equation}\label{equation:ClaimDefAlternativeFlatDel}
\FlatDel P \rho = \bigcup_{m\in \M} \Prestar m \rho,
\end{equation}
where the union is over all points $m$ of \M and not merely points of
$P$.  The direct inclusion is clear. To establish the reverse
inclusion, consider a simplex $\tau \in \Prestar m \rho$ for some
$m\in \M$ and let us show that $\tau \in \Prestar p \rho$ for some $p
\in P$. In Stage {\bf (b)}, we proved that $\tau \in \Prestar m
  \rho$ for some $m \in \M$ implies that $\tau \in \Prestar x \rho$ for all $x \in \Conv \sigma$ and
  thus, picking $x$ among the vertices of $\tau$ establishes the
  claim.

\medskip

\noindent {\bf (e)} Fifth, we establish that for all $m \in \M$, 
\begin{equation}
  \label{eq:prestar-proof}
  \Prestar m \rho = \{ \sigma \in \FlatDel P \rho \mid m \in \pi_\M(\Conv\sigma) \}.
\end{equation}
Let $m \in \M$. To establish the direct inclusion, consider a simplex
$\sigma \in \Prestar m \rho$. By
Equation~\eqref{equation:ClaimDefAlternativeFlatDel}, $\sigma \in
\FlatDel P \rho$, and by Remark \ref{remark-prestar-characterization},
we get that $m \in \pi_\M(\Conv\sigma)$.  To establish the reverse
inclusion, consider a simplex $\sigma \in \FlatDel P \rho$ such that
$m \in \pi_\M(\Conv\sigma)$. Because $\sigma \in \FlatDel P \rho$, we
can find $m' \in \M$ (which is the projection of a point of $P$) such
that $\sigma \in \Prestar {m'} \rho$. Applying Stage {\bf (b)}, we
deduce that $\sigma \in \Prestar {m'} \rho$ implies that $\sigma \in
\Prestar x \rho$ for all $x \in \Conv \sigma$ and picking $x \in \Conv
\sigma$ such that $m = \pi_\M(x)$ and using Remark
\ref{remark:same-projections-same-prestar}, we deduce that $\sigma \in
\Prestar m \rho$.

\medskip

\noindent {\bf (f)} Sixth, we prove that $\FlatDel P \rho$ is
geometrically realized. Consider a pair $(\alpha, \beta)$ of simplices
in $\FlatDel P \rho$ and let us prove that $\Conv \alpha \cap \Conv
\beta = \Conv (\alpha \cap \beta)$. Clearly, $\Conv \alpha \cap \Conv
\beta \supseteq \Conv (\alpha \cap \beta)$. To prove the converse
inclusion, suppose that there exists a point $x \in \Conv \alpha \cap
\Conv \beta$ and let us prove that $x \in \Conv(\alpha \cap
\beta)$. Because both $\alpha$ and $\beta$ are $\rho$-small, Remark
\ref{remark:projection-well-defined-for-simplex} implies that $\pi_\M$
is well-defined on both and we write $m = \pi_\M(x)$.  Because both
$\alpha$ and $\beta$ belong to $\FlatDel P \rho$ while
$\pi_\M(\Conv\alpha)$ and $\pi_\M(\Conv\beta)$ cover $m$, it follows
from Equation (\ref{eq:prestar-proof}) that both $\alpha$ and $\beta$ belong to
$\Prestar m \rho$ and therefore both $\alpha' = \pi_{\Tangent m
  \M}(\alpha)$ and $\beta' = \pi_{\Tangent m \M}(\beta)$ belong to
$\Delstar m \rho$. Since the latter is geometrically realized (hypothesis \refhypo{hypo:structural-geometrically-realized}), we have
$m \in \Conv \alpha' \cap \Conv \beta' = \Conv (\alpha' \cap \beta')$
and since we have assumed that the restrictiction of $\pi_{\Tangent
  {m} \M}$ to points in $P \cap B(m,\rho) \supseteq \alpha \cup \beta$
is injective, $m \in \Conv (\alpha' \cap \beta') = \Conv(\prTangent m
{\alpha \cap \beta}) = \prTangent m {\Conv (\alpha \cap
  \beta)}$. Using Remark \ref{remark:same-projections}, we get that $m
= \pi_\M(x) = \prTangent {m} x$ and using the injectivity of
$\pi_{\Tangent {m} \M}$ on $P \cap B(m,\rho)$, we get that $x \in
\Conv(\alpha \cap \beta)$. This proves that $\FlatDel P \rho$ is
geometrically realized.

  \medskip

\noindent {\bf (g)} Seventh, we prove that $\US{\FlatDel P
  \rho}$ is a $d$-manifold and that the map
\[\pi_\M : \US{\FlatDel P
  \rho} \to \M\] is injective. Consider a point $x \in \US{\FlatDel P
  \rho}$ and let $m = \pi_\M(x)$. Observe that in a small neighborhood
of $x$, the set $\US{\FlatDel P \rho}$ coincides with the set
$\US{\Prestar m \rho}$ because of Equation
(\ref{eq:prestar-proof}). Note that the map $\pi_{\Tangent m
  \M}$ is a bijective correspondence between $P \cap B(m,\rho)$ and
$\pi_{\Tangent m \M}(P \cap B(m,\rho))$ such that $\sigma \in \Prestar
m \rho$ if and only if $\pi_{\Tangent m \M}(\sigma) \in \Delstar m
\rho$. We note that $\sigma$ and $\pi_{\Tangent m \M}(\sigma)$ which
share the same dimension are both non-degenerate. Indeed,
$\pi_{\Tangent m \M}(\sigma)$ is non-degenerate because it belongs to
$\Delstar m \rho$ which we have assumed to be geometrically
realized
(hypothesis \refhypo{hypo:structural-geometrically-realized}). Simplex $\sigma$
is also non-degenerate since $\sigma$ has as many vertices as
$\pi_{\Tangent m \M}(\sigma)$ and the dimension of $\Aff\sigma$ cannot
be smaller than the dimension of its projection $\Aff \pi_{\Tangent m \M}(\sigma)$
which is full. Hence, $\pi_{\Tangent {m} \M}$ is an isomorphism
between $\Prestar m \rho$ and $\Delstar m \rho$ and both $\Prestar m
\rho$ and $\Delstar m \rho$ are geometrically realized. We deduce that
the induced simplicial map $\pi_{\Tangent {m} \M} : \US{\Prestar m
  \rho} \to \US{\Delstar m \rho}$ is a homeomorphism.  Since in a
neighborhood of $m$, $\US{\Delstar m \rho}$ is homeomorphic to
$\Rspace^d$, it follows that in a neighborhood of $x$, $\US{\FlatDel P
  \rho}$ which coincides with $\US{\Prestar m \rho}$ is also
homeomorphic to $\Rspace^d$. This proves that $\US{\FlatDel P \rho}$
is a $d$-manifold. Let us prove that $\pi_\M : \US{\FlatDel P \rho}
\to \M$ is injective. Consider two points $x$ and $y$ in $\US{\FlatDel
  P \rho}$ such that $\pi_\M(x) = \pi_\M(y) = m$. Then, by Remark
\ref{remark:same-projections}, $\pi_{\Tangent m \M}(x) = \pi_{\Tangent
  m \M}(y) = m$ and since we have just established that $\pi_{\Tangent m \M} : \US{\Prestar m \rho}
\to \US{\Delstar m \rho}$ is a homeomorphism, we deduce that $x = y$,
establishing the injectivity of $\pi_\M$.

  \medskip

  \noindent {\bf (h)} Finally, we prove that $\pi_\M$ is a
  homeomorphism between $\D = \US {\FlatDel P \rho}$ and \M. Recall
  that $\D$ and \M are two $d$-manifolds (without boundary) and that
  the restriction of $\pi_\M$ to $\D$ is an injective continuous
  map. Since for all $m \in \M$, $\US{\Delstar m
    \rho}$ is homeomorphic to $\Rspace^d$ in a  neighborhood of
  $m$, this implies that there exists $x \in \D$ such that
  $\pi_\M(x)=m$ and $\restr{\pi_\M}{\D}$ is surjective.  Applying
  the domain invariance theorem, we get that $\pi_\M : \D \to \M$ is
  open and therefore $\pi_\M$ is a homeomorphism between \D and
  $\pi_\M(\D)=\M$.
\end{proof}


%
%
%
%
%

\section{Faithful reconstruction from sampling and safety conditions}
\label{section:faithful-from-sampling-and-safety-conditions}

In this section, we state our second reconstruction theorem
  (Theorem \ref{theorem:homeomorphism-from-sampling-conditions}
  below). The theorem describes geometric conditions on $P$ under
  which (1) $\FlatDel P \rho$ is a faithful reconstruction of \M and
  (2) $\FlatDel P \rho$ satisfies certain properties that are needed
  in our companion paper \cite{socg22}. In particular, the theorem provides a
  characterization of $d$-simplices in $\FlatDel P \rho$ as the
  $d$-simplices that are delloc in $P$ at scale $\rho$:

\begin{definition}[Delloc simplex]
\label{definition:delloc}
  We say that a simplex $\sigma$ is \emph{delloc} in $P$ at
  scale $\rho$ if $\sigma \in \Del{\pi_{\Aff \sigma}(P \cap
    B(c_\sigma,\rho)}$.
\end{definition}

Note that deciding whether a simplex is delloc does not require the
knowledge of the manifold \M. We emphasize the fact that the
characterization of $d$-simplices in the flat Delaunay complex as the
one being delloc turns out to be crucial in our companion
paper \cite{socg22}.

In this section, we first introduce the necessary notations and
  definitions to describe the geometric conditions on $P$ that we
  need. We then state our second reconstruction theorem and sketch the proof.



\begin{definition}[Dense, accurate, and separated]
  We say that $P$ is an {\em $\varepsilon$-dense} sample of \M if for
  every point $m \in \M$, there is a point $p \in P$ with $\|p-m\|
  \leq \varepsilon$ or, equivalently, if $\M \subseteq \Offset P
  \varepsilon$. We say that $P$ is a {\em $\delta$-accurate} sample of
  \M if for every point $p \in P$, there is a point $m \in \M$ with
  $\|p-m\| \leq \delta$ or, equivalently, if $P \subseteq \Offset \M
  \delta$. Let $\Sep P = \min_{p,q \in P} \|p-q\|$.
\end{definition}
We stress that our definition of a protected simplex differs slightly
from the one in \cite{BOISSONNAT_2013,boissonnat2018geometric}.

\begin{definition}[Protection]
  We say that a non-degenerate simplex $\sigma \subseteq \Rspace^\Dim$
  is {\em $\zeta$-protected} with respect to $Q
  \subseteq \Rspace^\Dim$ if for all $q \in Q \setminus \sigma$, we
  have $d(q,S(\sigma)) > \zeta$. We shall simply say that $\sigma$ is
  \emph{protected} with respect to $Q$ when it is $0$-protected with respect
  to $Q$.
\end{definition}
Let $\Hcal(\sigma) = \{ \Tangent m \M \mid m \in \pi_\M(\Conv\sigma)\} \cup \{ \Aff \sigma \}$, and $\Theta(\sigma)
= \max_{H_0, H_1 \in \Hcal(\sigma)} \angle(H_0, H_1)$. To the pair $(P,\rho)$ we now
associate  three quantities that describe the
quality of $P$ at scale $\rho$:
\begin{itemize}
\item $\Height{P}{\rho} = \min_{\sigma} \height{\sigma}$, where the
  minimum is  over all $\rho$-small $d$-simplices $\sigma
  \subseteq P$;
\item $\Theta(P,\rho) = \max_{\sigma} \Theta(\sigma)$, where $\sigma$ ranges over all
  $\rho$-small $d$-simplices of $P$;
\item $\Protection{P}{\rho} = \min_\sigma\min_q d(q,S(\sigma))$, where
  the minima are over all $\rho$-small $d$-simplices $\sigma \subseteq
  P$ and all points $q \in \pi_{\Aff\sigma}(P \cap B(c_\sigma,\rho))
  \setminus \sigma$.
\end{itemize}



\begin{theorem}[Faithful reconstruction from sampling and safety conditions]
  \label{theorem:homeomorphism-from-sampling-conditions}
  Let $\varepsilon$, $\delta$, $\rho$, $\theta$ be non-negative real
  numbers and set $A = 4 \delta \theta + 4 \rho \theta^2$. Assume that
  $\theta \leq \frac{\pi}{6}$, $\delta \leq \varepsilon$, and
  $16\varepsilon \leq \rho < \frac{\reach}{4}$. Suppose that $P$ satisfies the following sampling conditions: $P$ is a
  $\delta$-accurate $\varepsilon$-dense sample of \M. Suppose furthermore that $P$ satisfies
  the following safety conditions:
  \begin{enumerate}[label=\styleitem{(\arabic*)},itemsep=1pt,parsep=1pt,topsep=4pt]


  \item 
    \label{hypo:safety-angle}
    $\Theta(P,\rho) \leq \theta - 2 \arcsin \left(\frac{\rho+\delta}{\reach}\right)$.
    
  \item 
    \label{hypo:safety-separation}
    $\Sep P > 2 A + 6 \delta + \frac{2\rho^2}{\reach}$;
    
  \item 
    \label{hypo:safety-protection}
    $\Height{P}{\rho}>0$ and $\Protection{P}{3\rho} > 2 A \left( 1 + \frac{4 d \varepsilon}{\Height{P}{\rho}} \right)$.
    
  \end{enumerate}
  Then we have the following properties:
  \begin{description}[itemsep=1pt,parsep=1pt,topsep=4pt]
  \item [\styleitem{Faithful reconstruction:}] $\FlatDel P \rho$ is a faithful reconstruction of \M;
  \item [\styleitem{Prestar formula:}] $\Prestar m \rho = \{ \sigma \in \FlatDel P \rho \mid m \in \pi_\M(\Conv\sigma) \},\; \forall m \in \M$;
  \item [\styleitem{Circumradii:}] For all $d$-simplices $\sigma \in \FlatDel
    P \rho$, we have that $R(\sigma) \leq \varepsilon$;
  \item [\styleitem{Characterization:}] For all $d$-simplices $\sigma$,
    $\sigma \in {\FlatDel P \rho} \iff \sigma$ delloc in $P$
    at scale $\rho$.
  \end{description}
\end{theorem}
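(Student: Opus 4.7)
The plan is to reduce Theorem~\ref{theorem:homeomorphism-from-sampling-conditions} to Theorem~\ref{theorem:homeomorphism-from-structural-conditions} by verifying that the sampling and safety conditions imply the five structural hypotheses \refhypo{hypo:structural-proj-M}--\refhypo{hypo:structural-agreement}. Once the structural theorem applies, the faithful reconstruction and prestar formula follow immediately, and only the circumradius bound and the delloc characterization remain to be proved separately.

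The first four structural conditions should follow from standard reach-based estimates. For \refhypo{hypo:structural-proj-M}, the bound $\rho < \reach/4$ together with Remark~\ref{remark:projection-well-defined-for-simplex} places $\Conv \sigma$ inside $\Offset \M {2\rho}$, and the small diameter of $\Conv \sigma$ combined with the reach prevents the normal fibers of $\pi_\M$ from re-intersecting it. For \refhypo{hypo:structural-proj-onto-tangent-plane}, two distinct points $p,q \in P \cap B(m,\rho)$ with equal projection would force the secant $p-q$ to be normal to $\Tangent m \M$; the angle between $\Tangent{p^*}\M$ and $\Tangent m \M$ is of order $\rho/\reach$, and the separation hypothesis \refhypo{hypo:safety-separation} is tuned to exclude such a configuration. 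Conditions \refhypo{hypo:structural-locally-manifold} and \refhypo{hypo:structural-geometrically-realized} both rest on the observation that $\pi_{\Tangent m \M}(P \cap B(m,\rho))$ is a dense sample of a disk in $\Tangent m \M$ in general position: density comes from $\varepsilon$-density of $P$ and $16\varepsilon \leq \rho$, while general position is exactly what the positive $\Height{P}{\rho}$ and the protection hypothesis encode.

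The main obstacle is the agreement of prestars \refhypo{hypo:structural-agreement}. Fixing a $\rho$-small $d$-simplex $\sigma \subseteq P$ and two points $x,y \in \Conv\sigma$, Remark~\ref{remark-prestar-characterization} reduces the question to whether the Delaunay membership $\pi_{\Tangent{x^*}\M}(\sigma) \in \Del{\pi_{\Tangent{x^*}\M}(P \cap B(x^*,\rho))}$ survives when $x^*$ is replaced by $y^*$. Since $\|x^* - y^*\|$ is at most about $2\rho$, the tangent planes $\Tangent{x^*}\M$ and $\Tangent{y^*}\M$ differ by a small angle controlled by $\Theta(P,\rho)$ and the reach, so the projection of any point of $P \cap B(x^*,\rho)$ shifts by at most $A = 4\delta\theta + 4\rho\theta^2$ when the target plane changes. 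A standard Delaunay stability argument then shows that perturbing the vertices of a $d$-simplex by $A$ moves its circumsphere by roughly $A\bigl(1 + 4 d \varepsilon / \Height{P}{\rho}\bigr)$, in which the height controls the amplification of vertex perturbations into circumsphere movements and $\varepsilon$ bounds the relevant circumradius via density. The protection hypothesis \refhypo{hypo:safety-protection} is calibrated precisely to absorb this displacement, so the Delaunay property transfers from $\Tangent{x^*}\M$ to $\Tangent{y^*}\M$.

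Finally, the circumradius bound $R(\sigma) \leq \varepsilon$ comes from $\varepsilon$-density: $\pi_{\Tangent m \M}(\sigma)$ is a Delaunay simplex of $\Delstar m \rho$ whose convex hull contains $m$, so $P$-emptiness of its circumsphere together with the existence of some $p \in P$ within $\varepsilon$ of $m$ bounds the circumradius in $\Tangent m \M$ by $\varepsilon$, and this transfers to $R(\sigma)$ at the cost of lower-order terms that the hypotheses absorb. The delloc characterization then amounts to translating ``$\pi_{\Tangent m \M}(\sigma)$ is Delaunay in a ball around $m$'' into ``$\pi_{\Aff \sigma}(\sigma)$ is Delaunay in a ball around $c_\sigma$''. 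The enlargement from scale $\rho$ in the prestar condition to scale $3\rho$ in the protection hypothesis is what absorbs both the displacement from $m$ to $c_\sigma$ and the metric distortion of switching from $\Tangent m \M$ to $\Aff \sigma$, so the same stability argument used for \refhypo{hypo:structural-agreement} yields the equivalence in both directions.
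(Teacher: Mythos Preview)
Your overall strategy is exactly the paper's: verify the five structural hypotheses of Theorem~\ref{theorem:homeomorphism-from-structural-conditions} and then handle the circumradius bound and the delloc characterization via the same Delaunay-stability argument used for \refhypo{hypo:structural-agreement}. The identification of $A = 4\delta\theta + 4\rho\theta^2$ as the relevant perturbation size, amplified by the factor $1 + 4d\varepsilon/\Height{P}{\rho}$ at the level of circumspheres and absorbed by the protection hypothesis, is the heart of the matter and you have it right.

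Two points nonetheless need correction. First, your argument for \refhypo{hypo:structural-proj-M} is not valid: small diameter and proximity to $\M$ do \emph{not} by themselves prevent a normal fiber of $\pi_\M$ from meeting $\Conv\sigma$ twice (a segment orthogonal to $\M$ inside the tubular neighborhood is a counterexample). The paper's Lemma~\ref{lemma:injectivity-projection-manifold} uses the angle hypothesis: $\Theta(\sigma)<\pi/2$ rules out any secant of $\Conv\sigma$ being normal to $\M$. This is available here because safety condition \refhypo{hypo:safety-angle} gives $\Theta(P,\rho)\leq\theta\leq\pi/6$. Second, for \refhypo{hypo:structural-geometrically-realized} the protection hypothesis is stated for projections onto $\Aff\sigma$, not onto $\Tangent m\M$; transferring protection to the tangent plane is not free and in the paper is obtained as a byproduct of the same stability lemma (Lemma~\ref{lemma:delloc-characterization}) that gives \refhypo{hypo:structural-agreement}. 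The paper packages all of this by proving once that every $\rho$-small $d$-simplex is Delaunay stable with respect to its \emph{standard neighborhood} $\{(c_\sigma,\Aff\sigma)\}\cup\{(x^*,\Tangent{x^*}\M)\}_{x\in\Conv\sigma}$; this single statement simultaneously yields agreement of prestars, protection in each tangent plane, the bound $R(\sigma)\leq\varepsilon$, and the delloc equivalence, rather than treating them as separate arguments.
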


The geometric conditions that we need for our result can be
  divided in two groups: the sampling conditions and the safety
  conditions. Roughly speaking, the sampling conditions say that $P$
  must be ``sufficiently'' dense and ``sufficiently'' accurate. The
  safety conditions say that (1) the angle that $\rho$-small
  $d$-simplices make with ``nearby'' tangent space to \M must be
  sufficiently small; (2) points in $P$ must be ``sufficiently'' well
  separated; (3) both the protection and the height of $P$ at scale
  $\rho$ must be ``sufficiently'' lower bounded.  Whereas it seems reasonable to assume
  that $P$ satisfies the samping conditions, it is less clear that, in
  practice, $P$ can satisfy both the sampling and safety
  conditions. We show in Section \ref{section:PerturbationProcedure}
  that starting from a situation where $P$ satisfies some ``strong''
  sampling conditions, it is always possible to perturb $P$ in such a
  way that after perturbation, $P$ satisfies both the sampling and
  safety conditions of
  Theorem~\ref{theorem:homeomorphism-from-sampling-conditions}.  


Before sketching the proof of our second theorem, we derive a
corollary that may have computational implications in low-dimensional
ambient spaces.  For this, we recall that $\sigma$ is a {\em Gabriel
  simplex} of $P$ if its smallest circumsphere $S(\sigma)$ does not
enclose any point of $P$ in its interior.

\begin{corollary}
  Under the assumptions of Theorem
  \ref{theorem:homeomorphism-from-sampling-conditions}, the
  $d$-simplices of $\FlatDel P \rho$ are Gabriel simplices and
  therefore $\FlatDel P \rho \subseteq \Del{P}$.
\end{corollary}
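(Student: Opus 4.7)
The plan is to reduce the corollary to the single claim that every $d$-simplex $\sigma \in \FlatDel P \rho$ is Gabriel with respect to $P$. Once this is in hand, $\FlatDel P \rho \subseteq \Del P$ follows immediately: a Gabriel $d$-simplex is a fortiori Delaunay, and by the Faithful Reconstruction part of Theorem \ref{theorem:homeomorphism-from-sampling-conditions} the underlying space $\US{\FlatDel P \rho}$ is homeomorphic to the closed $d$-manifold \M, so every simplex of $\FlatDel P \rho$ is a face of some $d$-simplex, and the classical fact that faces of Delaunay simplices are Delaunay closes the gap.

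To prove the Gabriel claim, I would extract from Theorem \ref{theorem:homeomorphism-from-sampling-conditions} the two ingredients $R(\sigma) \leq \varepsilon$ (Circumradii) and $\sigma \in \Del{\pi_{\Aff\sigma}(P \cap B(c_\sigma,\rho))}$ (Characterization). The smallest $(N-1)$-sphere $S(\sigma)$ is centered at $Z(\sigma) \in \Aff\sigma$ with radius $R(\sigma)$, and every $(N-1)$-sphere through the vertices of $\sigma$ meets $\Aff\sigma$ in the same $(d-1)$-circumsphere; so a point of $\Aff\sigma$ is strictly inside $S(\sigma)$ exactly when it lies strictly inside that $(d-1)$-disk. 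The Gabriel condition therefore reduces to checking $\|p - Z(\sigma)\| \geq R(\sigma)$ for every $p \in P \setminus \sigma$, which I would verify by splitting on whether $p$ lies in $B(c_\sigma, \rho)$.

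If $p \notin B(c_\sigma, \rho)$, the triangle inequality through any vertex $v$ of $\sigma$ yields $\|Z(\sigma) - c_\sigma\| \leq r_\sigma + R(\sigma) \leq 2\varepsilon$, and the hypothesis $\rho \geq 16\varepsilon$ then forces $\|p - Z(\sigma)\| > \rho - 2\varepsilon \geq R(\sigma)$. If $p \in B(c_\sigma, \rho)$, set $p' = \pi_{\Aff\sigma}(p)$ and distinguish two subcases. When $p'$ is not a vertex of $\sigma$, $p'$ is a point of the projected set appearing in the delloc condition that lies outside the vertex set of $\sigma$, so delloc yields $\|p' - Z(\sigma)\| \geq R(\sigma)$, and a Pythagorean decomposition along $\Aff\sigma$ and its orthogonal complement transfers the inequality to $p$. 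When $p' = v$ for some vertex $v$ of $\sigma$, the vector $p - v$ is orthogonal to $\Aff\sigma$ while $\|v - Z(\sigma)\| = R(\sigma)$, so Pythagoras again delivers $\|p - Z(\sigma)\|^2 = R(\sigma)^2 + \|p - v\|^2 \geq R(\sigma)^2$ for free.

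The main obstacle is precisely this last subcase: the delloc property is silent about foreign points of $P$ whose $\Aff\sigma$-projection happens to land on a vertex of $\sigma$, and a priori such a point could slip strictly inside $S(\sigma)$ and break the Gabriel property in the ambient space even though it holds for the projected set. The forced perpendicularity of $p - v$ to $\Aff\sigma$ rescues the argument at no additional hypothetical cost, but it is the step that requires the sharpest eye; everywhere else, a direct substitution of what Theorem \ref{theorem:homeomorphism-from-sampling-conditions} hands us suffices.
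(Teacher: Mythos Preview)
Your proof is correct and follows the same route as the paper, which condenses the argument to the one-line observation that a delloc simplex with $2R(\sigma)\leq\rho$ is Gabriel (using $c_\sigma\in\Conv\sigma\subseteq B(Z(\sigma),R(\sigma))$ for the outside-ball case and the Pythagorean split for the inside-ball case). Your worry about the subcase $p'=v$ is harmless but overstated: since vertices lie \emph{on} the circumsphere, $\|p'-Z(\sigma)\|\geq R(\sigma)$ holds uniformly for every projected point and no separate treatment is really needed.
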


\begin{proof}
   It is easy to see that a delloc simplex $\sigma$ in $P$ at scale
   $\rho$ is also a \emph{Gabriel simplex} of $P$ whenever
   $2R(\sigma)\leq \rho$. The result follows because under the
     assumption of Theorem
     \ref{theorem:homeomorphism-from-sampling-conditions},
     $d$-simplices of $\FlatDel P \rho$ are delloc in $P$ at scale
     $\rho$.
\end{proof}

\paragraph*{Sketch of the proof.}
The proof consists in showing that the sampling and safety
conditions of
Theorem~\ref{theorem:homeomorphism-from-sampling-conditions} imply the
structural conditions of
Theorem~\ref{theorem:homeomorphism-from-structural-conditions}. Applying
Theorem~\ref{theorem:homeomorphism-from-structural-conditions}, we
then get that, amongst other properties, $\FlatDel P \rho$ is a
faithful reconstruction of \M. It is not too difficult to show that
the sampling and safety conditions of
Theorem~\ref{theorem:homeomorphism-from-sampling-conditions} imply the
first three structural conditions of
Theorem~\ref{theorem:homeomorphism-from-structural-conditions}. This
will be established in
Section~\ref{section:basics-on-projections}. The tricky part consists
in proving that the sampling and safety conditions imply the last two
structural conditions, and in particular imply that every $\rho$-small
$d$-simplex $\sigma \subseteq P$ has its prestarts in agreement at
scale $\rho$. Let us introduce the following definitions:

\begin{definition}[Delaunay stability at scale $\rho$]
  Let $\{ (h_i, H_i) \}_{i \in I}$ be a (possibly infinite) set, where
  $h_i$ designates a point of $\Rspace^\Dim$ and
  $H_i\subseteq \Rspace^\Dim$ designates a $d$-dimensional affine
  space.  We say that $\sigma$ is {\em Delaunay stable} for $P$ at scale
  $\rho$ with respect to the set $\{(h_i,H_i)\}_{i \in I}$ if the
  following two propositions are equivalent for all $a,b \in I$:
  \begin{itemize}
  \item $\sigma \subseteq P \cap B(h_a, \rho)$ and $\pi_{H_a}(\sigma) \in \Del{\pi_{H_a}(P \cap B(h_a,\rho))}$;
  \item $\sigma \subseteq P \cap B(h_b, \rho)$ and $\pi_{H_b}(\sigma) \in \Del{\pi_{H_b}(P \cap B(h_b,\rho))}$.
  \end{itemize}
\end{definition}

\begin{definition}[Standard neighborhood]
  We define the {\em standard neighborhood} of $\sigma$ as the set
  $
  \Hscr(\sigma) = \{
  (c_\sigma, \Aff \sigma) \} \cup \{ (x^\ast, \Tangent {x^\ast} \M)
  \}_{x \in \Conv \sigma}.
  $
\end{definition}

Roughly speaking, the next lemma tells us that the Delaunay stability
of a $\rho$-small $d$-simplex $\sigma$ with respect to its standard
neighborhood $\Hscr(\sigma)$ implies both agreement of prestars of
$\sigma$ and a characterization of the property for $\sigma$ to belong
to $\FlatDel P \rho$ in terms of being delloc. Precisely:

\begin{lemma}
  \label{lemma:agreement-from-stability}
  Suppose that $P \subseteq \Offset \M \rho$ with $\rho < \reach$ and
  that for all $m \in \M$, the restriction of map $\pi_{\Tangent m
    \M}$ to $P \cap B(m,\rho)$ is injective.  Consider a $\rho$-small
  $d$-simplex $\sigma \subseteq P$ and suppose that $\sigma$ is
  Delaunay stable for $P$ at scale $\rho$ with respect to its standard
  neighborhood. Then,
  \begin{itemize}
  \item the prestars of $\sigma$ are in agreement at scale $\rho$;
  \item $\sigma \in \FlatDel P \rho \iff$ $\sigma$ is delloc in $P$ at scale $\rho$.
  \end{itemize}
\end{lemma}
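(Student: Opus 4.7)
The plan is to observe that the standard neighborhood $\Hscr(\sigma)$ is engineered precisely so that each of its elements encodes exactly one of the conditions we wish to compare: every pair $(x^\ast, \Tangent{x^\ast}\M)$ with $x \in \Conv\sigma$ encodes the prestar condition $\sigma \in \Prestar x \rho$ via Remark~\ref{remark-prestar-alternative-definition}, while the distinguished pair $(c_\sigma, \Aff\sigma)$ encodes ``$\sigma$ is delloc in $P$ at scale $\rho$'' by the very definition of delloc. Once both items of the conclusion are translated into this common language, Delaunay stability of $\sigma$ with respect to $\Hscr(\sigma)$ delivers them essentially by definition.

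First I would reformulate the prestar side. Fixing $x \in \Conv\sigma$ and setting $m = x^\ast$, Remark~\ref{remark:same-projections-same-prestar} gives $\Prestar x \rho = \Prestar m \rho$, and since $m \in \pi_\M(\Conv\sigma)$, Remark~\ref{remark-prestar-alternative-definition} yields the equivalence
\[
\sigma \in \Prestar x \rho \iff \sigma \subseteq P \cap B(m,\rho) \text{ and } \pi_{\Tangent m \M}(\sigma) \in \Del{\pi_{\Tangent m \M}(P \cap B(m,\rho))},
\]
whose right-hand side is exactly the Delaunay-stability condition at the element $(m, \Tangent m \M) \in \Hscr(\sigma)$. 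Next, on the delloc side, since $\sigma$ is $\rho$-small we have $r_\sigma \leq \rho$ hence $\sigma \subseteq B(c_\sigma,\rho)$, and $\pi_{\Aff\sigma}$ restricts to the identity on $\sigma \subseteq \Aff\sigma$, so ``$\sigma$ delloc in $P$ at scale $\rho$'' unfolds verbatim to the Delaunay-stability condition at the element $(c_\sigma, \Aff\sigma) \in \Hscr(\sigma)$.

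With these two reformulations in hand, the first bullet is immediate: for arbitrary $x, y \in \Conv\sigma$ apply Delaunay stability to the pair $(x^\ast, \Tangent{x^\ast}\M), (y^\ast, \Tangent{y^\ast}\M) \in \Hscr(\sigma)$. For the second bullet, the $\Leftarrow$ direction takes any vertex $p$ of $\sigma$, so that $p^\ast \in \pi_\M(\Conv\sigma)$ and $(p^\ast, \Tangent{p^\ast}\M) \in \Hscr(\sigma)$, and pairs this element with $(c_\sigma, \Aff\sigma)$ in Delaunay stability to conclude $\sigma \in \Prestar{p^\ast}\rho = \Prestar p \rho \subseteq \FlatDel P \rho$. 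For the $\Rightarrow$ direction, if $\sigma \in \Prestar p \rho$ for some $p \in P$, the star condition buried in the prestar forces $p^\ast \in \pi_{\Tangent{p^\ast}\M}(\Conv\sigma)$, and Remark~\ref{remark:same-projections} upgrades this to $p^\ast \in \pi_\M(\Conv\sigma)$, so $(p^\ast, \Tangent{p^\ast}\M) \in \Hscr(\sigma)$; pairing it with $(c_\sigma, \Aff\sigma)$ in stability then yields that $\sigma$ is delloc.

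I expect the main difficulty to be bookkeeping rather than conceptual: at each step one must verify that the element invoked really lies in $\Hscr(\sigma)$ (this is where Remark~\ref{remark:same-projections} and, indirectly, the injectivity of $\pi_{\Tangent m \M}$ on $P \cap B(m,\rho)$ enter), and recognize that the inclusion $\sigma \subseteq P \cap B(h,\rho)$ appearing on both sides of the Delaunay-stability equivalence is forced by $\sigma$ being $\rho$-small and having vertices in $P$. Conceptually, once $\Hscr(\sigma)$ is unpacked the lemma is a near-tautological consequence of Delaunay stability, which justifies having bundled the various projections into a single object in the first place.
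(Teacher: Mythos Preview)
Your proof is correct and follows essentially the same route as the paper: rewrite the prestar condition at $x\in\Conv\sigma$ via Remark~\ref{remark-prestar-alternative-definition} and the delloc condition via Definition~\ref{definition:delloc}, then read off both conclusions from Delaunay stability over $\Hscr(\sigma)$.

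The one place you diverge is the $\Rightarrow$ direction of the second bullet. The paper argues that if $\sigma\in\Prestar p\rho$ for some $p\in P$ then in fact $p\in\sigma$: it observes that $\pi_{\Tangent{p^\ast}\M}(p)=p^\ast$ is a point of $\pi_{\Tangent{p^\ast}\M}(P\cap B(p^\ast,\rho))$ lying in the convex hull of the Delaunay $d$-simplex $\pi_{\Tangent{p^\ast}\M}(\sigma)$, hence (by the empty-ball property) must be one of its vertices, and then invokes the injectivity hypothesis to pull back $p\in\sigma$. You instead use Remark~\ref{remark:same-projections} to find $y\in\Conv\sigma$ with $y^\ast=p^\ast$, so that $(p^\ast,\Tangent{p^\ast}\M)=(y^\ast,\Tangent{y^\ast}\M)\in\Hscr(\sigma)$ directly. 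Your route is slightly shorter and, interestingly, does not actually use the injectivity hypothesis (your parenthetical remark that injectivity enters ``indirectly'' is not quite accurate here). A small caveat: your final comment that ``$\sigma\subseteq P\cap B(h,\rho)$ is forced by $\sigma$ being $\rho$-small'' is correct for $h=c_\sigma$ but not for $h=x^\ast$; fortunately this is harmless, since on the prestar side that inclusion is already packaged into the equivalence of Remark~\ref{remark-prestar-alternative-definition}.
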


\begin{proof}
  Consider the following two propositions:
  \begin{description}[itemsep=1pt,parsep=1pt,topsep=4pt]
  \item [\ita] $\sigma \subseteq P \cap B(c_\sigma,\rho)$ and
    $\sigma \in \Del{\pi_{\Aff \sigma}(P \cap B(c_\sigma,\rho))}$;
  \item [\itb{x}] $\sigma \subseteq P \cap B(x^\ast,\rho)$ and
    $\pi_{\Tangent {x^\ast} \M}(\sigma) \in \Del{\pi_{\Tangent {x^\ast}
      \M}(P \cap B({x^\ast},\rho))}$.
  \end{description}
  Our Delaunay stability hypothesis is equivalent to saying that for all $x \in \Conv \sigma$, we have
  \ita $\iff$ \itb{x} and for all $x,y \in \Conv \sigma$, we have
  \itb{x} $\iff$ \itb{y}.  Using Definition \ref{definition:delloc}
  and Remark \ref{remark-prestar-alternative-definition}, we can
  rewrite Propositions \ita and \itb{x} respectively as:
  \begin{description}[itemsep=1pt,parsep=1pt,topsep=4pt]
  \item [\ita] $\sigma \text{ delloc in $P$ at scale $\rho$}$;
  \item [\itb{x}] $\sigma \in \Prestar x \rho$.
  \end{description}
  Since \itb{x} $\iff$ \itb{y} for all $x,y \in \Conv \sigma$, we get
  that $\sigma \in \Prestar x \rho$ $\iff$ $\sigma \in \Prestar y
  \rho$ for all $x,y \in \Conv \sigma$. In other words, the prestars
  of $\sigma$ are in agreement and the first item of the lemma holds.

  To see that we get the second item of the lemma as well, we claim
  that $\sigma \in \FlatDel P \rho$ $\iff$ there exists $v \in \sigma$
  such that $\sigma \in \Prestar v \rho$. The reverse inclusion is
  clear. To get the direct inclusion, consider $v \in P$ such that
  $\sigma \in \Prestar v \rho$ and let us prove that $v \in
  \sigma$. Because $P \subseteq \Offset \M \rho$ for $\rho < \reach$,
  $\pi_\M$ is well-defined at $v$ and letting $v^\ast = \pi_\M(v)$, we
  clearly have $v \in P \cap B(v^\ast,\rho)$. It follows from our definition
  of a prestar that
  \[
    \sigma \in \Prestar v \rho ~\iff~
    \begin{cases}
    \sigma \subseteq P \cap B(v^\ast,\rho)\\
    \pi_{\Tangent {v^\ast} \M}(\sigma) \in \Del{\pi_{\Tangent {v^\ast} \M}(P \cap B(v^\ast,\rho))}\\
    v^\ast \in \pi_{\Tangent {v^\ast} \M}( \Conv\sigma)
    \end{cases}
  \]
  By Remark~\ref{remark:same-projections},
  $v^\ast=\pi_\M(v)=\pi_{\Tangent {v^\ast} \M}(v)$ and therefore
  $\pi_{\Tangent {v^\ast} \M}(v) \in \Conv{(\pi_{\Tangent {v^\ast}
      \M}(\sigma))}$. Since $\pi_{\Tangent {v^\ast} \M}(\sigma) \in
  \Del{\pi_{\Tangent {v^\ast} \M}(P \cap B(v^\ast,\rho))}$, the only
  possibility is that $\pi_{\Tangent {v^\ast} \M}(v) \in \pi_{\Tangent
    {v^\ast} \M}(\sigma)$ and since $\pi_{\Tangent {v^\ast} \M}$ is
  injective on $P \cap B(v^\ast,\rho)$ (by hypothesis), it follows
  that $v \in \sigma$ as claimed. Hence, we have just proved that
  $\sigma \in \FlatDel P \rho$ $\iff$ there exists $v \in \sigma$ such
  that \itb{v}. Since the latter is equivalent to \ita by hypothesis
  and \ita can be rewritten as $\sigma$ is delloc in $P$ at scale
  $\rho$, we get the secend item of the lemma.
\end{proof}

 The above lemma suggests that we need first to establish the Delaunay
 stability of $\rho$-small $d$-simplices with respect to their standard
 neighborhood. We proceed in three steps. In
 Section~\ref{section:basics-on-projections}, we enunciate basic
 properties on projection maps. We also establish geometric conditions
 under which the first three structural conditions of Theorem
 \ref{theorem:homeomorphism-from-structural-conditions} hold.  In
 Section~\ref{section:stability-through-distortions}, we study the
 Delaunay stability of $d$-simplices with respect to a general set
 $\{(h_0,H_0), (h_1,H_1)\}$, where each pair $(h_i,H_i)$ consists of a
 point $h_i$ and a $d$-dimensional affine space $H_i$ through
 $h_i$. In Section~\ref{section:FlatDel-final}, we prove our second
 theorem by first establishing the Delaunay stability of $d$-simplices
 with respect to their standard neighborhood.  

\section{Basic properties on projection maps}
\label{section:basics-on-projections}

In this section, we enunciate basic properties on projection maps that
we need for the proof of Theorem
\ref{theorem:homeomorphism-from-sampling-conditions}.  We also
  establish geometric conditions under which the first three
  structural conditions of
  Theorem~\ref{theorem:homeomorphism-from-structural-conditions}
  hold. Those conditions are described respectively in Lemma
  \ref{lemma:injectivity-projection-manifold}, Lemma
  \ref{lemma:injectivity-projection-tangent-space} and Lemma
  \ref{lemma:surrounded}.

\begin{lemma}[Injectivity of $\restr{\pi_\M}{\Conv \sigma}$]
  \label{lemma:injectivity-projection-manifold}
  Consider $\sigma \subseteq \Rspace^\Dim$ such that $\Conv \sigma
    \subseteq \Rspace^\Dim \setminus \MA{\M}$.  If $\Theta(\sigma) <
  \frac{\pi}{2}$, then $\restr{\pi_\M}{\Conv \sigma}$ is injective.
\end{lemma}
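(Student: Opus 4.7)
The plan is to argue by contraposition: I will assume $\restr{\pi_\M}{\Conv\sigma}$ is \emph{not} injective and show that this forces $\Theta(\sigma) \geq \frac{\pi}{2}$. Suppose then that there exist two distinct points $x, y \in \Conv\sigma$ with $\pi_\M(x) = \pi_\M(y) = m$. By definition, $m$ lies in $\pi_\M(\Conv\sigma)$, so both $\Tangent m \M$ and $\Aff\sigma$ belong to $\Hcal(\sigma)$, and hence $\Theta(\sigma) \geq \angle(\Aff\sigma, \Tangent m \M)$. The goal reduces to showing that this particular angle equals $\frac{\pi}{2}$.

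The key geometric observation is that the non-zero vector $v := x - y$ is simultaneously (i) a direction of $\Aff\sigma$ and (ii) orthogonal to $\Tangent m \M$. Property (i) is immediate since $x, y \in \Conv\sigma \subseteq \Aff\sigma$ and $x \neq y$. For property (ii), I will invoke the standard first-order optimality condition for the projection onto a $C^2$ submanifold: whenever $\pi_\M$ is well defined at $z$, the error vector $z - \pi_\M(z)$ lies in the normal space $\Normal {\pi_\M(z)} \M$. Applying this to both $z = x$ and $z = y$ yields $x - m, y - m \in \Normal m \M$, hence by linearity $v = (x-m) - (y-m) \in \Normal m \M$, i.e.\ $\pi_{\Tangent m \M}(v) = 0$.

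To finish, I use the standard definition of the angle between two linear subspaces $V_0, V_1$, namely $\angle(V_0, V_1) = \max_{u \in V_0,\, \|u\|=1} \arccos \|\pi_{V_1}(u)\|$ (extended to affine subspaces by passing to their direction spaces). The unit vector $v/\|v\|$ in the direction space of $\Aff\sigma$ has zero projection on $\Tangent m \M$, so $\angle(\Aff\sigma, \Tangent m \M) \geq \arccos 0 = \frac{\pi}{2}$, contradicting $\Theta(\sigma) < \frac{\pi}{2}$.

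I do not anticipate any real obstacle: the argument is a short contradiction combining the orthogonality characterisation of the metric projection onto a $C^{1,1}$ manifold with the definition of $\Theta(\sigma)$. The only point worth making precise is the convention for the angle between subspaces, but under the natural convention used throughout the paper, the existence of a direction in one subspace that is orthogonal to the other immediately produces a right angle, which is all we need.
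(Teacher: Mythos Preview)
Your proposal is correct and follows essentially the same contraposition argument as the paper: assume two distinct points of $\Conv\sigma$ project to the same $m\in\M$, deduce that the segment joining them is orthogonal to $\Tangent m \M$, and conclude $\Theta(\sigma)\geq\angle(\Aff\sigma,\Tangent m\M)=\tfrac{\pi}{2}$. You supply a bit more justification than the paper (explicitly invoking $z-\pi_\M(z)\in\Normal{\pi_\M(z)}\M$ and the sup--inf definition of the angle), but the route is identical.
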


\begin{proof}
  Suppose for a contradiction that there exist two points $x \neq y$
  in $\Conv \sigma$ that share the same projection $m$ onto \M, in
  other words, such that $x^\ast = y^\ast = m$. Then, the
  straight-line passing through $x$ and $y$ would be orthogonal to the
  tangent space $\Tangent m \M$, implying that $\angle (\Tangent m \M,
  \Conv \sigma) = \frac{\pi}{2}$ and therefore $\Theta(\sigma) =
  \max_{H_0,H_1 \in \Hcal(\sigma)} \angle(H_0,H_1) =
  \frac{\pi}{2}$. But this contradicts our assumption that
  $\Theta(\sigma) < \frac{\pi}{2}$.
\end{proof}

\begin{lemma}[Injectivity of $\restr{\pi_{\Tangent m \M}}{P \cap B(m,\rho)}$]
  \label{lemma:injectivity-projection-tangent-space}
  Suppose that $P \subseteq \Offset \M \delta$ with $16 \delta \leq
  \rho \leq \frac{\reach}{3}$ and $\Sep P > \frac{2\rho^2}{\reach} +
  2\delta$. Then, $\restr{\pi_{\Tangent m \M}}{P \cap B(m,\rho)}$ is
  injective for all $m \in \M$.
\end{lemma}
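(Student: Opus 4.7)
The plan is to argue by contradiction. Suppose there exist two distinct points $p, q \in P \cap B(m,\rho)$ with $\pi_{\Tangent m \M}(p) = \pi_{\Tangent m \M}(q)$; equivalently, $p - q$ lies in the orthogonal complement $(\Tangent m \M)^\perp$. The strategy is to exhibit an upper bound on $\|p - q\|$ that is strictly smaller than the separation lower bound $\Sep P > 2\rho^2/\reach + 2\delta$, which contradicts the definition of $\Sep P$.

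To set up the estimate, I first note that since $\delta \leq \rho/16 \leq \reach/48 < \reach$, the projections $p^\ast = \pi_\M(p)$ and $q^\ast = \pi_\M(q)$ are well-defined, satisfy $\|p-p^\ast\| \leq \delta$ and $\|q-q^\ast\|\leq \delta$, and lie close to $m$ with $\|p^\ast - m\|, \|q^\ast - m\| \leq \rho + \delta < \reach$. The geometric input I will invoke is the standard tangent-space approximation lemma for submanifolds of positive reach: for any $m' \in \M$ with $\|m'-m\| \leq \reach$, the orthogonal distance from $m'$ to the affine tangent space $m + \Tangent m \M$ is at most $\|m'-m\|^2/(2\reach)$. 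Applied to $p^\ast$ and $q^\ast$, this gives $\|\pi_N(p^\ast - m)\|,\|\pi_N(q^\ast - m)\| \leq (\rho+\delta)^2/(2\reach)$, where $\pi_N$ denotes orthogonal projection onto $(\Tangent m \M)^\perp$.

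The calculation is then short. Since $p - q \in (\Tangent m \M)^\perp$, we have $\|p-q\| = \|\pi_N(p-q)\|$. Decomposing $p-q = (p-p^\ast) + (p^\ast - q^\ast) + (q^\ast - q)$, applying $\pi_N$, and using the triangle inequality together with the bounds above yields
$$ \|p-q\| \;\leq\; \|p-p^\ast\| + \|\pi_N(p^\ast - q^\ast)\| + \|q^\ast - q\| \;\leq\; 2\delta + \frac{(\rho+\delta)^2}{\reach}. $$
The assumption $\delta \leq \rho/16$ gives $(\rho+\delta)^2 \leq (17/16)^2\rho^2 < 2\rho^2$, so
$$ \|p-q\| \;<\; 2\delta + \frac{2\rho^2}{\reach} \;<\; \Sep P, $$
contradicting the fact that $p$ and $q$ are distinct points of $P$.

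The core of the argument is essentially a routine triangle inequality, so there is no real difficulty in the combinatorics of the proof. The only non-trivial ingredient is the quadratic tangent-space approximation bound, which is a classical consequence of $\Reach \M > 0$ (Federer, 1959); the hypotheses $\delta \leq \rho/16$ and $\rho \leq \reach/3$ are precisely what is needed to keep $p^\ast$ and $q^\ast$ well inside the regime where that bound applies and to make the resulting quadratic term absorb into $2\rho^2/\reach$ with a strict margin.
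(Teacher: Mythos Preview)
Your proof is correct and rests on the same geometric ingredient as the paper's: Federer's quadratic bound on the distance from a point of $\M$ to a nearby affine tangent space. The paper packages this through its general angle lemma (bounding $\angle(ab,\Tangent m \M)$ via the Whitney angle bound and checking the angle stays below $\pi/2$), whereas you argue the equivalent contrapositive directly by bounding the normal component of $p-q$; the two arguments are essentially the same.
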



\begin{proof}
  Consider two points $a,b \in P \cap B(m,\rho)$ and
  let $\theta = \angle(\Tangent m \M, ab)$. We have
  $
  \cos \theta \times \|a-b\| \leq \| \pi_{\Tangent m \M}(a) - \pi_{\Tangent m \M}(b) \|
  $,
  showing that the restriction of $\pi_{\Tangent m \M}$ to
  $B(m,\rho)$ is injective as soon as $\theta < \frac{\pi}{2}$.
  Applying Lemma \ref{lemma:general-angle-bound} with
  $\tau = \{a,b\}$ and $z=m$, we obtain that $\theta$ is upper bounded by
  \[
  \theta \leq \arcsin \left( \frac{2}{\|a-b\|} \left(
  \frac{\rho^2}{\reach} + \delta\right)\right) \leq \arcsin \left( \frac{2}{\Sep P} \left(
  \frac{\rho^2}{\reach} + \delta\right)\right)
  \]
  and thus becomes smaller than $\frac{\pi}{2}$ for $\Sep P > \frac{2\rho^2}{\reach} + 2\delta$.
\end{proof}


\begin{lemma}[Local surjectivity of $\pi_H$]
  \label{lemma:projection-surjective}
  Suppose $\rho < \frac{\reach}{3}$. Let $H \subseteq \Rspace^\Dim$ be
  a $d$-dimensional affine space. Suppose that $H$ passes through a
  point $h$ such that $d(h,\M) \leq \frac{\rho}{4}$ and that there
  exists $\theta \leq \frac{\pi}{6}$ such that $ \angle(H,\Tangent
  {\pi_\M(h)} \M) + 2 \arcsin \frac{\rho}{\reach} \leq \theta$.
  Then,
  \[H \cap B(h,\frac{\rho}{4}) ~\subseteq~ \pi_H(\M \cap B(h,\frac{3\rho}{4})).\]
\end{lemma}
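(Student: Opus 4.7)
The plan is to establish this quantitative surjectivity by a topological degree argument, comparing the restriction of $\pi_H$ to $\M$ near $h^\ast := \pi_\M(h)$ with its linearization $\pi_H|_T$, where $T := \Tangent{h^\ast}\M$. The idea is to parametrize $\M$ locally as a graph over $T$, exhibit a closed topological $d$-disk $\phi(D) \subseteq \M \cap B(h, 3\rho/4)$ whose image under $\pi_H$ wraps around every target point $y \in H \cap B(h, \rho/4)$, and conclude by a straight-line homotopy from the nonlinear projection to its linearization.

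First I would set up the local parametrization. Since $\rho < \reach/3$, standard reach estimates yield a smooth map $\phi : T \to \M$ defined on a neighborhood of $h^\ast$ with $\phi(h^\ast) = h^\ast$, $\phi(t) - t \in T^\perp$, and the quantitative bound $\|\phi(t) - t\| \leq \|t - h^\ast\|^2/(2\reach)$. I would then fix the disk $D := T \cap \overline{B}(h^\ast, \rho/2)$ and consider the nonlinear map $g := \pi_H \circ \phi : D \to H$ together with its linearization $g_0 := \pi_H|_T$. The hypothesis $\theta_0 := \angle(H,T) \leq \theta - 2 \arcsin(\rho/\reach) \leq \pi/6$ ensures that $g_0$ is an affine isomorphism from $T$ onto $H$.

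The two quantitative estimates I need both rest on the observation that $h^\ast - h \in T^\perp$ (as $h^\ast$ is the foot of the perpendicular from $h$ to $\M$), so that the decomposition
\[
\phi(t) - h \;=\; \underbrace{(t - h^\ast)}_{\in\, T} \;+\; \underbrace{(\phi(t) - t) + (h^\ast - h)}_{\in\, T^\perp}
\]
is orthogonal. First, Pythagoras together with $\|t - h^\ast\| \leq \rho/2$, $\|\phi(t) - t\| \leq \rho^2/(8\reach) \leq \rho/24$ and $\|h^\ast - h\| \leq \rho/4$ gives $\|\phi(t) - h\|^2 \leq (\rho/2)^2 + (\rho/24 + \rho/4)^2 \leq (3\rho/4)^2$, so $\phi(D) \subseteq \M \cap B(h, 3\rho/4)$. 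Second, because $h^\ast - h$ is orthogonal to $T$ and the projection onto $H$ of a vector from $T^\perp$ has norm at most $\sin(\theta_0)$ times its own norm (a standard fact about principal angles), we get $\|\pi_H(h^\ast) - h\| = \|\pi_H(h^\ast) - \pi_H(h)\| \leq \sin(\theta_0) \cdot \rho/4 \leq \rho/8$, and hence every $y \in H \cap B(h, \rho/4)$ satisfies $\|y - g_0(h^\ast)\| \leq 3\rho/8$. Meanwhile, on the boundary sphere $\partial D$ one has $\|g_0(t) - g_0(h^\ast)\| \geq \cos(\theta_0) \cdot \rho/2 \geq \sqrt{3}\rho/4$ (because $t - h^\ast \in T$ and $\pi_H|_T$ has linear rate at least $\cos(\theta_0)$) and $\|g(t) - g_0(t)\| = \|\pi_H(\phi(t)) - \pi_H(t)\| \leq \|\phi(t) - t\| \leq \rho/24$, which combine to give $\|g(t) - y\| \geq \sqrt{3}\rho/4 - 3\rho/8 - \rho/24 > 0$, uniformly in $t \in \partial D$ and $y \in H \cap B(h,\rho/4)$.

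Finally, a standard Brouwer degree argument closes the proof. The straight-line homotopy $g_s(t) := \pi_H((1-s)\, t + s\, \phi(t))$ between $g_0$ and $g$ satisfies the same boundary estimate for every $s \in [0,1]$, so $g_s|_{\partial D}$ misses $y$ throughout; by homotopy invariance $\deg(g, D, y) = \deg(g_0, D, y)$, and the latter is $\pm 1$ because $g_0$ is an affine isomorphism whose image $g_0(D)$ is an ellipsoid in $H$ of semi-axes at least $\cos(\theta_0) \rho/2 \geq \sqrt{3}\rho/4 > 3\rho/8$ centered at $g_0(h^\ast)$ and therefore contains $y$. Consequently $y \in g(D) = \pi_H(\phi(D)) \subseteq \pi_H(\M \cap B(h, 3\rho/4))$, which is the desired inclusion. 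The main obstacle will be the tightness of the constants: the estimate $\sqrt{3}\rho/4 - 3\rho/8 - \rho/24 > 0$ only closes thanks to $\cos(\pi/6) = \sqrt{3}/2$ together with the \emph{perpendicularity} of $h^\ast - h$ to $T$, which gains the factor $\sin(\theta_0)$ that shrinks $\|g_0(h^\ast) - h\|$ from $\rho/4$ to $\rho/8$; a cruder argument treating $\pi_H$ as merely $1$-Lipschitz would not close this margin.
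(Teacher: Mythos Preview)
Your proof is correct and takes a genuinely different route from the paper's. The paper argues by connectedness rather than degree: writing $U=\M\cap B(h,\tfrac{3\rho}{4})$ and $V=H\cap B(h,\tfrac{\rho}{4})$, it first shows that $\pi_H$ restricted to $\M\cap B(h,\rho)$ is a homeomorphism onto its image (via the uniform bound $\angle(H,ab)\leq\theta$ for $a,b\in\M\cap B(h,\rho)$, assembled from Lemmas~\ref{lemma:tangdel:distanceToTangent} and~\ref{lemma:tangdel:TangentSpaceVariation}), then checks by a direct distance estimate that $\partial\pi_H(U)=\pi_H(\partial U)$ misses $V$ while $\pi_H(h^*)\in\pi_H(U)\cap V$, and concludes $V\subseteq\pi_H(U)$ since any path in $V$ from a point inside $\pi_H(U)$ to one outside would have to cross $\partial\pi_H(U)$. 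Your graph-plus-degree argument is more analytic and does not rely on $\pi_H|_U$ being injective, whereas the paper's route is more elementary (no degree theory, no need to invoke the local graph map $\phi$) and yields as a byproduct the homeomorphism property of $\pi_H$ on $\M\cap B(h,\rho)$. One small caveat: Federer's estimate (Lemma~\ref{lemma:tangdel:distanceToTangent}) bounds the normal deviation by $\|\phi(t)-h^*\|^2/(2\reach)$, not $\|t-h^*\|^2/(2\reach)$; solving the implicit quadratic gives the slightly larger bound $\reach-\sqrt{\reach^2-\|t-h^*\|^2}$, but for $\|t-h^*\|\leq\rho/2<\reach/6$ the discrepancy is under one percent and your final margin $(6\sqrt{3}-10)/24>0$ absorbs it comfortably.
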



\begin{proof}
  Write $U = \M \cap B(h,\frac{3\rho}{4})$ and $V = H \cap
  B(h,\frac{\rho}{4})$; see Figure
  \ref{figure:projection-quasi-isometry}, right. We need to prove that
  $V \subseteq \pi_H(U)$. We start by establishing the following three
  propositions:
  \begin{enumerate}[label=\styleitem{(\alph*)},parsep=1pt,itemsep=1pt,topsep=4pt]
  \item \label{item:homeo} $\pi_H$ is a homeomorphism from $\M \cap B(h,\rho)$ to
    $\pi_H(\M \cap B(h,\rho))$;
  \item \label{item:proj-boundary} $\partial \pi_H(U) \cap V = \emptyset$;
  \item \label{item:proj-non-empty}  $\pi_H(U) \cap V \neq \emptyset$.
  \end{enumerate}

  \begin{figure}[htb]
    \begin{center}
      \includegraphics[width=0.5\linewidth]{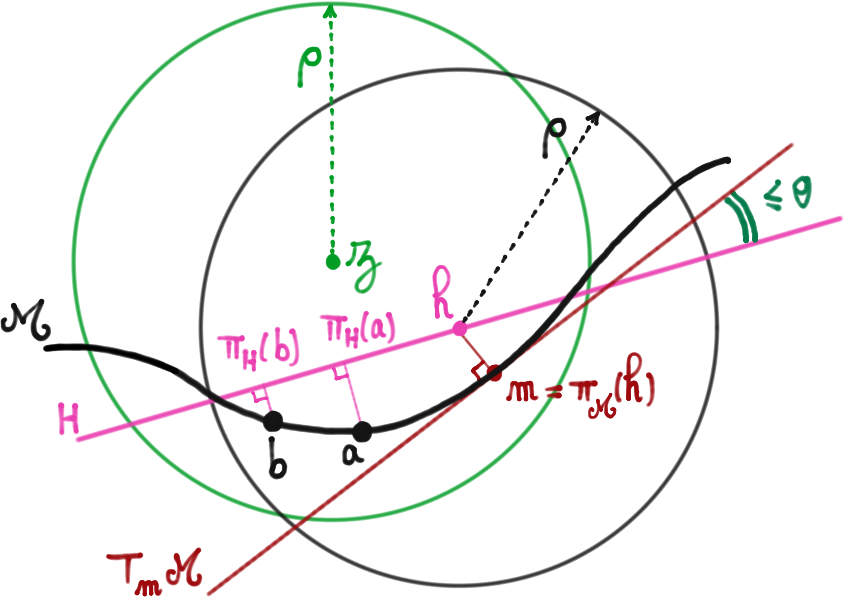}\hfill
      \includegraphics[width=0.4\linewidth]{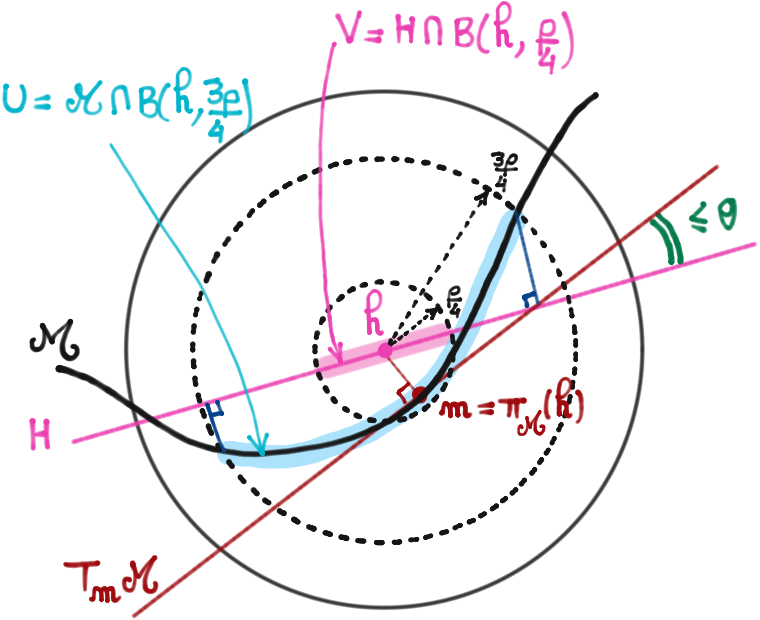}      
    \end{center}
    \caption{Notations for the proof of
      Lemma~\ref{lemma:projection-surjective}. \label{figure:projection-quasi-isometry}} 
  \end{figure}

  \medskip\noindent Let us prove Proposition \refitem{item:homeo}. For
  all $a,b \in \M \cap B(h,\rho)$, we start by bounding
  $\angle(H,ab)$. Letting $h^\ast = \pi_\M(h)$ and using Lemma \ref{lemma:tangdel:distanceToTangent} and Lemma \ref{lemma:tangdel:TangentSpaceVariation}, we obtain
  \begin{eqnarray*}
    \angle(H,ab) &\leq&  \angle(H,\Tangent {{h^\ast}} \M) + \angle(\Tangent {{h^\ast}} \M, \Tangent {a} \M) + \angle(\Tangent {a} \M, ab) \\
    &\leq& \angle(H,\Tangent {{h^\ast}} \M) + 2 \arcsin\left(\frac{\|a-{h^\ast}\|}{2\reach}\right) + \arcsin\left(\frac{\|a-b\|}{2\reach}\right)\\
    &\leq& \angle(H,\Tangent {h^\ast} \M) + \arcsin \left(\frac{\|a - h\| + \|h - h^\ast\|}{\reach} \right) +
    \arcsin \frac{\rho}{\reach}\\
    &\leq&  \angle(H,\Tangent {h^\ast} \M) + \arcsin \left(\frac{\rho + \frac{\rho}{4}}{\reach} \right) +
    \arcsin \frac{\rho}{\reach}\\
    &\leq& \angle(H,\Tangent {h^\ast} \M) + 2 \arcsin \frac{\rho}{\reach}\\
    &\leq&  \theta.
  \end{eqnarray*}
  Hence, for all $a,b \in \M \cap
  B(h,\rho)$:
  \begin{equation}
    \label{eq:proj-H}
    \cos\theta \times \|b-a\| ~~\leq~~ \cos\angle(H,ab) \times \|b-a\| = \| \pi_H(b) - \pi_H(a) \|
    ~~\leq~~ \| b - a\|,
  \end{equation}
  showing that the restriction of $\pi_H$ to $\M \cap B(h,\rho)$ is
  injective as soon as $\theta < \frac{\pi}{2}$. Thus, $\pi_H$ is a homeomorphism
  from $\M \cap B(z,\rho)$ to its range $\pi_H(\M \cap B(z,\rho))$.

   \medskip\noindent Let us prove Proposition
   \refitem{item:proj-boundary}. Because $\M \cap B(h,\rho)$ and
   $\pi_H(\M \cap B(h,\rho))$ are homeomorphic, we get in particular
   that $\partial \pi_H(U) = \pi_H(\partial U)$. Consider a point $u
   \in \partial U$, that is, a point $u \in \M$ such that $\|u-h\| =
   \frac{3\rho}{4}$ and let us prove that $\|\pi_H(u) - h \| >
   \frac{\rho}{4}$, in other words, that $\pi_H(u)$ not in $V$. By
   construction, both $u$ and $m = \pi_\M(h)$ belong to $\M \cap
   B(h,\rho)$ and thus $\angle(H,um) \leq \theta$. Using Equation
   \eqref{eq:proj-H} with $a=u$ and $b=m$, we get that $\|\pi_H(u) - \pi_H(m) \|
   \geq \cos \theta \times \|u-m\|$. We consider two cases:
    \begin{itemize}
    \item If $m=h$, we deduce immediately that $\|\pi_H(u) - h\| \geq
      \cos \theta \times \|u-h\| \geq \cos \frac{\pi}{6} \times
      \frac{3\rho}{4} > \frac{\rho}{4}$.
        
    \item If $m \neq h$, we claim that $\|\pi_H(u) - h\| >
      \frac{\rho}{4}$. To see this, denote by $\VectorSpace{A}$ the
      vector space associated to an affine space $A$ and let $V^\perp$
      designate the vector space orthogonal to a vector space $V$.
      Consider the straight-line $mh$ passing through $m$ and $h$.
      Note that $\pi_H(m)$ is also the orthogonal projection of $h$
      onto the affine space orthogonal to $H$ and passing through
      $m$. It follows that the vector $\pi_H(m) - m$ is the orthogonal
      projection onto $\VectorSpace{H}^\perp$ of the vector $h-m \in
      \VectorSpace{\Tangent m \M}^\perp$, so that $\angle(m \pi_H(m),mh)
      \leq \angle(\VectorSpace{H}^\perp,\VectorSpace{\Tangent m \M}^\perp)=
      \angle(H,\Tangent m \M) \leq \theta$. We thus get
    \begin{align*}
      \| \pi_H(u) - h \| &\geq \|\pi_H(u) - \pi_H(m) \| - \|h - \pi_H(m)\| \\
      &\geq \cos \theta \times \|u-m\| - \sin \theta \times \|h-m\| \\
      &\geq \cos \theta \times \left( \|u-h\| - \|m-h\| \right) - \sin \theta \times \|m - h\| \\
      &\geq \cos \theta \times \frac{3\rho}{4} - (\cos \theta + \sin \theta)\times \frac{\rho}{4} \\
      &\geq (2 \cos \frac{\pi}{6} - \sin \frac{\pi}{6}) \times \frac{\rho}{4}\\
      &> \frac{\rho}{4}.
    \end{align*}
    \end{itemize}
    Let us prove Proposition \refitem{item:proj-non-empty} by showing
    that $\pi_H(m) \in \pi_H(U) \cap V$. First, we show that $m \in U
    = \M \cap B(h,\frac{3\rho}{4})$. Because $\|h-m\| =
    d(h,\M) \leq \frac{\rho}{4}$, clearly $m \in \M \cap
    B(h,\frac{\rho}{4}) \subseteq U$. Second, we show that $\pi_H(m)
    \in V = H \cap B(h,\frac{\rho}{4})$. Since triangle $m h \pi_H(m)$
    has a right angle at vertex $\pi_H(m)$, the distance between any
    pair of points in this triangle is upper bounded by the lenght of its
    hypothenuse $mh$ and therefore, $\|h - \pi_H(m)\| \leq \|m - h \|
    \leq \frac{\rho}{4}$. Hence, $\pi_H(m) \in V$.

    \medskip \noindent
    We are now ready to conclude the second part of the proof. Since Propositions
    \refitem{item:proj-boundary} and \refitem{item:proj-non-empty}
    hold, we claim that $V \subseteq \pi_H(U)$. Indeed, suppose for a
    contradiction that $V \not \subseteq \pi_H(U)$. Then, we would be
    able to find two points $x$ and $y$ in $V$ such that $x$ lies
    inside $\pi_H(U)$ and $y$ lies outside $\pi_H(U)$. Consider a path
    connecting $x$ to $y$ in $V$ (for instance the segment with
    endpoints $x$ and $y$). This path would have to cross the
    boundary $\pi_H(U)$, contradicting the fact that the boundary of
    $\pi_H(U)$ lies outside $V$.
\end{proof}


\begin{lemma}[Small empty circumspheres]
  \label{lemma:bounding-circumradius}
  Assume $4\varepsilon < \rho < \frac{\reach}{3}$. Let $P$ to be an
  $\varepsilon$-dense sample of $\M$. Let $H \subseteq \Rspace^\Dim$
  be a $d$-dimensional affine space passing through a point $h$ such
  that $d(h,\M) \leq \frac{\rho}{4}$ and $\angle(H,\Tangent
  {\pi_\M(h)} \M) + 2 \arcsin \frac{\rho}{\reach} \leq
  \frac{\pi}{6}$. Then,
  \begin{itemize}
  \item $h$ lies in the relative interior of $\Conv{\pi_H(P \cap B(h,\rho))}$.
  \item For any $d$-simplex $\sigma \subseteq P$ such that $h \in
    \pi_H(\Conv \sigma)$ and $\pi_H(\sigma) \in \Del{\pi_H(P \cap
      B(h,\rho))}$, we have that $R(\pi_H(\sigma)) \leq
      \varepsilon$.
  \end{itemize}
\end{lemma}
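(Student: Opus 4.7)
The plan is to prove both items through a single reusable step. Because the angle hypothesis is exactly that of Lemma~\ref{lemma:projection-surjective} with $\theta = \pi/6$, that lemma furnishes the following \emph{engine}: for any target $x \in H \cap B(h, \rho/4)$, first pick $m \in \M \cap B(h, 3\rho/4)$ with $\pi_H(m) = x$, and then use the $\varepsilon$-density of $P$ to pick $p \in P$ with $\|p - m\| \leq \varepsilon$. The estimate $\|p - h\| \leq \varepsilon + 3\rho/4 < \rho$ (using $4\varepsilon < \rho$) places $p$ in $P \cap B(h, \rho)$, and since $\pi_H$ is $1$-Lipschitz, $\|\pi_H(p) - x\| = \|\pi_H(p) - \pi_H(m)\| \leq \varepsilon$.

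For the first item, apply the engine to each unit vector $n \in \VectorSpace{H}$ with target $x = h + (\rho/4)\, n$. The produced point $y = \pi_H(p) \in \pi_H(P \cap B(h, \rho))$ satisfies $\langle y - h, n \rangle \geq \langle x - h, n \rangle - \|\pi_H(p) - x\| \geq \rho/4 - \varepsilon > 0$. Hence no closed half-space of $H$ with boundary passing through $h$ contains the entire set $\pi_H(P \cap B(h, \rho))$, which proves $h \in \relint{\Conv{\pi_H(P \cap B(h, \rho))}}$.

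For the second item, let $c$ and $r$ be the center and radius of $S(\pi_H(\sigma))$. Since $\pi_H(\sigma)$ is a non-degenerate $d$-simplex in the $d$-dimensional affine space $H$, the center $c$ lies in $H$, and for every $q \in H$ the condition ``$q$ is strictly inside $S(\pi_H(\sigma)) \subseteq \Rspace^\Dim$'' is equivalent to $\|q - c\| < r$. This reduction from an $(\Dim-1)$-sphere in ambient space to the intrinsic circumsphere in $H$ is the only genuinely delicate point of the proof; once established, the Delaunay emptiness of $S(\pi_H(\sigma))$ forces $\|\pi_H(p') - c\| \geq r$ for every $p' \in P \cap B(h, \rho)$. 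Now assume for contradiction that $r > \varepsilon$. Because $h \in \Conv \pi_H(\sigma)$ while every vertex of $\pi_H(\sigma)$ lies on $S(\pi_H(\sigma))$, we have $\|c - h\| \leq r$. If $\|c - h\| \leq \rho/4$, apply the engine with $x = c$, yielding $\|\pi_H(p) - c\| \leq \varepsilon < r$. Otherwise take $x$ to be the point of the segment $[h, c]$ at distance $\rho/4$ from $h$, so that $\|x - c\| = \|c - h\| - \rho/4 \leq r - \rho/4$ and $\|\pi_H(p) - c\| \leq \varepsilon + r - \rho/4 < r$ (using $\varepsilon < \rho/4$). In either case $\pi_H(p) \in \pi_H(P \cap B(h, \rho))$ lies strictly inside $S(\pi_H(\sigma))$, contradicting Delaunay emptiness; hence $r \leq \varepsilon$.
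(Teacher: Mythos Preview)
Your proof is correct and follows essentially the same approach as the paper: both arguments combine Lemma~\ref{lemma:projection-surjective} with the $\varepsilon$-density of $P$ to produce, for any target point in $H\cap B(h,\rho/4)$, a point of $\pi_H(P\cap B(h,\rho))$ within distance $\varepsilon$, and then derive the two items from this fact. The only cosmetic difference is that the paper packages this as a single claim about empty $d$-balls covering $h$ and argues both items by contradiction, whereas you state the ``engine'' directly and handle the case $\|c-h\|>\rho/4$ in item~2 by sliding along the segment $[h,c]$ rather than by nesting a smaller ball; the underlying ideas are identical.
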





\begin{proof}
  Let $Q = P \cap B(h,\rho)$ and $Q' = \pi_H(Q)$. The two items follow
  from a claim that we make: for all $r \in
  (\varepsilon,\frac{\rho}{4})$, any $d$-ball of radius $r$ contained
  in $H$ and covering $h$ must contain in its interior some point of
  $Q'$. Suppose for a contradiction that this in not the case and let
  $H \cap B(c,r)$ be a $d$-ball covering $h$ and containing no point
  of $Q'$ in its interior. Notice that the center $c$ of this $d$-ball
  belongs to $H \cap B(h,r)$ and since $r < \frac{\rho}{4}$,
  Lemma~\ref{lemma:projection-surjective} entails that
  \[
  c \in H \cap B(h,r) \subseteq \pi_H(\M \cap B(h,\rho - r)).
  \]
  Hence, there would exist $m \in \M \cap B(h,\rho - r)$
  such that $\pi_H(m) = c$ and therefore $p \in P \cap B(h,\rho)$ such
  that $\|p-m\| \leq \varepsilon$ and consequently such that
  $\|\pi_H(p)-c\| \leq \varepsilon$. Thus, we would have a point $p \in P
  \cap B(h,\rho)$ whose projection onto H is contained in the interior
  of $B(c,r)$, which contradicts our claim.

  Let us prove that $h$ lies in the relative interior of
  $\Conv{Q'}$. Suppose for a contradiction that
  this is not the case. Then, we would be able to find an open
  $d$-dimensional half-space of $H$ whose boundary passes through $h$
  and which avoids $Q'$, contradicting our claim.

  Consider now a $d$-simplex $\sigma \subseteq P$ such that $h \in
  \pi_H(\Conv \sigma)$ and $\sigma' = \pi_H(\sigma) \in
  \Del{Q'}$. Because $\sigma'$ is a Delaunay simplex, $S(\sigma')$ is well-defined. Write $Z' =
  Z(\sigma')$ and $R' = R(\sigma')$. Since $\sigma' \in \Del{Q'}$,
  this means that no point $p \in P \cap B(h,\rho)$ has a projection
  onto H that is contained in the interior of $B(Z',R')$. Let us prove
  that $R' \leq \varepsilon$. Suppose for a contradiction that $R' >
  \varepsilon$. Noting that $h \in \pi_H(\Conv \sigma) =
  \Conv{\pi_H(\sigma)} = \Conv{\sigma'} \subseteq B(Z',R')$ and
  letting $r \in (\varepsilon,\frac{\rho}{4})$, we would be able to
  find a $d$-ball of radius $r$ contained in the $d$-ball $H \cap
  B(Z',R')$, covering $h$ and containing no point of $Q'$, hence
  contradicting our claim.
\end{proof}


\begin{lemma}
  \label{lemma:surrounded}
  Suppose that $4\varepsilon < \rho < \frac{\reach}{3}$ and $2
  \arcsin \frac{\rho}{\reach} \leq \frac{\pi}{6}$. Let $P$ be an
  $\varepsilon$-sample of \M. Then, for all $m \in \M$,
  the domain $\US{\Delstar m \rho}$ is homeomorphic to $\Rspace^d$.
\end{lemma}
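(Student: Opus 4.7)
Fix $m \in \M$ and set $H = \Tangent m \M$, $Q' = \pi_H(P \cap B(m,\rho))$, so that $\Delstar m \rho$ is the subcomplex of $\Del{Q'}$ consisting of those simplices $\tau$ with $m \in \Conv \tau$. The strategy is to apply Lemma~\ref{lemma:bounding-circumradius} with $h = m$ and this choice of $H$, and then invoke the standard topology of a Delaunay triangulation around an interior point of its convex hull.

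The hypotheses of Lemma~\ref{lemma:bounding-circumradius} are immediate: since $\pi_\M(m) = m$, the angle $\angle(H, \Tangent{\pi_\M(m)}\M)$ vanishes, so the angular condition collapses to the assumed $2\arcsin(\rho/\reach) \leq \pi/6$, and $d(m,\M) = 0 \leq \rho/4$. The lemma then delivers (i) that $m$ lies in the relative interior of $\Conv Q'$, and (ii) that every $d$-simplex $\sigma' \in \Del{Q'}$ with $m \in \Conv \sigma'$ has circumradius $R(\sigma') \leq \varepsilon$. Combined with $2\varepsilon < \rho/2$, these two facts place the closed star $\bigcup_{\tau \in \Delstar m \rho} \Conv \tau$ inside a small ball about $m$ that sits well inside $\Conv Q'$.

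I would then finish by invoking the classical fact that (generically) the Delaunay complex of a finite point set in $\Rspace^d$ triangulates its convex hull, and that the closed star of any interior point of this convex hull is a topological closed $d$-ball. A short pointwise check identifies $\US{\Delstar m \rho}$ with the topological interior of this closed ball: for $x$ sufficiently close to $m$, the unique relatively open simplex of $\Del{Q'}$ containing $x$ has $m$ in its closure if and only if $x$ lies in the interior of the closed star. This yields the desired homeomorphism with $\Rspace^d$.

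The main technical subtlety is the possible failure of $Q'$ to be in general position, in which case $\Del{Q'}$ need not be geometrically realized and the classical star-is-a-ball statement does not apply verbatim. The cleanest resolution is a symbolic perturbation of $Q'$: conclusions (i) and (ii) above are strict and stable, and a small generic displacement of the points of $Q'$ places them in general position while preserving the combinatorial structure of the star of $m$, reducing the claim to the generic case.
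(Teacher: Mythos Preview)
Your approach is essentially the paper's own: apply Lemma~\ref{lemma:bounding-circumradius} with $(h,H)=(m,\Tangent m \M)$ to place $m$ in the relative interior of $\Conv Q'$, and then use that the open star of an interior point in a Delaunay triangulation of a convex region is an open $d$-cell. You are in fact more careful than the paper's one-line proof---flagging the general-position issue is appropriate (the paper simply ignores it, relying implicitly on the protection hypotheses present where the lemma is invoked)---though note that conclusion~(ii) is not actually needed, and your phrase ``preserving the combinatorial structure of the star of $m$'' is slightly off: perturbing a degenerate $Q'$ removes Delaunay simplices rather than preserving them; the cleaner observation is that $\US{\Delstar m \rho}$ is open and star-shaped about $m$ regardless of genericity.
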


  \begin{proof}
    Applying Lemma \ref{lemma:bounding-circumradius} with $(h,H) = (m,
    \Tangent m \M)$, we get that each point $m \in \M$ lies in the
    relative interior of $\Conv{ \prTangent m {P \cap B(m,\rho)}
    }$. Hence, $\US{\Delstar m \rho}$ contains $m$ in its relative
    interior and the result follows.
\end{proof}

\section{Stability of Delaunay simplices through distortions}
\label{section:stability-through-distortions}

The goal of this section is to establish a technical lemma
  (Lemma~\ref{lemma:stability-from-plane-zero-to-plane-one}) which
  provides conditions under which a $d$-simplex $\sigma$ is Delaunay
  stable for $P$ at scale $\rho$ with respect to the set $\{ (h_i,H_i)
  \}_{i \in \{0,1\}}$, where $h_i$ is a point of $\Rspace^\Dim$ and
  $H_i \subseteq \Rspace^\Dim$ a $d$-dimensional space passing through
  $h_i$. Recall that a simplex $\sigma$ is Delaunay stable for $P$ at
  scale $\rho$ with respect to $\{ (h_i,H_i) \}_{i \in \{0,1\}}$ if
  the following two propositions are equivalent:
\begin{itemize}
\item $\sigma \subseteq P \cap B(h_0, \rho)$ and $\pi_{H_0}(\sigma) \in \Del{\pi_{H_0}(P \cap B(h_0,\rho))}$;
\item $\sigma \subseteq P \cap B(h_1, \rho)$ and $\pi_{H_1}(\sigma) \in \Del{\pi_{H_1}(P \cap B(h_1,\rho))}$.
\end{itemize}
Letting $\sigma_i = \pi_{H_i}(\sigma)$ and $Q_i = \pi_{H_0}(P \cap
B(h_0,\rho))$, we thus have to answer the following question: under
which conditions do we have $\sigma_0 \in \Del{Q_0} \iff \sigma_1 \in
\Del{Q_1}$?  We find that binary relations are the right concept to
compare Delaunay complexes $\Del{Q_0}$ and $\Del{Q_1}$ when $P$
  is noisy. In Section~\ref{section:general-distortions}, building on
  the work of Boissonnat et al. \cite{BOISSONNAT_2013}, we first
  consider a general binary relation over sets $Q_0$ and $Q_1$ and
  find that this relation must be a ``sufficiently'' small distortion
  to ensure the equivalence $\sigma_0 \in \Del{Q_0} \iff \sigma_1 \in
  \Del{Q_1}$
  (Lemma~\ref{lemma:delaunay-stability-through-distortion}). In
  Section~\ref{section:specific-distortions}, we then turn our
  attention to some specific restrictions of the binary relation
  $\{(\pi_{H_0}(p), \pi_{H_1}(p)) \mid p \in P\}$ and quantify their
  distortion (Lemma
  \ref{lemma:perfect-relation-from-plane-zero-to-plane-one} and
  Lemma \ref{lemma:relation-from-plane-zero-to-plane-one}). In
  Section~\ref{section:technical-lemma}, we state and prove our
  technical lemma.

\subsection{General distortions}
\label{section:general-distortions}

Recall that a {\em (binary) relation} $\Rrel$ over sets $X_0$ and
$X_1$ is a subset of the Cartesian product $X_0 \times X_1$. The {\em
  range} of $\Rrel$, denoted as $\Range{\Rrel}$ is the set of all $x_1
\in X_1$ for which there exists at least one $x_0 \in X_0$ such that
$(x_0,x_1) \in \Rrel$. The {\em domain} of $\Rrel$, denoted as
$\Domain{\Rrel}$ is the set of all $x_0 \in X_0$ for which there
exists at least one $x_1 \in X_1$ such that $(x_0,x_1) \in \Rrel$.

\begin{definition}[Multiplicative distortion]
  We say that $\Rrel$ is a {\em multiplicative $M$-distortion} for
  some $M \geq 0$ if for all $(x_0,x_1), (y_0, y_1) \in \Rrel$, we have
  \[
  \frac{1}{1+M} \| x_0 - y_0 \| \leq \|x_1 - y_1\| \leq (1 + M) \| x_0 - y_0 \|.
  \]
\end{definition}

\begin{remark}
  \label{remark:multiplicative-distortion-one-to-one}
  Notice that a multiplicative $M$-distortion $\Rrel$ is injective
  because for all $(x_0,x_1), (y_0, y_1) \in \Rrel$, the following
  implication holds: $x_1 = y_1 \implies x_0 = y_0$. Also, if $\Rrel$
  is a multiplicative $M$-distortion, so is the converse relation
  $\Rrel^{-1} = \{ (x_1,x_0) \mid (x_0, x_1) \in \Rrel \}$. Hence, the
  converse relation $\Rrel^{-1}$ is injective, or equivalently,
  $\Rrel$ is functional. Thus, $\Rrel$ being both injective and
  functional is one-to-one.
\end{remark}

\begin{definition}[Additive distortion]
  We say that $\Rrel$ is an {\em additive $A$-distortion} for some $A$
  if for all $(x_0,x_1), (y_0, y_1) \in \Rrel$ we have
  \[
  \abs{ \|x_1 - y_1\| - \|x_0-y_0\| } \leq A
  \]
\end{definition}

\begin{lemma}[Going from  multiplicative to additive, and vice versa]
  \label{lemma:relation-multiplicative-additive}
  Consider a relation $\Rrel$ over sets $X_0$ and $X_1$.
  \begin{itemize}
  \item
    If $\Rrel$ is a multiplicative $M$-distortion for some $M
    \geq 0$, then $\Rrel$ is an additive $A$-distortion for any $A \geq M \times \Diam{X_0}$.
  \item
    If $\Rrel$ is an additive $A$-distortion map for some $A < \Sep{X_0}$, then
    $\phi$ is a multiplicative $M$-distortion map for any $M \geq \frac{A}{\Sep{X_0}-A}$.
  \end{itemize}
\end{lemma}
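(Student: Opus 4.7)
The plan is to reduce both implications to direct manipulation of the defining inequalities, exploiting the trivial bounds $\|x_0-y_0\| \le \Diam(X_0)$ and, when $x_0 \neq y_0$, $\|x_0-y_0\| \ge \Sep(X_0)$.

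For the first item, fix $(x_0,x_1),(y_0,y_1)\in\Rrel$ and start from the two multiplicative inequalities. Rearranging the upper one gives $\|x_1-y_1\|-\|x_0-y_0\| \le M\,\|x_0-y_0\|$, and rearranging the lower one gives $\|x_0-y_0\|-\|x_1-y_1\| \le \frac{M}{1+M}\|x_0-y_0\|$, which is itself $\le M\,\|x_0-y_0\|$. Bounding $\|x_0-y_0\|$ by $\Diam(X_0)$ turns both into $|\,\|x_1-y_1\|-\|x_0-y_0\|\,| \le M\,\Diam(X_0) \le A$, the desired additive bound.

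For the second item, I would fix $(x_0,x_1),(y_0,y_1)\in\Rrel$ and split on whether $x_0 = y_0$. If they coincide then, because $\Rrel$ is assumed to be a map, $x_1 = y_1$ and both multiplicative inequalities hold trivially. Otherwise $\|x_0-y_0\| \ge \Sep(X_0) > A$, so in particular $\|x_0-y_0\|-A > 0$. The additive upper bound rewrites the target upper multiplicative bound as the scalar condition $M \ge A/\|x_0-y_0\|$, while the additive lower bound rewrites the target lower multiplicative bound as $M \ge A/(\|x_0-y_0\|-A)$. Both quantities are decreasing in $\|x_0-y_0\|$, so the binding case is $\|x_0-y_0\|=\Sep(X_0)$, producing the threshold $M \ge A/(\Sep(X_0)-A)$; since this threshold dominates $A/\Sep(X_0)$, it suffices for both multiplicative inequalities simultaneously.

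The only mild subtlety — not really an obstacle — is the degenerate case $x_0 = y_0$ in the second implication: a multiplicative distortion forces $x_1=y_1$ there (cf.\ Remark~\ref{remark:multiplicative-distortion-one-to-one}), so the word \emph{map} in the hypothesis is exactly what allows the argument to cover all pairs without pathological exceptions. Everything else is an elementary rearrangement of the two defining inequalities.
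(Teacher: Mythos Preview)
Your proof is correct and follows essentially the same approach as the paper: both parts reduce to rearranging the defining inequalities and invoking the bounds $\|x_0-y_0\|\le\Diam{X_0}$ and $\|x_0-y_0\|\ge\Sep{X_0}$. If anything, your explicit treatment of the degenerate case $x_0=y_0$ in the second item (using that $\Rrel$ is a map) is slightly more careful than the paper's argument, which silently assumes $x_0\neq y_0$.
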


\begin{proof}
  To show the first part of the lemma, suppose that $\Rrel$ is a
  multiplicative $M$-distortion for some $M \geq 0$. For all $(x_0,
  x_1), (y_0,y_1) \in \Rrel$, we thus have
  \[
  \frac{1}{1+M} \|x_0 - y_0\| \leq \|x_1 - y_1\| \leq (1+M) \|x_0 - y_0\|.
  \]
  Subtracting from each side $\|x_0 - y_0\|$, we get that
  \[
  -M \|x_0 - y_0\| \leq \frac{-M}{1+M} \|x_0 - y_0\| \leq \|x_1 - y_1\| - \|x_0 - y_0\| \leq M \|x_0 - y_0\|.
  \]
  and therefore
  \[
  \abs{\,\|x_1 - y_1\| - \|x_0 - y_0\|\,} \leq M \|x_0 - y_0\| \leq M \times \Diam{X_0},
  \]
  showing the first part of the lemma. To establish the second part of
  the lemma, set $S = \Sep{X_0}$ and suppose that $\Rrel$ is an
  additive $A$-distortion map for some $A < S$. Then, for all $(x_0,
  x_1), (y_0,y_1) \in \Rrel$, we have by definition that
  \[
  \|x_0 - y_0\| - A \leq \| x_1 - y_1 \| \leq \|x_0 - y_0\| + A
  \]
  Rearranging the left and right sides and using $S < \|x_0 - y_0\|$,
  we get that
  \[
  \left(\frac{1}{1 + \frac{A}{S-A}} \right) \|x_0 - y_0\|  = \left( 1 - \frac{A}{S} \right) \|x_0 - y_0\| \leq \| x_1 - y_1\| \leq  \left( 1 + \frac{A}{S} \right) \|x_0 - y_0\|.
  \]
  For any $M \geq \frac{A}{S-A} \geq \frac{A}{S}$, we thus get that
  \[
  \frac{1}{1+M} \| x_0 - y_0 \| \leq \|x_1 - y_1\| \leq (1 + M) \|x_0 - y_0 \|,
  \]
  showing that $\Rrel$ is an $M$-distortion map. This proves the second part of the lemma.
\end{proof}

Let us recall a nice result \cite[Lemma 4.1]{BOISSONNAT_2013} which
bounds the displacement that undergoes the circumcenter of a simplex
when its vertices are perturbed.

\begin{lemma}[Location of almost circumcenters {\cite[Lemma 4.1]{BOISSONNAT_2013}}]
  \label{lemma:locating-centers}
  Let $X \subseteq \Rspace^\Dim$ be a $d$-dimensional affine space. If
  $\sigma \subseteq X$ is a $d$-simplex, and $x \in X$ is such that
  \[
  \abs{ \|x - a\|^2 - \|x - a'\|^2 } \leq \xi^2 \quad \text{for all $a,a' \in \sigma$},
  \]
  then
  \[
  \|Z(\sigma) - x\| \leq \frac{d \xi^2}{2\height\sigma}.
  \]
\end{lemma}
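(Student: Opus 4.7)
The plan is to first convert the quadratic hypothesis about $\|x - a\|^2$ into a linear inequality on the displacement vector $u = Z(\sigma) - x$, and then to bound $\|u\|$ by expanding $u$ in a basis of $\VectorSpace{\Aff\sigma}$ built from the facet-normals of $\sigma$.

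First I would fix an arbitrary vertex $a_0$ of $\sigma$. For any other vertex $a_i$, expanding the squares gives
\[
  \|x - a_i\|^2 - \|x - a_0\|^2 = -2\, x \cdot (a_i - a_0) + \|a_i\|^2 - \|a_0\|^2,
\]
while by definition of the circumcenter the same expression vanishes when $x$ is replaced by $Z(\sigma)$. Subtracting the two identities yields $2\,u \cdot (a_i - a_0) = \|x - a_i\|^2 - \|x - a_0\|^2$, so the hypothesis of the lemma implies $|u \cdot (a_i - a_0)| \leq \xi^2/2$ for every vertex $a_i$ of $\sigma$.

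The main step, and where I expect the real work, is constructing the basis of $\VectorSpace{\Aff\sigma}$ dual to $\{a_i - a_0\}_{i=1}^d$ in terms of the facet-normals of $\sigma$. For $i \in \{1,\ldots,d\}$ let $F_i = \Aff(\sigma \setminus \{a_i\})$, let $h_i = d(a_i, F_i)$, and let $n_i$ be the unit vector in $\VectorSpace{\Aff\sigma}$ orthogonal to $F_i$ and pointing towards $a_i$. For distinct indices $i \neq j$ in $\{1,\ldots,d\}$, both $a_0$ and $a_j$ belong to $F_i$, hence $a_j - a_0 \in \VectorSpace{F_i}$ and $n_i \cdot (a_j - a_0) = 0$. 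Writing $a_i = p_i + h_i n_i$ with $p_i$ the orthogonal projection of $a_i$ onto $F_i$, and noting that $p_i - a_0 \in \VectorSpace{F_i}$, I get $n_i \cdot (a_i - a_0) = h_i$. Thus $(n_i/h_i) \cdot (a_j - a_0) = \delta_{ij}$, exhibiting $\{n_i/h_i\}_{i=1}^d$ as the basis of $\VectorSpace{\Aff\sigma}$ dual to $\{a_i - a_0\}_{i=1}^d$.

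Since $u \in \VectorSpace{\Aff\sigma}$, the standard change-of-basis formula gives
\[
  u = \sum_{i=1}^d \bigl(u \cdot (a_i - a_0)\bigr)\,\frac{n_i}{h_i}.
\]
Applying the triangle inequality and using $\|n_i\| = 1$, $h_i \geq \height\sigma$, together with the bound $|u \cdot (a_i - a_0)| \leq \xi^2/2$ established above, I conclude
\[
  \|u\| \leq \sum_{i=1}^d \frac{\xi^2/2}{h_i} \leq \frac{d\,\xi^2}{2\,\height\sigma},
\]
which is the claimed inequality.
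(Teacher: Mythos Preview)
Your proof is correct. The paper does not supply its own proof of this lemma; it simply recalls the result from \cite[Lemma~4.1]{BOISSONNAT_2013}, so there is nothing to compare against beyond noting that your argument---reducing to $|u\cdot(a_i-a_0)|\le\xi^2/2$ via the circumcenter identity and then expanding $u$ in the facet-normal dual basis---is a clean and complete justification of the cited bound.
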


Notice that the above bound becomes meaningless as the simplex $\sigma$
becomes degenerate because then the right side of the inequality tends
to $+\infty$. Applying the above lemma in our context, we get the
following lemma:

\begin{lemma}[Stability of Delaunay simplices through distortion]
  \label{lemma:delaunay-stability-through-distortion}
Let $H_0$ and $H_1$ be two $d$-dimensional affine spaces in
$\Rspace^\Dim$. Consider a binary relation $\Rrel \subseteq H_0 \times
H_1$ and suppose that $\Rrel$ is an additive $A$-distortion for some
$A \geq 0$. Let $\Qrel \subseteq \Rrel$ be a finite one-to-one
relation. Let $Q_0 = \Domain{\Qrel}$ and $Q_1 =
\Range{\Qrel}$. Consider  $\Srel \subseteq \Qrel$
such that both $\sigma_0 = \Domain{\Srel}$ and $\sigma_1 =
\Range{\Srel}$ are non-degenerate abstract $d$-simplices. Suppose
$\sigma_0$ is $\zeta$-protected with respect to $Q_0$.  Suppose there
exists $\varepsilon \geq 0$ such that for $i \in \{ 0, 1\}$
\begin{align*}
  2 A \left( 1 + \frac{2 d \varepsilon}{\height{\sigma_i}} \right) < \zeta.
\end{align*}
Then, we have the following two implications:
\begin{itemize}
\item $R(\sigma_0) \leq \varepsilon$, $Z(\sigma_0) \in \Domain{\Rrel}$ and $\sigma_0 \in \Del{Q_0}$ $\implies$
  $\sigma_1 \in \Del{Q_1}$ and is protected with respect to $Q_1$; 
\item $R(\sigma_1) \leq \varepsilon$, $Z(\sigma_1) \in \Range{\Rrel}$ and $\sigma_1 \in \Del{Q_1}$
  $\implies$ $\sigma_0 \in \Del{Q_0}$.
\end{itemize}
\end{lemma}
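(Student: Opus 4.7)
The strategy is to transport the circumcenter $Z(\sigma_0)$ of $\sigma_0$ to a ``pseudo-circumcenter'' $Z_1\in H_1$ for $\sigma_1$ (and symmetrically $Z(\sigma_1) \rightsquigarrow Z_0 \in H_0$), then use Lemma~\ref{lemma:locating-centers} to certify that the true circumcenter $Z(\sigma_1)$ (resp.\ $Z(\sigma_0)$) stays within a controlled distance of $Z_1$ (resp.\ $Z_0$), and finally combine this displacement bound with the additive distortion of $\Rrel$ to compare the radii of $S(\sigma_0)$ and $S(\sigma_1)$ against every point of $Q_0 \setminus \sigma_0$ and $Q_1 \setminus \sigma_1$ through the one-to-one relation $\Qrel$.

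\medskip\noindent\textbf{First implication.} I would pick $Z_1 \in H_1$ with $(Z(\sigma_0), Z_1) \in \Rrel$, which exists since $Z(\sigma_0) \in \Domain{\Rrel}$. For every vertex pair $(a_0,a_1)\in \Srel$ the additive distortion gives
\[
\bigl|\|Z_1 - a_1\| - R(\sigma_0)\bigr| \;=\; \bigl|\|Z_1 - a_1\| - \|Z(\sigma_0) - a_0\|\bigr| \;\leq\; A,
\]
so $\|Z_1 - a_1\| \leq R(\sigma_0) + A \leq \varepsilon + A$, and for any $a_1,a'_1 \in \sigma_1$ the difference-of-squares identity yields $\bigl|\|Z_1-a_1\|^2 - \|Z_1-a'_1\|^2\bigr| \leq 2A\cdot 2(\varepsilon + A)$. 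Applying Lemma~\ref{lemma:locating-centers} inside $H_1$ with $\xi^2 = 4A(\varepsilon+A)$, I obtain
\[
\|Z(\sigma_1) - Z_1\| \;\leq\; \frac{2dA(\varepsilon+A)}{\height{\sigma_1}}.
\]
Combined with $|\|Z_1 - a_1\| - R(\sigma_0)| \leq A$, this controls $R(\sigma_1)$. Now for any $q_1 \in Q_1 \setminus \sigma_1$ pick the unique $q_0 \in Q_0 \setminus \sigma_0$ paired to it by $\Qrel$ (unique by Remark~\ref{remark:multiplicative-distortion-one-to-one} applied to the one-to-one relation $\Qrel$); $\zeta$-protection of $\sigma_0$ together with $\sigma_0 \in \Del{Q_0}$ gives $\|Z(\sigma_0) - q_0\| > R(\sigma_0) + \zeta$, and the distortion on $(Z(\sigma_0), Z_1), (q_0,q_1) \in \Rrel$ transfers this to $\|Z_1 - q_1\| > R(\sigma_0) + \zeta - A$. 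The displacement bound then propagates to $\|Z(\sigma_1) - q_1\|$ and, after comparing against the upper bound on $R(\sigma_1)$, the hypothesis $2A(1 + 2d\varepsilon/\height{\sigma_1}) < \zeta$ ensures a strict inequality $\|Z(\sigma_1) - q_1\| > R(\sigma_1)$ with a margin, giving both $\sigma_1 \in \Del{Q_1}$ and its protection with respect to $Q_1$.

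\medskip\noindent\textbf{Second implication.} The argument is symmetric but uses $\zeta$-protection in reverse: I produce $Z_0 \in H_0$ with $(Z_0, Z(\sigma_1)) \in \Rrel$, bound $\|Z_0 - Z(\sigma_0)\|$ via Lemma~\ref{lemma:locating-centers} inside $H_0$ (with $R(\sigma_1)\leq \varepsilon$ playing the role of the circumradius bound), and then argue by contradiction. If some $q_0 \in Q_0 \setminus \sigma_0$ were strictly inside $S(\sigma_0)$, protection would force $\|Z(\sigma_0) - q_0\| < R(\sigma_0) - \zeta$; transferring this inequality through $\Rrel$ to the corresponding $q_1 \in Q_1 \setminus \sigma_1$ and using the displacement bound would yield $\|Z(\sigma_1) - q_1\| < R(\sigma_1)$, contradicting $\sigma_1 \in \Del{Q_1}$. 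Combined with protection, the only remaining possibility is $\|Z(\sigma_0) - q_0\| > R(\sigma_0) + \zeta \geq R(\sigma_0)$, which is exactly the Delaunay condition for $\sigma_0$.

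\medskip\noindent\textbf{Main obstacle.} The delicate step is not the qualitative transfer of the emptiness of the circumsphere --- which is nearly automatic once Lemma~\ref{lemma:locating-centers} is invoked --- but keeping the bookkeeping tight enough that the final inequality matches the clean hypothesis $2A(1 + 2d\varepsilon/\height{\sigma_i}) < \zeta$. In particular, the displacement bound involves $\varepsilon + A$ rather than $\varepsilon$, so one must either absorb the extra $A$ into a looser constant or exploit the slack provided by the strict inequality in the $\zeta$-protection hypothesis. The symmetric direction additionally requires being explicit that $\sigma_0$'s protection is used to exclude the ``$q_0$ strictly inside'' case, which is the only place the Delaunay assumption on $\sigma_1$ enters.
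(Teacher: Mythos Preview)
Your plan is essentially the paper's proof: transport the circumcenter through $\Rrel$, invoke Lemma~\ref{lemma:locating-centers} to control the displacement to the true circumcenter, and then compare $\|Z(\sigma_1)-a_1\|$ against $\|Z(\sigma_1)-q_1\|$ via the triangle inequality (and symmetrically for the second implication). The only discrepancy is the obstacle you single out, and it is an artifact of a slightly loose estimate rather than a genuine difficulty.

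The sharpening is one line. Since both $\|Z_1-a_1\|$ and $\|Z_1-a'_1\|$ lie in the interval $[R(\sigma_0)-A,\,R(\sigma_0)+A]$, the difference of their squares is at most
\[
(R(\sigma_0)+A)^2-(R(\sigma_0)-A)^2 \;=\; 4A\,R(\sigma_0),
\]
not $4A(\varepsilon+A)$. Feeding $\xi^2=4A\,R(\sigma_0)$ into Lemma~\ref{lemma:locating-centers} and using $R(\sigma_0)\leq\varepsilon$ gives
\[
\|Z_1-Z(\sigma_1)\|\;\leq\;\frac{2dA\,R(\sigma_0)}{\height{\sigma_1}}\;\leq\;\frac{2dA\varepsilon}{\height{\sigma_1}},
\]
and the required inequality $2\|Z_1-Z(\sigma_1)\|+2A<\zeta$ is then \emph{exactly} the hypothesis $2A\bigl(1+2d\varepsilon/\height{\sigma_1}\bigr)<\zeta$. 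The same sharpening, with $R(\sigma_1)$ in place of $R(\sigma_0)$, closes the second implication. No absorption of extra constants and no appeal to slack in a strict inequality is needed.
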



\begin{proof}
   Suppose first that $R(\sigma_0) \leq \varepsilon$, $Z(\sigma_0) \in
   \Domain{\Rrel}$ and $\sigma_0 \in \Del{Q_0}$ and let us prove that
   $\sigma_1 \in \Del{Q_1}$ and is protected with respect to $Q_1$. In other words, we need to prove that for
   all $(a_0,a_1) \in \Srel$ and all $(q_0,q_1) \in \Qrel \setminus
   \Srel$, we have $\|a_1 - Z(\sigma_1)\| < \|q_1 -
   Z(\sigma_1)\|$. Let $z_1 \in \Range{\Rrel}$ such that $(Z(\sigma_0),z_1) \in
   \Rrel$. On one hand, for all $(a_0,a_1) \in \Srel$, we have:
  \begin{align}
    \|a_1 - Z(\sigma_1)\|
    &\leq \|a_1 - z_1 \| + \|z_1 - Z(\sigma_1)\| \notag\\
    &\leq A + \|a_0 - Z(\sigma_0)\| + \|z_1 - Z(\sigma_1)\| \notag\\
    &\leq A + R(\sigma_0) + \|z_1 - Z(\sigma_1)\|. \label{eq:circumradius-from-above-additive}
  \end{align}
  On the other hand, for all $(q_0,q_1) \in \Qrel \setminus \Srel$, we have:
  \begin{align}
    \|q_1 - Z(\sigma_1)\|
    &\geq \|q_1 - z_1\| - \|z_1 - Z(\sigma_1)\| \notag\\
    &\geq \|q_0 - Z(\sigma_0)\| - A - \|z_1 - Z(\sigma_1)\| \notag\\
    &\geq R(\sigma_0) + \zeta - A - \|z_1 - Z(\sigma_1)\|. \label{eq:circumradius-from-below-additive}
  \end{align}

  Thus, we obtain that $\|a_1 - Z(\sigma_1)\| < \|q_1 - Z(\sigma_1)\|$ as soon
  as the right side of (\ref{eq:circumradius-from-above-additive}) is smaller
  than the right side of
  (\ref{eq:circumradius-from-below-additive}), that is, as soon as:
  \begin{equation}
    \label{eq:condition-del-additive}
    2\|z_1-Z(\sigma_1)\| + 2A < \zeta.
  \end{equation}
  Because for all $a_1 \in \sigma_1$ we have $R(\sigma_0) - A \leq
  \|z_1-a_1\| \leq R(\sigma_0) + A$, we get that for all
  $a_1,a_1' \in \sigma_1$:
  \[
  \abs{ \|z_1-a_1\|^2 - \|z_1-a_1'\|^2 } \leq (R(\sigma_0) + A)^2 - (R(\sigma_0) - A)^2 = 4 A R(\sigma_0).
  \]
  Applying Lemma~\ref{lemma:locating-centers}, we obtain that
  \[
  \|z_1-Z(\sigma_1)\| \leq
  \frac{2 A d R(\sigma_0)}{\height{\sigma_1}}.
  \]
  Using this inequality, we get that Inequality
  (\ref{eq:condition-del-additive}) holds as soon as
  \[
  2 A \left( 1 + \frac{2 d R(\sigma_0)}{\height{\sigma_1}} \right) < \zeta
  \]
  which follows directly from our assumptions. Thus, $\sigma_1 \in
  \Del{Q_1}$.  Suppose now that $R(\sigma_1) \leq \varepsilon$,
  $Z(\sigma_1) \in \Range{\Rrel}$ and $\sigma_0 \not \in \Del{Q_0}$
  and let us prove that $\sigma_1 \not \in \Del{Q_1}$. Because
  $\sigma_0 \not \in \Del{Q_0}$, there exists $(q_0 ,q_1) \in \Qrel$
  such that $\|q_0 - Z(\sigma_0)\|<R(\sigma_0)$ and because $\sigma_0$
  is $\zeta$-protected with respect to $Q_0$, we have
  $\|q_0-Z(\sigma_0)\| < \|a_0-Z(\sigma_0)\| - \zeta$ for all pairs
  $(a_0,a_1) \in \Srel$. Let us prove that $\|q_1 - Z(\sigma_1)\| <
  \|a_1 - Z(\sigma_1)\|$ for any $(a_0,a_1) \in \Srel$. Let $z_0 \in \Domain{\Rrel}$ such that
  $(z_0,Z(\sigma_1)) \in \Rrel$. On one hand, we have:
  \begin{align}
    \|q_1 - Z(\sigma_1)\| &\leq \|q_0 - z_0 \| + A \notag \\
    &\leq \| z_0 - Z(\sigma_0) \| + \| Z(\sigma_0) - q_0 \| + A \notag \\
    &\leq \| z_0 - Z(\sigma_0) \| + R(\sigma_0) - \zeta + A \label{eq:phi-circumradius-from-above-additive}
  \end{align}
  On the other hand, for any $(a_0,a_1) \in \Srel$, we have:
  \begin{align}
    \|a_1-Z(\sigma_1)\|
    &\geq \| a_0 - z_0 \| - A \notag\\
    &\geq  \|a_0 - Z(\sigma_0)\| - \| z_0 - Z(\sigma_0) \|  - A \notag\\
    &\geq  R(\sigma_0) - \| z_0 - Z(\sigma_0) \|  - A \label{eq:phi-circumradius-from-below-additive}
  \end{align}
  Thus, we obtain that $\|q_1 - Z(\sigma_1)\| < \|a_1 - Z(\sigma_1)\|$ as soon as the
  right side of (\ref{eq:phi-circumradius-from-above-additive}) is smaller than the right side of
  (\ref{eq:phi-circumradius-from-below-additive}), that is, as soon as:
  \begin{equation}
  \label{eq:condition-not-del-additive}
  2 \|z_0 - Z(\sigma_0)\| + 2A < \zeta.
  \end{equation}
  Because for all $(a_0,a_1) \in \Srel$, we have
  $R(\sigma_1) - A \leq \| z_0 - a_0 \| \leq R( \sigma_1) + A$, we get that for all $a_0, a_0' \in \sigma_0$:
  \[
  \abs{ \| z_0 - a_0 \|^2 - \| z_0 - a_0'\|^2} \leq (R(\sigma_1)+A)^2 - (R(\sigma_1)-A)^2
  = 4 A R(\sigma_1).
  \]
  Applying Lemma \ref{lemma:locating-centers}, we obtain that
  \[
  \| z_0 - Z(\sigma_0) \| \leq \frac{2 A d R(\sigma_1)}{\height{\sigma_0}}
  \]
  Using this inequality, we get that Inequality
  (\ref{eq:condition-not-del-additive}) holds as soon as
  \[
  2 A\left( 1 + \frac{2 d R(\sigma_1)}{\height{\sigma_0}} \right) < \zeta
  \]
  which follows directly from our assumptions. Thus, $\sigma_1
  \not \in \Del{Q_1}$.
\end{proof}


\subsection{Specific distortions}
\label{section:specific-distortions}

We now consider relations of the form $\Rrel = \{
  (\pi_{H_0}(x),\pi_{H_1}(x)) \mid x \in X\}$ and find values of $A$
  for which $\Rrel$ is an additive $A$-distortion. We consider first
  the case of a set $X$ contained in \M in Lemma
  \ref{lemma:perfect-relation-from-plane-zero-to-plane-one} (non-noisy
  case) before handling the case of a set $X$ contained in $\Offset M
  \delta$ in Lemma \ref{lemma:relation-from-plane-zero-to-plane-one}
  (noisy case). 

\begin{lemma}[Distortion in the non-noisy case]
  \label{lemma:perfect-relation-from-plane-zero-to-plane-one}
  Consider a subset $U \subseteq \M$ and two $d$-dimensional spaces
  $H_0$ and $H_1$. Suppose that there is $\theta \leq 1$ such that for
  $i \in \{0,1\}$
  \[
   \sup_{u,u' \in U} \angle(H_i,uu') \leq  \theta.
   \]
   Then, 
   $
   \Rrel = \{ (\pi_{H_0}(u),\pi_{H_1}(u)) \mid u \in U\}
   $
   is an additive $(\Diam{U}\times\theta^2)$-distortion.
\end{lemma}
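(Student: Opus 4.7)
The plan is very direct. Fix $(x_0,x_1)$ and $(y_0,y_1)$ in $\Rrel$, coming from $u, v \in U$, so that $x_i = \pi_{H_i}(u)$ and $y_i = \pi_{H_i}(v)$ for $i \in \{0,1\}$. The goal is to compare both $\|x_0-y_0\|$ and $\|x_1-y_1\|$ to the common quantity $\|u-v\|$.

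For any affine subspace $H$, orthogonal projection of a vector shrinks its length exactly by the cosine of the angle that the vector makes with $H$, so
\[
\|x_i - y_i\| \;=\; \|u-v\|\,\cos\!\bigl(\angle(H_i, uv)\bigr).
\]
By hypothesis $\angle(H_i, uv) \leq \theta$ for both $i$, hence $\|x_i-y_i\|$ lies in the interval $\bigl[\|u-v\|\cos\theta,\;\|u-v\|\bigr]$ for each $i$. Therefore
\[
\bigl|\,\|x_1-y_1\| - \|x_0-y_0\|\,\bigr| \;\leq\; \|u-v\|\,(1-\cos\theta).
\]

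To conclude, I use the elementary inequality $1-\cos\theta \leq \theta^2/2$ (valid for all real $\theta$, and in particular for $\theta \leq 1$) together with $\|u-v\| \leq \Diam{U}$, which yields
\[
\bigl|\,\|x_1-y_1\| - \|x_0-y_0\|\,\bigr| \;\leq\; \tfrac{1}{2}\,\Diam{U}\,\theta^2 \;\leq\; \Diam{U}\,\theta^2,
\]
establishing that $\Rrel$ is an additive $(\Diam{U}\times\theta^2)$-distortion. There is no real obstacle here; the only subtle point is recognizing that the bound on the angle between $H_i$ and the chord $uv$ controls the distortion of the chord's length under orthogonal projection, and that the factor $\theta^2$ (rather than $\theta$) in the conclusion comes precisely from the second-order Taylor expansion of $1-\cos\theta$.
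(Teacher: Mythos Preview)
Your proof is correct and rests on the same key observation as the paper's: for each $i$, the projected distance $\|\pi_{H_i}(u)-\pi_{H_i}(v)\|$ lies in the interval $[\cos\theta\,\|u-v\|,\,\|u-v\|]$, and then $1-\cos\theta\le\theta^2/2$ finishes. The only difference is organizational: the paper first packages the cosine bounds as a multiplicative $\bigl(\tfrac{1-\cos\theta}{\cos\theta}\bigr)$-distortion and then invokes Lemma~\ref{lemma:relation-multiplicative-additive} to convert to an additive $(\Diam{U}\cdot\theta^2)$-distortion, whereas you bound the additive gap directly (and in fact obtain $\tfrac12\Diam{U}\,\theta^2$, a factor of two sharper). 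Your route is slightly shorter since it avoids the detour through the multiplicative-to-additive lemma; the paper's route has the mild advantage of recording the multiplicative bound explicitly, which could be reused elsewhere.
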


\begin{proof}
  Note that for all
  $u,u' \in U$: 
  \begin{alignat*}{2}
    \cos \theta \times \| u' - u \| &~~\leq~~ \| \pi_{H_0}(u') - \pi_{H_0}(u) \| &~~\leq~~ \| u' - u\|, \\
    \cos \theta \times \| u' - u \| &~~\leq~~   \| \pi_{H_1}(u') - \pi_{H_1}(u) \| &~~\leq~~ \| u' - u \|.
  \end{alignat*}
  Hence, for all $u, u' \in U$,
  \[
  \cos \theta \times \| \pi_{H_0}(u') - \pi_{H_0}(u) \| \leq \| \pi_{H_1}(u') - \pi_{H_1}(u)
  \| \leq \frac{1}{\cos \theta} \times \| \pi_{H_0}(u') - \pi_{H_0}(u) \|
  \]
  Thus, $\Rrel$ is a multiplicative $\left(\frac{1 - \cos \theta}{\cos
    \theta}\right)$-distortion.  Noting that for all $t$, we
  have $1 - \cos(t) \leq \frac{t^2}{2}$ and using $\theta \leq 1$, we
  obtain that $\frac{1 - \cos\theta}{\cos\theta} \leq
  \frac{\theta^2}{2 - \theta^2} \leq \theta^2$ and therefore $\Rrel$
  is a multiplicative $\theta^2$-distortion. Applying Lemma
  \ref{lemma:relation-multiplicative-additive}, it follows that $\Rrel$
  is an additive $(\Diam{U} \times \theta^2)$-distortion.
\end{proof}

\begin{lemma}[Distortion in the noisy case]
  \label{lemma:relation-from-plane-zero-to-plane-one}
  Suppose $P \subseteq \Offset \M \delta$ for some $\delta \leq \frac{\reach}{2}$. Consider a point $z
  \in \Rspace^\Dim$ and two $d$-dimensional spaces $H_0$ and
  $H_1$. Suppose that there is $\theta \leq 1$ such that for $i \in
  \{0,1\}$
  \[
   \sup_{m,m' \in \pi_\M(\Offset \M \delta \cap B(z,\rho))} \angle(H_i,mm') \leq  \theta.
   \]
   Then, the binary relation $ \Rrel = \{(\pi_{H_0}(a), \pi_{H_1}(a))
   \mid a \in \Offset \M \delta \cap B(z,\rho) \} $ is an additive
   $A$-distortion for $A = 4\delta\theta+4\rho\theta^2$. If
   furthermore $\Sep P > 2 A + 6 \delta$, the
   restricted relation $ \Qrel = \{(\pi_{H_0}(p), \pi_{H_1}(p)) \mid
   p \in P \cap B(z,\rho) \} $ is one-to-one.
\end{lemma}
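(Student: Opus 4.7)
The strategy is to reduce the noisy setting to the non-noisy case handled by Lemma~\ref{lemma:perfect-relation-from-plane-zero-to-plane-one}, by routing each point $a \in \Offset \M \delta \cap B(z,\rho)$ through its manifold projection $m = \pi_\M(a) \in U := \pi_\M(\Offset \M \delta \cap B(z,\rho))$ and accounting for the error introduced. For two points $a, a'$ with projections $m, m'$, I would decompose the quantity $\bigl|\,\|\pi_{H_0}(a)-\pi_{H_0}(a')\|-\|\pi_{H_1}(a)-\pi_{H_1}(a')\|\,\bigr|$ via triangle inequalities into three pieces: two ``noise-correction'' terms $\bigl|\,\|\pi_{H_i}(a)-\pi_{H_i}(a')\|-\|\pi_{H_i}(m)-\pi_{H_i}(m')\|\,\bigr|$ for $i\in\{0,1\}$, plus a single ``plane-switch'' term $\bigl|\,\|\pi_{H_0}(m)-\pi_{H_0}(m')\|-\|\pi_{H_1}(m)-\pi_{H_1}(m')\|\,\bigr|$ involving only manifold points.

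The plane-switch term is exactly the content of Lemma~\ref{lemma:perfect-relation-from-plane-zero-to-plane-one} applied to $U$: since $\|m-m'\|\leq\|m-a\|+\|a-a'\|+\|a'-m'\|\leq 2(\rho+\delta)$, and since $\delta\leq\rho$ under the ambient hypotheses, this term contributes at most $4\rho\theta^{2}$. The two noise-correction terms hinge on the observation that $a-m$ is orthogonal to $\Tangent{m}\M$ by definition of the projection. I would then upgrade the chord-angle hypothesis to the tangent-plane angle bound $\angle(H_i,\Tangent{m}\M)\leq\theta$ via a limit argument: since $\M\cap B(z,\rho)\subseteq U$, every tangent direction at $m$ is a limit of chord directions $mm'/\|m-m'\|$ with $m'$ ranging in this subset. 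Projecting the normal vector $a-m$ of length at most $\delta$ onto $H_i$ then has length at most $\sin\theta\cdot\delta\leq\theta\delta$, so each noise-correction term is bounded by $2\theta\delta$. Summing all three contributions yields the claimed $A=4\theta\delta+4\rho\theta^{2}$.

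For the one-to-one claim, I would note that $\Qrel\subseteq\Rrel$ inherits the additive $A$-distortion. By Lemma~\ref{lemma:relation-multiplicative-additive} combined with Remark~\ref{remark:multiplicative-distortion-one-to-one}, it suffices to verify that $\Sep{Q_0}>A$, where $Q_0 = \Domain\Qrel$. For distinct $p,p' \in P \cap B(z,\rho)$, the chain $\|p-p'\|\geq\Sep P$, then $\|\pi_\M(p)-\pi_\M(p')\|\geq\Sep P-2\delta$, then the near-isometry $\cos\theta\cdot\|\pi_\M(p)-\pi_\M(p')\|\leq\|\pi_{H_0}(\pi_\M(p))-\pi_{H_0}(\pi_\M(p'))\|$ coming from the perfect-case analysis, and finally a $2\theta\delta$ noise correction, gives $\|\pi_{H_0}(p)-\pi_{H_0}(p')\|\geq\cos\theta\,(\Sep P-2\delta)-2\theta\delta$. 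Under the hypothesis $\Sep P>2A+6\delta$, together with $\theta\leq 1$ (and the implicit $\cos\theta\geq\cos(\pi/6)$ from the surrounding theorem), this lower bound strictly exceeds $A$, as required.

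The step I expect to be most delicate is the upgrade from the chord-angle hypothesis on $U$ to the tangent-plane angle bound $\angle(H_i,\Tangent{m}\M)\leq\theta$. Without this sharpening, the noise-correction terms carry a factor $\delta$ rather than $\theta\delta$, and the resulting distortion bound is not strong enough to match the stated $A=4\theta\delta+4\rho\theta^2$ nor to drive the one-to-one argument under the separation condition $\Sep P>2A+6\delta$. Handling boundary points of $U$ where the limit argument is less automatic will probably require a continuity step, using continuity of tangent planes and of the angle function.
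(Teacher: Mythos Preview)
Your proposal is correct and follows essentially the same decomposition as the paper: route through the manifold projections, apply Lemma~\ref{lemma:perfect-relation-from-plane-zero-to-plane-one} to the projected points, and add two noise-correction terms bounded via $\angle(H_i,\Tangent{a^*}\M)\leq\theta$; the one-to-one part is likewise handled by lower-bounding $\Sep{Q_0}$ and invoking Lemma~\ref{lemma:relation-multiplicative-additive} with Remark~\ref{remark:multiplicative-distortion-one-to-one}. Two small remarks: the paper bounds $\Diam U\leq 4\rho$ using the Lipschitz constant $\frac{\reach}{\reach-\delta}\leq 2$ of $\pi_\M$ (Federer), which needs only the stated hypothesis $\delta\leq\reach/2$ rather than your imported $\delta\leq\rho$; and the paper, like you, asserts $\angle(H_i,\Tangent{a^*}\M)\leq\theta$ directly from the chord-angle hypothesis without spelling out the limit/continuity argument you flag as delicate, so your caution there is well placed but not a divergence from the paper.
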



\begin{proof}
  Whenever the projection of a point $a \in \Rspace^\Dim$ onto $\M$ is well-defined,
  let us write $a^* = \pi_\M(a)$ for short. Observe that for all $a
  \in \Offset \M \delta \cap B(z,\rho)$ and all $i \in \{0,1\}$,
  $\angle(H_i, \Tangent {a^*} \M) \leq \theta$ and consequently,
  \begin{equation*}
  \|\pi_{H_i}(a) - \pi_{H_i}(a^*)\| \leq \delta \sin \theta.  
  \end{equation*}
  Let us bound from above the diameter of the set $U = \pi_\M(\Offset
  \M \delta \cap B(z,\rho))$. We know from \cite[page 435]{federer-59}
  that for $0 \leq \delta < \reach$ the projection map $\pi_\M$
  onto $\M$ is $\left( \frac{\reach}{\reach - \delta}
  \right)$-Lipschitz for points at distance less than $\delta$ from
  $\M$. For any two points $a,b \in \Offset \M \delta \cap B(z,\rho)$,
  we thus have
  \begin{equation*}
    \| a^* - b^* \| \leq \frac{\reach}{\reach - \delta} \times \| a - b\| \leq 4 \rho
  \end{equation*}
  and therefore $\Diam{U} \leq 4 \rho$. Applying Lemma
  \ref{lemma:perfect-relation-from-plane-zero-to-plane-one}, we get that for all
  $a,b \in \Offset \M \delta \cap B(z,\rho)$:
  \begin{equation*}
    \abs{ \| \pi_{H_1}(a^*) - \pi_{H_1}(b^*) \| - \| \pi_{H_0}(a^*) - \pi_{H_0}(b^*) \| } \leq 4 \rho \theta^2.
  \end{equation*}
  Let us introduce $\Delta_i = \| \pi_{H_i}(a) - \pi_{H_i}(b) \|$ and
  $\Delta_i^* = \| \pi_{H_i}(a^*) - \pi_{H_i}(b^*) \|$. We have
  \[
  | \Delta_i - \Delta_i^* | \leq \|\pi_{H_i}(a) - \pi_{H_i}(a^*)\| + \|\pi_{H_i}(b) - \pi_{H_i}(b^*)\| \leq 2 \delta \theta
  \]
  and therefore $| \Delta_1 - \Delta_0 | \leq | \Delta_1 - \Delta_1^*
  | + | \Delta_1^* - \Delta_0^* | + | \Delta_0^* - \Delta_0 | \leq 4
  \delta \theta + 4 \rho \theta^2$. It follows that $\Rrel$ is an
  additive $A$-distortion for $A=4\delta\theta+4\rho\theta^2$
  and so is its restricted relation $\Qrel$. Writing $Q_0 =
  \Domain{\Qrel}$, we now suppose in addition that $\Sep P > 2A+6\delta$ and deduce that $\Sep{Q_0} > A$. Using
  $\cos\theta \geq \frac{1}{2}$, we get that for all $a,b \in P \cap
  B(z,\rho)$
  \begin{align*}
    \| \pi_{H_0}(a) - \pi_{H_0}(b) \| &\geq \| \pi_{H_0}(a^*) - \pi_{H_0}(b^*) \| -  \|\pi_{H_0}(a) - \pi_{H_0}(a^*)\|
    - \|\pi_{H_0}(b^*) - \pi_{H_0}(b)\| \\
    &\geq  \|a^*-b^*\| \cos \theta - 2 \delta \theta \\
    &\geq \left( \|a - b\| - \|a - a^*\| - \|b - b^*\| \right) \cos \theta  - 2 \delta \theta \\
    &\geq \frac{1}{2} \left( \|a -b\| - 2 \delta \right) - 2\delta\theta.
  \end{align*}
  Thus, $\Sep{Q_0} \geq \frac{1}{2} \Sep{P} - 3 \delta >
  A$. Applying Lemma \ref{lemma:relation-multiplicative-additive}, we
  get that $\Qrel$ is a multiplicative $\Psi$-distortion for
  $\Psi = \frac{A}{\Sep{Q_0}-A}$ and using Remark~\ref{remark:multiplicative-distortion-one-to-one}, we conclude that
  $\Qrel$ is one-to-one.
\end{proof}


\subsection{Technical lemma}
\label{section:technical-lemma}

The next lemma provides conditions under which $\sigma$ is Delaunay
stable for $P$ at scale $\rho$ with respect to $\{(h_i,H_i)\}_{i \in
  \{0,1\}}$. Roughly speaking, our conditions say that for each pair
$(h_i, H_i)$, we need $h_i$ to be ``close'' to $\M$, $h_0$ and $h_1$
to be ``close'' to one another and $H_i$ to make a
``small'' angle with \M ``near'' $\Conv \sigma$.  Precisely:

\begin{lemma}[Technical lemma]
  \label{lemma:stability-from-plane-zero-to-plane-one}
  Let $\delta \geq 0$, $0 \leq \varepsilon \leq \frac{\rho}{16}$, $0
  \leq \theta \leq \frac{\pi}{6}$ and $A = 4\delta\theta + 4
  \rho\theta^2$ and assume that $\rho + \delta <
  \frac{\reach}{3}$. Suppose that $P \subseteq \Offset \M \delta$, $\M
  \subseteq \Offset P \varepsilon$ and $\Sep{P} > 2 A + 6
  \delta$. Consider a $d$-simplex $\sigma \subseteq P$, a
  $d$-dimensional space $H_0$ passing through a point $h_0$ and a
  $d$-dimensional space $H_1$ passing through a point $h_1$. For $i
  \in \{ 0, 1 \}$, write $\sigma_i = \pi_{H_i}(\sigma)$. Suppose that
  $\sigma_0$ is $\zeta$-protected with respect to $\pi_{H_0}(P \cap
  B(h_0,2\rho))$ and assume furthermore that the
  following hypotheses are satisfied:
  \begin{enumerate}[label=\styleitem{(\arabic*)},parsep=1pt,itemsep=1pt,topsep=4pt]
  \item \label{hypo:non-degenerate} For $i \in \{0,1\}$, $\sigma_i$ has dimension $d$;
  \item \label{hypo:cin-conv} For $i \in \{0,1\}$, $h_i \in \Conv(\sigma_i)$;
  \item \label{hypo:close-manifold} For $i \in \{0,1\}$, $d(h_i,\M) \leq \frac{\rho}{4}$;
  \item \label{hypo:angle}
    For $0 \leq i,j \leq 1$, $  \sup_{m,m' \in \pi_\M(\Offset \M \delta \cap B(h_j,\rho))} \angle(H_i,mm') \leq \theta$.
  \item \label{hypo:small-circumradius} $\|h_0-h_1\| \leq 4 \varepsilon$ whenever $R(\sigma_0) \leq
    \varepsilon$ or $R(\sigma_1) \leq \varepsilon$;
  \item \label{hypo:simplex-enclosed-in-ball} For $0 \leq i,j \leq 1$ with $i \neq j$, the following holds:
    $R(\sigma_i) \leq \varepsilon \implies \sigma \subseteq
    B(h_{j},\rho)$;
  \item \label{hypo:protection}  $2 A \left( 1 + \frac{2 d \varepsilon}{\height{\sigma_i}}
      \right) < \zeta$ for $i \in \{0,1\}$.
  \end{enumerate}
  Then, $\sigma$ is Delaunay stable for $P$ at scale $\rho$ with respect to $\{(h_i,H_i)\}_{i \in \{0,1\}}$. Equivalently, the following two propositions are equivalent:
  \begin{itemize}
  \item $\sigma \subseteq P \cap B(h_0, \rho)$ and $\sigma_0 \in \Del{\pi_{H_0}(P \cap B(h_0,\rho))}$;
  \item $\sigma \subseteq P \cap B(h_1, \rho)$ and $\sigma_1 \in \Del{\pi_{H_1}(P \cap B(h_1,\rho))}$.
  \end{itemize}
  Furthermore, whenever one of the two above propositions
    holds, $\sigma_1$ is protected with respect to $\pi_{H_1}(P \cap B(h_1,\rho))$, $R(\sigma_0) \leq \varepsilon$ and $R(\sigma_1) \leq \varepsilon$.
\end{lemma}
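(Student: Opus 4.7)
The strategy is to reduce the equivalence to Lemma~\ref{lemma:delaunay-stability-through-distortion}, using Lemma~\ref{lemma:relation-from-plane-zero-to-plane-one} to build the needed distortion relation, Lemma~\ref{lemma:bounding-circumradius} to extract the circumradius bounds, and Lemma~\ref{lemma:projection-surjective} to realise circumcenters as projections of points on $\M$. A key simplification is that the assumed $\zeta$-protection of $\sigma_0$ with respect to $\pi_{H_0}(P\cap B(h_0,2\rho))$ automatically forces $\sigma_0\in\Del{\pi_{H_0}(P\cap B)}$ for any ball $B\subseteq B(h_0,2\rho)$; this disposes of the $H_0$-side Delaunay condition and also supplies the ``$\sigma_0\in\Del{Q_0}$'' precondition needed when invoking the stability lemma.

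To prove (I)$\Rightarrow$(II) I would first apply Lemma~\ref{lemma:bounding-circumradius} with $(h,H)=(h_0,H_0)$: using $h_0\in\Conv\sigma_0=\pi_{H_0}(\Conv\sigma)$ from hypothesis \refhypo{hypo:cin-conv} together with (I), this yields $R(\sigma_0)\leq\varepsilon$. Hypotheses \refhypo{hypo:small-circumradius} and \refhypo{hypo:simplex-enclosed-in-ball} then give $\|h_0-h_1\|\leq 4\varepsilon$ and $\sigma\subseteq P\cap B(h_1,\rho)$, and in particular $B(h_1,\rho)\subseteq B(h_0,2\rho)$ since $4\varepsilon\leq\rho/4$. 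Next I would introduce the relation
\[\Rrel=\{(\pi_{H_0}(a),\pi_{H_1}(a))\mid a\in\Offset\M\delta\cap B(h_1,\rho)\}\]
and its restriction $\Qrel$ to $P\cap B(h_1,\rho)$; Lemma~\ref{lemma:relation-from-plane-zero-to-plane-one} applied with $z=h_1$, invoking hypothesis \refhypo{hypo:angle} (with $j=1$) together with $\Sep P>2A+6\delta$, shows that $\Qrel$ is a one-to-one additive $A$-distortion with $Q_0=\Domain{\Qrel}=\pi_{H_0}(P\cap B(h_1,\rho))$ and $Q_1=\Range{\Qrel}=\pi_{H_1}(P\cap B(h_1,\rho))$. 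The inclusion $Q_0\subseteq\pi_{H_0}(P\cap B(h_0,2\rho))$ transfers the assumed $\zeta$-protection of $\sigma_0$ to $Q_0$, hence $\sigma_0\in\Del{Q_0}$. To check that $Z(\sigma_0)\in\Domain{\Rrel}$, I would use $\|Z(\sigma_0)-h_0\|\leq R(\sigma_0)\leq\varepsilon\leq\rho/4$ so that Lemma~\ref{lemma:projection-surjective} applied at $(H,h)=(H_0,h_0)$ produces $m\in\M\cap B(h_0,3\rho/4)$ with $\pi_{H_0}(m)=Z(\sigma_0)$; this $m$ also lies in $B(h_1,\rho)$ by the triangle inequality with $\|h_0-h_1\|\leq\rho/4$, so indeed $Z(\sigma_0)\in\Domain{\Rrel}$. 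Hypothesis \refhypo{hypo:protection} supplies the final inequality required by Lemma~\ref{lemma:delaunay-stability-through-distortion}, whose forward implication then delivers $\sigma_1\in\Del{Q_1}$ together with the protection of $\sigma_1$ with respect to $Q_1$.

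The converse (II)$\Rightarrow$(I) is shorter: applying Lemma~\ref{lemma:bounding-circumradius} with $(h,H)=(h_1,H_1)$ gives $R(\sigma_1)\leq\varepsilon$, hypothesis \refhypo{hypo:simplex-enclosed-in-ball} then yields $\sigma\subseteq P\cap B(h_0,\rho)$, and the Delaunay condition $\sigma_0\in\Del{\pi_{H_0}(P\cap B(h_0,\rho))}$ follows for free from the $\zeta$-protection hypothesis because $B(h_0,\rho)\subseteq B(h_0,2\rho)$. With both (I) and (II) in hand, the remaining parts of the ``furthermore'' clause --- $R(\sigma_0)\leq\varepsilon$ and the protection of $\sigma_1$ with respect to $\pi_{H_1}(P\cap B(h_1,\rho))$ --- follow by re-running Lemma~\ref{lemma:bounding-circumradius} on the other pair and by reusing the forward step of the stability lemma above.

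The main obstacle I expect is the careful bookkeeping of ball inclusions, keeping track of which of $B(h_0,\rho)$, $B(h_1,\rho)$ and $B(h_0,2\rho)$ is relevant at each step, together with verifying that the angle hypothesis needed by Lemma~\ref{lemma:projection-surjective} is indeed available within the tolerances provided by hypothesis \refhypo{hypo:angle}, so that the circumcenters $Z(\sigma_i)$ really sit in the domain and range of the distortion relation. The remaining manipulations reduce to routine triangle inequalities backed by $\varepsilon\leq\rho/16$ and $\|h_0-h_1\|\leq 4\varepsilon$.
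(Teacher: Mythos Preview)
Your ``key simplification'' is false, and it breaks the reverse implication. The paper's notion of $\zeta$-protection says only that every point of $Q\setminus\sigma_0$ has distance greater than $\zeta$ from the \emph{sphere} $S(\sigma_0)$; it does not say such points lie outside the enclosed ball. A point sitting at the circumcenter $Z(\sigma_0)$ is at distance $R(\sigma_0)$ from $S(\sigma_0)$ and hence perfectly compatible with $\zeta$-protection, yet it destroys the Delaunay property. The paper itself confirms this reading: in the proof of Lemma~\ref{lemma:delaunay-stability-through-distortion} it explicitly considers the case ``$\sigma_0$ is $\zeta$-protected with respect to $Q_0$'' together with ``$\sigma_0\notin\Del{Q_0}$''. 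So your sentence ``the Delaunay condition $\sigma_0\in\Del{\pi_{H_0}(P\cap B(h_0,\rho))}$ follows for free from the $\zeta$-protection hypothesis'' in the (II)$\Rightarrow$(I) step is simply wrong, and since that step rests on nothing else, the reverse direction collapses.

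The paper does not shortcut this direction. Starting from (II) it first extracts $R(\sigma_1)\leq\varepsilon$ (via Lemma~\ref{lemma:bounding-circumradius}), uses hypotheses \refhypo{hypo:small-circumradius} and \refhypo{hypo:simplex-enclosed-in-ball} to place $\sigma$ in $B(h_0,\rho)$ and to transfer the Delaunay property of $\sigma_1$ to $\Del{\pi_{H_1}(P\cap B(h_0,\rho))}$, then builds the distortion relation over $B(h_0,\rho)$ and invokes the \emph{second} bullet of Lemma~\ref{lemma:delaunay-stability-through-distortion} (the one requiring $Z(\sigma_1)\in\Range{\Rrel}$) to pull the Delaunay property back to $\sigma_0$. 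In other words, you still need the stability lemma---and specifically its backward half---to get $\sigma_0\in\Del{Q_0}$; protection alone does not do it. In your forward step you make the same error when you write ``hence $\sigma_0\in\Del{Q_0}$''; that can be repaired (since (I) already gives the Delaunay property on $B(h_0,\rho)$ and the small circumball lets you move to $B(h_1,\rho)$), but the backward step cannot.
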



\begin{proof}
  We prove the lemma by showing that the following four propositions are equivalent:
  \begin{enumerate}[label=\styleitem{(\alph*)},parsep=1pt,itemsep=1pt,topsep=4pt]
  \item \label{item:del-X-x} $\sigma \subseteq P \cap B(h_0, \rho)$ and $\sigma_0 \in \Del{\pi_{H_0}(P \cap B(h_0,\rho))}$;

  \item \label{item:del-X-y}  $\sigma \subseteq P \cap B(h_1, \rho)$,
    $\sigma_0 \in \Del{ \pi_{H_0}(P \cap B(h_1,\rho))}$ and $R(\sigma_0) \leq \varepsilon$;

  \item \label{item:del-Y-y} $\sigma \subseteq P \cap B(h_1, \rho)$
    and $\sigma_1 \in \Del{\pi_{H_1}(P \cap B(h_1,\rho))}$ with
      $\sigma_1$ being protected with respect to $\pi_{H_1}(P \cap B(h_1,\rho))$;

  \item \label{item:del-Y-x} $\sigma \subseteq P \cap B(h_0, \rho)$,
    $\sigma_1 \in \Del{ \pi_{H_1}(P \cap B(h_0,\rho))}$ and $R(\sigma_1) \leq \varepsilon$.

  \end{enumerate}


  \smallskip \noindent Let us prove \refitem{item:del-X-x} $\implies$
  \refitem{item:del-X-y}.  Suppose $\sigma \subseteq P \cap B(h_0,
  \rho)$ and $\sigma_0 \in \Del{\pi_{H_0}(P \cap B(h_0,\rho))}$.
  Applying Lemma~\ref{lemma:bounding-circumradius} with $(H,h) =
  (H_0,h_0)$, we obtain that $R(\sigma_0) \leq \varepsilon$.
  Using $\|h_0 - h_1\| \leq 4 \varepsilon$ and
  $\|Z(\sigma_0)-h_0\| \leq R(\sigma_0) \leq \varepsilon$, we obtain
  \[
  B(Z(\sigma_0),R(\sigma_0)) \subseteq B(h_0,2\varepsilon) \subseteq B(h_1, 6\varepsilon)
  \subseteq B({h_0},\rho) \cap B(h_1,\rho)
  \]
  and therefore $\sigma_0 \in \Del{\pi_{H_0}(P \cap
    B({h_1},\rho))}$. Since $R(\sigma_0) \leq \varepsilon$, our sixth
  hypothesis implies $\sigma \subseteq B(h_0,\rho)$.  This proves
  \refitem{item:del-X-x} $\implies$ \refitem{item:del-X-y}.


  \smallskip \noindent Let us prove \refitem{item:del-X-y} $\implies$
  \refitem{item:del-Y-y}. Suppose $\sigma \subseteq P \cap
  B(h_1,\rho)$, $\sigma_0 \in \Del{\pi_{H_0}(P \cap B(h_1,\rho))}$ and
  $R(\sigma_0) \leq \varepsilon$. Consider the relations
  \begin{align*}
  \Rrel &= \{(\pi_{H_0}(a), \pi_{H_1}(a)) \mid a \in \Offset \M \delta
  \cap B(h_1,\rho) \},\\
  \Qrel &= \{(\pi_{H_0}(p), \pi_{H_1}(p)) \mid p \in P \cap B(h_1,\rho) \},\\
  \Srel &= \{(\pi_{H_0}(v), \pi_{H_1}(v)) \mid v \in \sigma \}.
  \end{align*}
  Let $Q_0 = \pi_{H_0}(P \cap B(h_1,\rho))$ and $Q_1 = \pi_{H_1}(P
  \cap B(h_1,\rho))$. By construction, $Q_0 = \Domain \Qrel$, $Q_1 =
  \Range \Qrel$, $\sigma_0 = \Domain \Srel$ and $\sigma_1 = \Range
  \Srel$.  Note that $\|h_1 - h_i\| \leq 4\varepsilon$ and for $i \in
  \{0,1\}$, we have $d(h_i,\M) \leq \frac{\rho}{4}$ and
  \[
  \sup_{m,m' \in \pi_\M(\Offset \M \delta \cap B(h_1,\rho))} \angle(H_i,mm') \leq \theta.
  \]
  Applying Lemma~{\ref{lemma:relation-from-plane-zero-to-plane-one}}
  with $z=h_1$, the relation $\Rrel$ is an additive $A$-distortion and
  the relation $\Qrel$ is one-to-one.  Let us prove that $Z(\sigma_0)
  \in \Domain{\Rrel}$.  Using $\|Z(\sigma_0) - h_0 \| \leq \varepsilon
  \leq \frac{\rho}{4}$ and $\|h_0 - h_1\| \leq 4 \varepsilon \leq
  \frac{\rho}{4}$ and applying Lemma \ref{lemma:projection-surjective}
  with $(H,h)=(H_0,h_0)$, we get that
  \[
  Z(\sigma_0) \in H_0 \cap B(h_0,\frac{\rho}{4}) \subseteq \pi_{H_0}( \M \cap
  B(h_0,\frac{3\rho}{4}) ) \subseteq \pi_{H_0}(\M \cap B(h_1, \rho)) \subseteq
  \Domain{\Rrel}.
  \]
  Note that $\sigma_0$ is $\zeta$-protected with respect to
  $Q_0$. Applying
  Lemma~\ref{lemma:delaunay-stability-through-distortion}, we get that
  $R(\sigma_0) \leq \varepsilon$, $Z(\sigma_0) \in \Domain{\Rrel}$ and
  $\sigma_0 \in \Del{Q_0}$ imply $\sigma_1 \in \Del {Q_1}$ and $\sigma_1$ is protected with respect to $Q_1$. This
  proves \refitem{item:del-X-y} $\implies$ \refitem{item:del-Y-y}.


  \smallskip \noindent For proving \refitem{item:del-Y-y} $\implies$
  \refitem{item:del-Y-x}, we proceed as in the proof of
  \refitem{item:del-X-x} $\implies$ \refitem{item:del-X-y}, 
  switching the role of indices $0$ and $1$.


  \smallskip \noindent Let us prove \refitem{item:del-Y-x} $\implies$
  \refitem{item:del-X-x}.  Suppose $\sigma \subseteq P \cap
  B(h_0,\rho)$, $\sigma_1 \in \Del{\pi_{H_1}(P \cap B(h_0,\rho))}$ and
  $R(\sigma_1) \leq \varepsilon$. Consider the relations
  \begin{align*}
  \Rrel &= \{(\pi_{H_0}(a), \pi_{H_1}(a)) \mid a \in \Offset \M \delta
  \cap B(h_0,\rho) \},\\
  \Qrel &= \{(\pi_{H_0}(p), \pi_{H_1}(p)) \mid p \in P \cap B(h_0,\rho) \},\\
  \Srel &= \{(\pi_{H_0}(v), \pi_{H_1}(v)) \mid v \in \sigma \}.
  \end{align*}
  Let $Q_0 = \pi_{H_0}(P \cap B(h_0,\rho))$ and $Q_1 = \pi_{H_1}(P
  \cap B(h_0,\rho))$. By construction, $Q_0 = \Domain \Qrel$, $Q_1 =
  \Range \Qrel$, $\sigma_0 = \Domain \Srel$ and $\sigma_1 = \Range
  \Srel$.  Note that $\|h_0 - h_i\| \leq 4\varepsilon$ and for $i \in
  \{0,1\}$, we have $d(h_i,\M) \leq \frac{\rho}{4}$ and
  \[
  \sup_{m,m' \in \pi_\M(\Offset \M \delta \cap B(h_0,\rho))} \angle(H_i,mm') \leq \theta.
  \]
  Applying Lemma~{\ref{lemma:relation-from-plane-zero-to-plane-one}}
  with $z=h_0$, the relation $\Rrel$ is an additive $A$-distortion
   and the relation $\Qrel$ is one-to-one.  Let us prove that
  $Z(\sigma_1) \in \Range{\Rrel}$.  Using $\|Z(\sigma_1) - h_1 \| \leq
  \varepsilon \leq \frac{\rho}{4}$ and $\|h_0 - h_1\| \leq 4
  \varepsilon \leq \frac{\rho}{4}$ and applying Lemma
  \ref{lemma:projection-surjective} with $(H,h)=(H_1,h_1)$, we get
  that
  \[
  Z(\sigma_1) \in H_1 \cap B(h_1,\frac{\rho}{4}) \subseteq
  \pi_{H_1}( \M \cap B(h_1,\frac{3\rho}{4}) ) \subseteq
  \pi_{H_1}(\M \cap B(h_0,\rho)) \subseteq \Range{\Rrel}.
  \]
  Because $\sigma_0$ is $\zeta$-protected with respect to $Q_0$,
  we can  apply Lemma~\ref{lemma:delaunay-stability-through-distortion} and get that
  $R(\sigma_1) \leq \varepsilon$, $Z(\sigma_1) \in \Range{\Rrel}$ and $\sigma_1 \in
  \Del{Q_1}$ imply $\sigma_0 \in \Del{Q_0}$. This proves
  \refitem{item:del-Y-x} $\implies$ \refitem{item:del-X-x}.
\end{proof}


\section{Proof of the second reconstruction theorem}
\label{section:FlatDel-final}

In this section, we first show that under the assumptions of Theorem
\ref{theorem:homeomorphism-from-sampling-conditions}, $\rho$-small
$d$-simplices of $P$ are Delaunay stable for $P$ at scale $\rho$ with respect
to their standard neighborhood
(Lemma~\ref{lemma:delloc-characterization}). We then show that
whenevery the assumptions of
Theorem~\ref{theorem:homeomorphism-from-sampling-conditions} are
verified, so are the assumptions of
Theorem~\ref{theorem:homeomorphism-from-structural-conditions}
(Lemma~\ref{lemma:from-sampling-to-structural-conditions}). Finally,
we assemble the pieces and prove
Theorem~\ref{theorem:homeomorphism-from-sampling-conditions}.

Next lemma strengthens Remark
\ref{remark:projection-well-defined-for-simplex}. It says that if a
subset $\sigma \subseteq \Rspace^\Dim$ is sufficiently small and sufficiently close to a
subset $A \subseteq \Rspace^\Dim$ compare to its reach, then the convex hull of
$\sigma$ is not too far away from $A$.

\begin{lemma}
  \label{lemma:small-tubular-neigborhood}
  Let $16\delta \leq \rho \leq \frac{\Reach A}{3}$. If the subset $\sigma
  \subseteq \Offset A \delta$ is $\rho$-small, then $\Conv \sigma
  \subseteq \Offset A {\frac{\rho}{4}}$.
\end{lemma}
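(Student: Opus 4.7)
The plan is to reduce the bound on $\Conv\sigma$ to a bound on the convex hull of the projected set $\pi_A(\sigma) \subseteq A$, and then exploit the positive reach of $A$ via a Federer-type quadratic estimate combined with a variance identity.

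First I would fix an arbitrary $x \in \Conv\sigma$ and, by Carath\'eodory, write $x = \sum_i \lambda_i v_i$ with $v_i \in \sigma$, $\lambda_i \geq 0$, $\sum_i \lambda_i = 1$. Since $\sigma \subseteq \Offset{A}{\delta}$ with $\delta \leq \rho/16 < \Reach{A}$, each projection $v_i^{*} := \pi_A(v_i) \in A$ is well defined and satisfies $\|v_i - v_i^{*}\| \leq \delta$. Introducing $y := \sum_i \lambda_i v_i^{*} \in \Conv{\pi_A(\sigma)}$ gives $\|x - y\| \leq \delta$, reducing the problem to bounding $\eta := d(y, A)$.

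The main technical step is a quadratic estimate on $\eta$. Writing $r = \Reach{A}$ and letting $c$ be the center of a ball of radius $\rho$ enclosing $\sigma$, one has $\{v_i^{*}\} \subseteq B(c, \rho + \delta)$; the crude bound $\eta \leq 2(\rho + \delta) < r$ ensures that $b := \pi_A(y)$ exists. Assuming $\eta > 0$, the unit vector $\nu := (y - b)/\eta$ is normal to $A$ at $b$, so Federer's reach inequality yields $(v_i^{*} - b) \cdot \nu \leq \|v_i^{*} - b\|^2/(2r)$ for every $i$. Averaging over the $\lambda_i$ and invoking the variance-style identity
\[
\sum_i \lambda_i \|v_i^{*} - b\|^2 = \sum_i \lambda_i \|v_i^{*} - y\|^2 + \eta^2,
\]
together with $\sum_i \lambda_i \|v_i^{*} - y\|^2 = \sum_i \lambda_i \|v_i^{*} - c\|^2 - \|y - c\|^2 \leq (\rho + \delta)^2$, I would obtain the implicit inequality $\eta \leq ((\rho + \delta)^2 + \eta^2)/(2r)$, which solves to $\eta \leq r - \sqrt{r^2 - (\rho + \delta)^2}$.

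Combining the two estimates gives $d(x, A) \leq \delta + r - \sqrt{r^2 - (\rho + \delta)^2}$, and the closing step is to verify, using $\delta \leq \rho/16$ and $r \geq 3\rho$, that this quantity is at most $\rho/4$. I expect the Federer-plus-variance quadratic bound to be the conceptual heart of the proof; the main obstacle I foresee is the closing arithmetic, which the hypotheses pin down rather tightly in order to yield the clean bound $\rho/4$.
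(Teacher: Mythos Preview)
Your Federer-plus-variance argument is the right idea, but the detour through the intermediate point $y=\sum_i\lambda_i v_i^{*}$ costs you just enough to make the closing arithmetic fail. At the extreme values $\delta=\rho/16$ and $r=\Reach A=3\rho$ your bound evaluates to
\[
\delta + r - \sqrt{r^2-(\rho+\delta)^2}
= \frac{\rho}{16} + 3\rho - \rho\sqrt{9 - (17/16)^2}
\approx 0.257\,\rho,
\]
which exceeds $\rho/4$. The loss comes from paying $\delta$ twice: once in $\|x-y\|\le\delta$, and once because the projected points $v_i^{*}$ live in a ball of radius $\rho+\delta$ rather than $\rho$.

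The paper itself simply invokes an external lemma yielding the sharper estimate $d(x,A)\le \reach-\sqrt{(\reach-\delta)^2-\rho^2}$ (with $\reach=\Reach A$), which does satisfy the $\rho/4$ bound under the stated hypotheses. You can recover exactly this sharper bound with your own machinery by skipping the projection step: set $b=\pi_A(x)$, $\eta=\|x-b\|$, $\nu=(x-b)/\eta$, and look at the tangent-ball center $m=b+\reach\,\nu$. Since the open ball $B(m,\reach)$ misses $A$, every $v_i\in\Offset A\delta$ satisfies $\|v_i-m\|\ge \reach-\delta$. Now apply your variance identity to the \emph{original} vertices $v_i$ (not their projections):
\[
(\reach-\delta)^2 \le \sum_i\lambda_i\|v_i-m\|^2
= \sum_i\lambda_i\|v_i-x\|^2 + \|x-m\|^2
\le \rho^2 + (\reach-\eta)^2,
\]
giving $\eta\le \reach-\sqrt{(\reach-\delta)^2-\rho^2}$ directly. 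So your strategy is sound; the only genuine gap is that the specific decomposition through $y$ is slightly too lossy for the constants the lemma demands.
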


\begin{proof}
  Let $\reach = \Reach A$. Applying Lemma 14 in \cite{socg10-convex},
  we get that $\Conv \sigma \subseteq \Offset A r$ for $r = \reach -
  \sqrt{(\reach - \delta)^2 - \rho^2}$. Since $\delta \leq
  \frac{\rho}{16}$, we deduce that $\frac{r}{\reach} \leq 1 -
  \sqrt{\left(1 - \frac{\rho}{16\reach}\right)^2 -
    \frac{\rho}{\reach}^2}$ and since for all $0 \leq t \leq
  \frac{1}{3}$, we have $1 - \sqrt{\left(1-\frac{t}{16}\right)^2-t^2}
  \leq \frac{t}{4}$, we obtain the result.
\end{proof}

\begin{lemma}
  \label{lemma:delloc-characterization}
  Under the assumptions of Theorem
  \ref{theorem:homeomorphism-from-sampling-conditions}, every
  $\rho$-small $d$-simplex $\sigma \subseteq P$ is Delaunay stable for
  $P$ at scale $\rho$ with respect to its standard
  neighborhood. Furthermore, whenever $\sigma \in \Prestar x \rho$ for
  some $x \in \Conv \sigma$, we have that $R(\sigma) \leq \varepsilon$
  and $\pi_{\Tangent {x^\ast} \M}(\sigma)$ is protected with respect
  to $\pi_{\Tangent {x^\ast} \M}(P \cap B({x^\ast},\rho))$.
\end{lemma}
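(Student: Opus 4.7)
The plan is to deduce the claimed Delaunay stability from a pairwise application of the technical Lemma \ref{lemma:stability-from-plane-zero-to-plane-one}. Because Delaunay stability is an equivalence relation on a family of pairs, it suffices to show that the anchor $(c_\sigma,\Aff\sigma)$ is equivalent to every $(x^\ast,\Tangent{x^\ast}\M)$ with $x\in\Conv\sigma$; transitivity then handles any pair drawn from $\Hscr(\sigma)$. The protection parameter $\zeta$ required by the technical lemma will be $\Protection{P}{3\rho}$: since $\sigma$ is also $3\rho$-small and $\pi_{\Aff\sigma}(P\cap B(c_\sigma,2\rho))\subseteq \pi_{\Aff\sigma}(P\cap B(c_\sigma,3\rho))$, the definition of $\Protection{P}{3\rho}$ directly furnishes $\zeta$-protection of $\sigma_0=\sigma$ with respect to $\pi_{\Aff\sigma}(P\cap B(c_\sigma,2\rho))$.

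I would then verify the seven numbered hypotheses of the technical lemma with $(h_0,H_0)=(c_\sigma,\Aff\sigma)$ and $(h_1,H_1)=(x^\ast,\Tangent{x^\ast}\M)$. Hypothesis 2 ($h_i\in\Conv\sigma_i$) holds since $c_\sigma\in\Conv\sigma$, and by Remark \ref{remark:same-projections} we have $\pi_{\Tangent{x^\ast}\M}(x)=x^\ast$ while $x\in\Conv\sigma$. Hypothesis 3 is trivial at $x^\ast\in\M$ and follows for $c_\sigma\in\Conv\sigma$ from Lemma \ref{lemma:small-tubular-neigborhood} applied with $16\delta\le\rho<\reach/3$. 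For hypothesis 5 under $R(\sigma_i)\le\varepsilon$, the bound $\|c_\sigma-x^\ast\|\le 4\varepsilon$ follows from $\|c_\sigma-x\|\le r_\sigma\le\varepsilon$ and $\|x-x^\ast\|\le 2\varepsilon+\delta\le 3\varepsilon$ (routed through any vertex of $\sigma$). Hypothesis 6 then reduces to $16\varepsilon\le\rho$ and $\sigma$ being $\rho$-small. Hypothesis 7 combines \refhypo{hypo:safety-protection} with the observation that projecting through an angle $\theta\le\pi/6$ contracts heights by at most $\cos\theta\ge\sqrt{3}/2$, so $\height{\sigma_i}\ge\tfrac{1}{2}\Height{P}{\rho}$; hypothesis 1 then follows from $\Height{P}{\rho}>0$.

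The main obstacle is hypothesis 4, the uniform angle bound
\[
\sup_{m,m'\in\pi_\M(\Offset\M\delta\cap B(h_j,\rho))}\angle(H_i,mm')\le\theta.
\]
My plan is to decompose, for a carefully chosen $m^\star\in\pi_\M(\Conv\sigma)$,
\[
\angle(H_i,mm')\le\angle(H_i,\Tangent{m^\star}\M)+\angle(\Tangent{m^\star}\M,\Tangent{m}\M)+\angle(\Tangent{m}\M,mm'),
\]
taking $m^\star=x^\ast$ when $h_j=x^\ast$ and $m^\star$ to be a vertex-projection when $h_j=c_\sigma$, so that in each case $\|m^\star-m\|\le\rho+\delta$. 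The first term is then bounded by $\Theta(\sigma)\le\Theta(P,\rho)$ since $H_i,\Tangent{m^\star}\M\in\Hcal(\sigma)$, and the last two by $2\arcsin(\|m^\star-m\|/(2\reach))$ and $\arcsin(\|m-m'\|/(2\reach))\le\arcsin((\rho+\delta)/\reach)$ via the standard tangent-variation estimate applied to chords of length at most $2(\rho+\delta)$. The slack $2\arcsin((\rho+\delta)/\reach)$ built into \refhypo{hypo:safety-angle} is precisely what absorbs these two residual terms, using the concavity-based inequality $2\arcsin(x/2)\le\arcsin(x)$.

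Finally, the ``furthermore'' clause is immediate: if $\sigma\in\Prestar x\rho$ for some $x\in\Conv\sigma$, then by Remark \ref{remark-prestar-alternative-definition} the Delaunay condition required by the technical lemma holds at $(x^\ast,\Tangent{x^\ast}\M)$, hence by the stability just established it also holds at $(c_\sigma,\Aff\sigma)$, and the closing statement of Lemma \ref{lemma:stability-from-plane-zero-to-plane-one} delivers both $R(\sigma)\le\varepsilon$ and the protection of $\pi_{\Tangent{x^\ast}\M}(\sigma)$ with respect to $\pi_{\Tangent{x^\ast}\M}(P\cap B(x^\ast,\rho))$.
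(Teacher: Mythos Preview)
Your approach is exactly the paper's: apply Lemma~\ref{lemma:stability-from-plane-zero-to-plane-one} with $(h_0,H_0)=(c_\sigma,\Aff\sigma)$, $(h_1,H_1)=(x^\ast,\Tangent{x^\ast}\M)$ and $\zeta=\Protection{P}{3\rho}$, and check the seven hypotheses one by one; the paper also derives the ``furthermore'' clause from the closing statement of that lemma.

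Two small quantitative slips are worth tightening. First, in hypothesis~5 you write $r_\sigma\le\varepsilon$; this holds when $R(\sigma_0)\le\varepsilon$ but not a priori when only $R(\sigma_1)\le\varepsilon$. The paper handles both cases uniformly via Lemma~\ref{lemma:seb-preimage}, obtaining $r_\sigma\le r_{\sigma_1}/\cos\theta\le 2\varepsilon/\sqrt{3}$, which still yields $\|c_\sigma-x^\ast\|\le 2r_\sigma+\varepsilon<4\varepsilon$. Second, in hypothesis~4 your bound $\|m^\star-m\|\le\rho+\delta$ is fine for $h_j=x^\ast$ with $m^\star=x^\ast$, but for $h_j=c_\sigma$ a vertex-projection $m^\star=p^\ast$ only gives $\|m^\star-m\|\le 2(\rho+\delta)$ (since $\|p-a\|$ can be as large as $r_\sigma+\rho\le 2\rho$), so your three-term estimate overshoots the slack $2\arcsin((\rho+\delta)/\reach)$. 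The paper sidesteps this by routing through $\Tangent{h_j^\ast}\M$ (both $h_j^\ast$ lie in $\pi_\M(\Conv\sigma)$, so this space is in $\Hcal(\sigma)$) and invoking Lemma~\ref{lemma:general-angle-bound} with $\tau=\{m,m'\}$, $z=h_j$, which keeps the residual angle within a single $\arcsin((\rho+\delta)/\reach)$ and thus well inside the allotted slack.
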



\begin{proof}
  Consider a $\rho$-small $d$-simplex $\sigma \subseteq P$. We note that
  $\sigma$ is Delaunay stable with respect to its standard neighborhood
  if, for all $x \in \Conv \sigma$, the following two propositions are
  equivalent:
  \begin{description}[itemsep=1pt,parsep=1pt,topsep=4pt]
    \item [\ita] $\sigma \subseteq P \cap B(c_\sigma,\rho)$ and
      $\sigma \in \Del{\pi_{\Aff \sigma}(P \cap B(c_\sigma,\rho))}$;
    \item [\itb{x}] $\sigma \subseteq P \cap B(x^\ast,\rho)$ and
      $\pi_{\Tangent {x^\ast} \M}(\sigma) \in \Del{\pi_{\Tangent {x^\ast}
        \M}(P \cap B({x^\ast},\rho))}$.
  \end{description}
  Pick a point $x \in \Conv\sigma$ and set $(h_0,H_0) = (c_\sigma,
  \Aff \sigma)$ and $(h_1,H_1)= (x^\ast, \Tangent {x^\ast} \M)$. We
  thus have to prove that $\sigma$ is Delaunay stable with respect to
  $\{(h_0,H_0), (h_1,H_1)\}$. We do this by applying Lemma
  \ref{lemma:stability-from-plane-zero-to-plane-one}. Let us check
  that the assumptions of Lemma
  \ref{lemma:stability-from-plane-zero-to-plane-one} are indeed
  satisfied for our choice of $h_0$, $H_0$, $h_1$, $H_1$ and with
  $\zeta = \Protection{P}{3\rho}$.

  Let $\sigma_0 = \pi_{H_0}(\sigma)$ and $\sigma_1 =
  \pi_{H_1}(\sigma)$ and note that $\sigma_0 = \sigma$ and $\sigma_1 =
  \pi_{\Tangent {x^\ast} \M}(\sigma)$.  Before we start, let us make
  two observations. Since $H_0, H_1 \in \Hcal(\sigma)$, our assumption
  that $\Theta(P,\rho) \leq \theta - \arcsin
  \frac{\rho+\delta}{\reach}$ implies that $ \angle(H_0,H_1) =
  \angle(\Aff \sigma, \Tangent {x^\ast} \M) \leq \theta $.  Second,
  \begin{equation}
    \label{eq:bounding-SEB-radius}
    r_\sigma \leq \frac{2\varepsilon}{\sqrt{3}},
    \quad\quad \text{whenever $R(\sigma_i) \leq \varepsilon$ for some $i \in \{0,1\}$}.
  \end{equation}
  Indeed, assume $R(\sigma_i) \leq \varepsilon$ for some $i \in
  \{0,1\}$. Then, applying Lemma \ref{lemma:seb-preimage}, we get that
  $r_\sigma = r_{\sigma_0} \leq \frac{r_{\sigma_i}}{\cos
    \angle(H_0,H_i)} \leq \frac{R(\sigma_i)}{\cos\theta} \leq
  \frac{\varepsilon}{\cos\frac{\pi}{6}} =
  \frac{2\varepsilon}{\sqrt{3}}$.  We are now ready to show that the
  hypotheses of Lemma
  \ref{lemma:stability-from-plane-zero-to-plane-one} are satisfied.

  \smallskip \noindent \refitem{hypo:non-degenerate}
  \styleitem{$\sigma_i$  has dimension $d$ for $i
    \in \{0,1\}$.} This is clear for $i=0$ since $\sigma_0 = \sigma$
  and we have assumed that $\height{\sigma} > 0$. For $i=1$, note that
  $\sigma_1 = \pi_{H_1}(\sigma)$ and since
  $\angle(\Aff\sigma,H_1) = \angle(H_0,H_1) < \frac{\pi}{2}$, $\sigma_1$ 
  has dimension $d$.

  \smallskip \noindent \refitem{hypo:cin-conv} \styleitem{$h_i \in \Conv(\sigma_i)$ for $i \in \{0,1\}$.}
    Note that $h_0 \in \Conv \sigma_0$ is equivalent to $c_\sigma \in
    \Conv \sigma$ which is clearly true and $h_1 \in \Conv \sigma_1$
    is equivalent to $x^\ast \in \Conv( \pi_{\Tangent {x^\ast} \M}(\sigma))$
    which is also true because $x^\ast = \pi_{\Tangent {x^\ast} \M}(x) \in
    \pi_{\Tangent {x^\ast} \M}(\Conv \sigma) = \Conv( \pi_{\Tangent {x^\ast}
      \M}(\sigma))$. 

  \smallskip \noindent \refitem{hypo:close-manifold} \styleitem{$d(h_i,\M) \leq
    \frac{\rho}{4}$ for $i \in \{0,1\}$.} This is clearly true for
  $i=1$ since $d(x^\ast,\M)=0$. For $i=0$, we have to
  show that $d(c_\sigma,\M) \leq \frac{\rho}{4}$ which is also true by Lemma \ref{lemma:small-tubular-neigborhood}.

  \smallskip \noindent \refitem{hypo:angle} \styleitem{ For $0 \leq i,j
    \leq 1$, $ \sup_{m,m' \in \pi_\M(\Offset \M \delta \cap
      B(h_j,\rho))} \angle(H_i,mm') \leq \theta$.} Consider $m,m' \in
  \pi_\M(\Offset \M \delta \cap B(h_j,\rho))$. Applying Lemma
  \ref{lemma:general-angle-bound} with $\tau = \{m,m'\}$ and $z =
  h_j$, we obtain
  \begin{align*}
    \angle(H_i, mm') \leq \angle(H_i,H_j) + \angle(H_j,mm') 
    \leq \Theta(\sigma) + \arcsin \left(\frac{\rho+\delta}{\reach}\right) \leq \theta.
  \end{align*}

  \smallskip \noindent  \refitem{hypo:small-circumradius} \styleitem{$\|h_0 - h_1\| \leq 4
    \varepsilon$ whenever $R(\sigma_0) \leq \varepsilon$ or $R(\sigma_1)
    \leq \varepsilon$.} This boils down to showing that $\|c_\sigma - {x^\ast}\| \leq
  4\varepsilon$ whenever there exists a space $H \in \{\Aff \sigma,
  \Tangent {x^\ast} \M\}$ such that $R(\pi_H(\sigma)) \leq \varepsilon$. Since $\|x-{x^\ast}\| =
  d(x,\M) \leq d(x,\pi_{\M}(\sigma)) \leq d(x,\sigma) + \delta \leq
  r_\sigma + \varepsilon$ and $\|c_\sigma - x \| \leq r_\sigma$, it
  follows from (\ref{eq:bounding-SEB-radius}) that $\|c_\sigma -
  {x^\ast}\| \leq \|c_\sigma - x\| + \|x - {x^\ast}\| \leq 2 r_\sigma
  + \varepsilon \leq 4 \varepsilon$.

  \smallskip \noindent \refitem{hypo:simplex-enclosed-in-ball} \styleitem{For $0 \leq i,j \leq 1$
    with $i \neq j$, $R(\sigma_i) \leq \varepsilon \implies \sigma
    \subseteq B(h_{j},\rho)$.} Let us prove it for
  $(i,j)=(0,1)$. Assume $R(\sigma) \leq \varepsilon$. Using $\|c_\sigma -
  x^\ast\| = \|h_0 - h_1\| \leq 4\varepsilon$, we obtain that $R(\sigma) \subseteq
  B(Z(\sigma),R(\sigma)) \subseteq B(c_\sigma,2\varepsilon) \subseteq
  B(x^\ast,6\varepsilon) \subseteq B(h_1,\rho)$.  Let us prove it for
  $(i,j)=(1,0)$. Assume $R(\pi_{\Tangent {x^\ast} \M}(\sigma)) \leq
  \varepsilon$. Then, using (\ref{eq:bounding-SEB-radius}), $r_\sigma
  \leq \frac{2\varepsilon}{\sqrt{3}} < \rho$ and $\sigma \subseteq
  B(c_\sigma,r_\sigma) \subseteq B(c_\sigma,\rho) =
  B(h_0,\rho)$.

  \smallskip \noindent \refitem{hypo:protection} \styleitem{$2 A \left(
    1 + \frac{2d\varepsilon}{\height{\sigma_i}}
    \right) < \zeta$ for $i \in \{0,1\}$.} The inequality is clearly true for
  $i=0$ since $\sigma_0 = \sigma$ and $\zeta = \Protection{P}{3\rho}$. Let us prove it for $i=1$. Recalling that
  $\angle(H_0,H_1) \leq \frac{\pi}{6}$
  and applying Lemma \ref{lemma:delta-thickness-under-distortion}, we obtain
  $
  \height{\sigma_1} \geq \cos \angle(H_0,H_1) \height{\sigma_0}
  \geq \frac{\sqrt{3}}{2} \height{\sigma_0}
  $. Hence,
  \[
  2 A \left(1 + \frac{2d\varepsilon}{\height{\sigma_1}} \right)
  \leq 2 A \left(1 + \frac{4d\varepsilon}{\height{\sigma_0}} \right)
  < \zeta = \Protection{P}{3\rho},
  \]
  showing the inequality for $i=1$.

  \smallskip Applying Lemma
  \ref{lemma:stability-from-plane-zero-to-plane-one}, we get that \ita
  $\iff$ \itb{x} and furthermore, whenever \ita or \itb{x} holds, then
  $\pi_{\Tangent {x^\ast} \M}(\sigma)$ is protected with respect to $\pi_{\Tangent {x^\ast}
        \M}(P \cap B({x^\ast},\rho))$ and $R(\sigma)
  \leq \varepsilon$. This concludes the proof.
\end{proof}


\begin{lemma}
  \label{lemma:from-sampling-to-structural-conditions}
  Whenever the assumptions of Theorem
  \ref{theorem:homeomorphism-from-sampling-conditions} are verified,
  so are the assumptions of Theorem \ref{theorem:homeomorphism-from-structural-conditions}.
\end{lemma}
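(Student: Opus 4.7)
The plan is to verify, under the assumptions of Theorem~\ref{theorem:homeomorphism-from-sampling-conditions}, the five structural conditions of Theorem~\ref{theorem:homeomorphism-from-structural-conditions} one at a time. Four of the five verifications reduce to invoking a previously established lemma, with the numerical inequalities collapsing cleanly to $\delta \le \varepsilon$, $16\varepsilon \le \rho < \reach/4$, $\theta \le \pi/6$, and the three safety conditions.

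For \refhypo{hypo:structural-proj-M}, I would apply Lemma~\ref{lemma:injectivity-projection-manifold}: any $\rho$-small $d$-simplex $\sigma \subseteq P$ satisfies $\Conv\sigma \subseteq \Offset{\M}{\rho/4}$ by Lemma~\ref{lemma:small-tubular-neigborhood} (so $\Conv\sigma$ avoids $\MA{\M}$), and $\Theta(\sigma) \le \Theta(P,\rho) \le \theta \le \pi/6 < \pi/2$ by \refhypo{hypo:safety-angle}. Hypothesis \refhypo{hypo:structural-proj-onto-tangent-plane} follows directly from Lemma~\ref{lemma:injectivity-projection-tangent-space}, whose assumptions $16\delta \le \rho \le \reach/3$ and $\Sep{P} > 2\rho^2/\reach + 2\delta$ are weakenings of $\delta \le \varepsilon$, $16\varepsilon \le \rho < \reach/4$, and \refhypo{hypo:safety-separation}. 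Hypothesis \refhypo{hypo:structural-locally-manifold} follows from Lemma~\ref{lemma:surrounded}, whose assumptions in turn follow from $16\varepsilon \le \rho < \reach/4$ via the elementary check $2\arcsin(\rho/\reach) \le \pi/6$. Condition \refhypo{hypo:structural-agreement} is precisely the first conclusion of Lemma~\ref{lemma:agreement-from-stability}, whose Delaunay-stability hypothesis is supplied by Lemma~\ref{lemma:delloc-characterization} and whose tangent-projection injectivity hypothesis is provided by \refhypo{hypo:structural-proj-onto-tangent-plane}.

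The main obstacle is the remaining condition \refhypo{hypo:structural-geometrically-realized}: the geometric realization of $\Delstar{m}{\rho}$ for every $m \in \M$. My plan is to fix $m \in \M$, pick any $d$-simplex $\tau' \in \Delstar{m}{\rho}$, and use the injectivity of $\pi_{\Tangent{m}{\M}}$ on $P \cap B(m,\rho)$ (hypothesis \refhypo{hypo:structural-proj-onto-tangent-plane}, already verified) to lift $\tau'$ to a unique $\rho$-small $d$-simplex $\tau \subseteq P \cap B(m,\rho)$ with $\pi_{\Tangent{m}{\M}}(\tau) = \tau'$. Since $m \in \Conv\tau' = \pi_{\Tangent{m}{\M}}(\Conv\tau)$, Remark~\ref{remark:same-projections} produces $x \in \Conv\tau$ with $x^\ast = m$, so $\tau \in \Prestar{x}{\rho} = \Prestar{m}{\rho}$, and Lemma~\ref{lemma:delloc-characterization} forces $\tau'$ to be protected with respect to $\pi_{\Tangent{m}{\M}}(P \cap B(m,\rho))$. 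From this protection I get non-degeneracy of every $d$-simplex of $\Delstar{m}{\rho}$ (hence of all its faces), and in the $d$-dimensional ambient space $\Tangent{m}{\M}$, two distinct protected Delaunay $d$-simplices of the star must meet along a common face because no $d+2$ of the vertices involved lie on a common $(d-1)$-sphere; this yields the geometric realization of $\Delstar{m}{\rho}$ and completes the reduction.
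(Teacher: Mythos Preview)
Your proposal is correct and follows essentially the same route as the paper's own proof: conditions \refhypo{hypo:structural-proj-M}, \refhypo{hypo:structural-proj-onto-tangent-plane}, \refhypo{hypo:structural-locally-manifold}, and \refhypo{hypo:structural-agreement} are discharged by exactly the same lemmas you cite (Lemmas~\ref{lemma:injectivity-projection-manifold}, \ref{lemma:injectivity-projection-tangent-space}, \ref{lemma:surrounded}, and the combination of Lemmas~\ref{lemma:delloc-characterization} and~\ref{lemma:agreement-from-stability}), and for \refhypo{hypo:structural-geometrically-realized} the paper carries out the identical lift-to-$\Prestar{x}{\rho}$ argument via Remarks~\ref{remark:same-projections} and~\ref{remark:same-projections-same-prestar} and then invokes the protection conclusion of Lemma~\ref{lemma:delloc-characterization}. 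The only difference is cosmetic: the paper simply asserts that protection of every $d$-simplex in $\Delstar{m}{\rho}$ forces geometric realization, whereas you sketch the reason (protection rules out $d+2$ cospherical points among the relevant vertices).
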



\begin{proof}
  Assume that the assumptions of Theorem
  \ref{theorem:homeomorphism-from-sampling-conditions} are satisfied
  and let us verify that the five structural conditions of Theorem
  \ref{theorem:homeomorphism-from-structural-conditions} are met.

  \smallskip \noindent \refhypo{hypo:structural-proj-M} By Lemma
  \ref{lemma:injectivity-projection-manifold}, for every $\rho$-small
  $d$-simplex $\sigma \subseteq P$, the map $\restr{\pi_\M}{\Conv\sigma}$ is injective.

  \smallskip \noindent \refhypo{hypo:structural-proj-onto-tangent-plane} By Lemma
  \ref{lemma:injectivity-projection-tangent-space}, for all $m \in
  \M$, the map $\restr{\pi_{\Tangent m \M}}{P \cap B(m,\rho)}$ is
  injective.

  \smallskip \noindent \refhypo{hypo:structural-locally-manifold} By Lemma \ref{lemma:surrounded}, for all $m \in \M$,
  the domain $\US{\Delstar m \rho}$ is homeomorphic to $\Rspace^d$.

  \smallskip \noindent
  \refhypo{hypo:structural-geometrically-realized} Let us show that
  for all $m \in \M$, $\Delstar m \rho$ is geometrically
  realized. Since the domain $\US{\Delstar m \rho}$ is homeomorphic to
  $\Rspace^d$, $\Delstar {m} \rho$ contains at least a $d$-simplex and
  it suffices to show that all $d$-simplices in $\Delstar {m} \rho$
  are protected with respect to $\pi_{\Tangent m \M}(P \cap B(m,\rho))$ to deduce that $\Delstar m \rho$ is geometrically
  realized. Consider a $d$-simplex $\sigma' \in \Delstar {m} \rho$.
  By definition of the star, there exists a $d$-simplex $\sigma \in P
  \cap B(m,\rho)$ such that $\sigma' = \pi_{\Tangent m
    \M}(\sigma)$. In other words, $\sigma \in \Prestar m \rho$. Note
  that we can find $x \in \Conv \sigma$ such that $m = \pi_{\Tangent m
    \M}(x)$. By Remark \ref{remark:same-projections}, $m = \pi_\M(x)$ and by Remark
  \ref{remark:same-projections-same-prestar}, $\Prestar m \rho = \Prestar x \rho$. Thus, $\sigma \in
  \Prestar x \rho$ for some $x \in \Conv \sigma$ and applying Lemma
  \ref{lemma:delloc-characterization}, we get that $\sigma'$ is
  protected with respect to $\pi_{\Tangent m \M}(P \cap B(m,\rho))$.

  \smallskip \noindent \refhypo{hypo:structural-agreement} By Lemma
  \ref{lemma:delloc-characterization}, every $\rho$-small $d$-simplex
  $\sigma$ is Delaunay stable for $P$ at scale $\rho$ with respect to
  its standard neighborhood. Applying Lemma
  \ref{lemma:agreement-from-stability}, we deduce that $\sigma$ has its
  prestars in agreement at scale $\rho$.
\end{proof}



\begin{proof}[Proof of Theorem \ref{theorem:homeomorphism-from-sampling-conditions}]
  By Lemma \ref{lemma:from-sampling-to-structural-conditions}, the
  assumptions of Theorem
  \ref{theorem:homeomorphism-from-structural-conditions} are
  satisfied. We thus deduce that (1) $\FlatDel P \rho$ is a faithful
  reconstruction and (2) the prestar formula holds.  Applying Lemma
  \ref{lemma:delloc-characterization}, it is not difficult to see that
  (3) $R(\sigma) \leq \varepsilon$ for all $d$-simplices $\sigma \in
  \FlatDel P \rho$. Applying Lemma \ref{lemma:delloc-characterization}
  again, we deduce that every $\rho$-small $d$-simplex $\sigma$ is
  Delaunay stable for $P$ at scale $\rho$ with respect to its standard
  neighborhood and applying Lemma
  \ref{lemma:agreement-from-stability}, we get that (4) a $d$-simplex
  $\sigma$ belongs $\FlatDel P \rho$ if and only if $\sigma$ delloc in $P$ at
  scale $\rho$.
\end{proof}


\section{Perturbation procedure for ensuring safety conditions}
\label{section:PerturbationProcedure}

While assuming the sample to be $\varepsilon$-dense and
$\delta$-accurate, seems realistic enough (perhaps after filtering
outliers), conditions \refhypo{hypo:safety-angle},
\refhypo{hypo:safety-separation} and \refhypo{hypo:safety-protection}
in Theorem \ref{theorem:homeomorphism-from-sampling-conditions} seem
less likely to be satisfied by natural data. In fact, it is not even
obvious that there exists a point set $P$ satisfying the conditions of
Theorem \ref{theorem:homeomorphism-from-sampling-conditions}.  Note
that condition \refhypo{hypo:safety-separation} that imposes a lower
bound on the separation of the data points can easily be satisfied, at
the price of doubling the density parameter $\varepsilon$; see
\cite[Section 5.1]{boissonnat2018geometric} for a standard procedure
that extracts an $\varepsilon$-net. In this section, we assume that
$P$ is a $\delta$-accurate $\varepsilon$-dense sample of \M and
perturbe it to obtain a point set $P'$ that satisfies the assumptions
of our main theorem. For this, we use the Moser Tardos Algorithm
\cite{moser2010constructive} as a perturbation scheme in the spirit of
what is done in \cite[Section 5.3.4]{boissonnat2018geometric}.

The perturbation scheme is parametrized with real numbers $\rho \geq
0$, $r_{\operatorname{pert.}} \geq 0$, $\mathrm{Heigh}_{\min}> 0$, and
$\mathrm{Prot}_{\min}> 0$. To describe it, we need some notations and
terminology. Let $\tilde{T}_{p} = T_{p}(P,3\rho)$ be the $d$-dimensional
affine space passing through $p$ and parallel to the $d$-dimensional
vector space $V_{p}(P,3\rho)$ defined as follows: $V_{p}(P, 3\rho)$
is spanned by the eigenvectors associated to
the $d$ largest eigenvalues of the inertia tensor of $(P \cap B(p,
3\rho)) - c$, where $c$ is the center of mass of $P \cap B(p,3\rho)$. 
To each point $p \in P$, we associate a
perturbed point $p' \in P'$, computed by applying a sequence of
elementary operations called reset. Precisely, given a point $p' \in P'$
associated to the point $p \in P$, the {\em reset} of $p'$ is the
operation that consists in drawing a point $q$ uniformely at random in
$V_{p} \cap B(p,r_{\operatorname{pert.}})$ and assigning $q$ to
$p'$. Finally, we call any of the two situations below a {\em bad
  event}:
\begin{description}
\item[\styleitem{Violation of the height condition by $\sigma'$:}] A $\rho$-small $d$-simplex
$\sigma' \subseteq P'$ such that $\height{\sigma'}<\mathrm{Heigh}_{\min}$;
\item[\styleitem{Violation of the protection condition by $(p',\sigma')$:}] A pair $(p',\sigma')$ made of a point $p' \in P'$
  and a $d$-simplex $\sigma' \subseteq P' \setminus \{p'\}$ such that $p' \in
  B(c_{\sigma'},3\rho)$ and $\sigma'$ is not
  $\mathrm{Prot}_{\min}$-protected with respect to $\{\pi_{\Aff
    {\sigma'}}(p')\}$.
\end{description}
In both situations, we associate to the bad event $E$ a set of points
called the points {\em correlated} to $E$. In the first
situation, the points correlated to $E$ are the $d+1$ vertices of
$\sigma'$ and in the second situation, they are the $d+2$ points of
$\{p'\} \cup \sigma'$.

\bigskip

\noindent \fbox{
    \begin{minipage}{0.97\textwidth}
      \noindent {\bf Moser-Tardos Algorithm: }\\
      \texttt{
      1. For each $p \in P$, compute the $d$-dimensional affine space $\tilde{T}_{p}$ \\
      2. For each point $p' \in P'$, reset $p'$ \\
      3. WHILE (some bad event $E$ occurs):\\
      --------------- For each point $p'$ correlated to $E$, reset $p'$ \\
      ----- END WHILE\\
      4. Return $P'$
      }
    \end{minipage}
}

\bigskip

Roughly speaking, in our context, the Moser Tardos Algorithm reassigns
new coordinates to any point $p \in P$ that is correlated to a bad
event as long as a bad event occurs. A beautiful result from
\cite{moser2010constructive} tells us that the Moser-Tardos Algorithm
terminates in a number of steps that is expected to be linear in the size of $P$. We thus
have:

\begin{lemma}\label{lemma:PerturbationTunedForTheorem}
Let $\varepsilon \geq 0$, $\eta >0$, and $\rho =
C_{\operatorname{ste}} \varepsilon$, where $C_{\operatorname{ste}}
  \geq 32$.  Let $\delta = \frac{\rho^2}{\reach}$,
$r_{\operatorname{pert.}} = \frac{\eta \varepsilon}{20}$,
$\varepsilon' = \frac{21}{20}\varepsilon$, and $\delta' = 2\delta$.
 There are
positive constants $c_1$, $c_2$, $c_3$, and $c_4$ that depend only upon
$\eta$, $C_{\operatorname{ste}}$, and $d$ such that if
$\frac{\varepsilon}{\reach} < c_1$ then, given a point set $P$
such that $\M \subseteq \Offset {P} {\varepsilon}$, $P
\subseteq \Offset \M {\delta}$, and $\Sep{P} > \eta
\varepsilon$, the point set $P'$ obtained after resetting each of its points
satisfies $\M \subseteq \Offset {(P')} {\varepsilon'}$, $P'
\subseteq \Offset \M {\delta'}$, and $\Sep{P'} > \frac{9}{10} \eta
\varepsilon$.  Moreover, whenever we apply the Moser-Tardos Algorithm with
$\mathrm{Heigh}_{\min} = c_2 \left( \frac{\rho}{\reach}
\right)^{\frac{1}{2}} \rho $ and $\mathrm{Prot}_{\min} =c_3 \left(
\frac{\rho}{\reach} \right)^{\frac{1}{2}} \rho$, the algorithm terminates with
expected time $O( \sharp P)$ and returns a point set $P'$ that
satisfies:
\begin{align*}
\Height{P'}{\rho} &\geq c_2 \left( \frac{\rho}{\reach} \right)^{\frac{1}{2}} \rho \\
\Protection{P'}{\rho} &\geq  c_3 \left( \frac{\rho}{\reach} \right)^{\frac{1}{2}}  \rho 
\end{align*}
As a consequence of the above lower bound on
$\Height{P'}{\rho}$, we have:
 \[
 \Theta(P',\rho)  \leq c_4 \left( \frac{\rho}{\reach} \right)^{\frac{1}{2}}.
 \]
The point set $P'$ returned by the Moser-Tardos Algorithm is a $\delta'$-accurate $\varepsilon'$-dense sample of \M that
satisfies the assumptions of Theorem \ref{theorem:homeomorphism-from-sampling-conditions} with
parameters $\varepsilon'$, $\delta'$, $\rho$, and some $\theta \geq 0$.
\end{lemma}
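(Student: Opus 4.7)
The plan is to organize the argument in four stages: (i) preservation of the sampling and separation conditions under the perturbation, (ii) termination of the Moser--Tardos scheme in expected linear time with the claimed quantitative bounds, (iii) consequences for height, protection, and angle of $\rho$-small $d$-simplices in $P'$, and (iv) verification of the three safety conditions of Theorem~\ref{theorem:homeomorphism-from-sampling-conditions}.

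First I would verify that the resampling preserves the sampling and separation conditions up to the announced constants. Each $p' \in P'$ lies within distance $r_{\operatorname{pert.}} = \eta\varepsilon/20$ of the corresponding $p \in P$, so density worsens by at most this additive amount, yielding $\M \subseteq \Offset{P'}{\varepsilon'}$. For accuracy, the key observation is that $p'$ is drawn from the affine space $\tilde{T}_p$, which is a PCA-type estimator of $\Tangent{\pi_\M(p)}{\M}$: a standard reach-based PCA estimate gives $\angle(\tilde T_p, \Tangent{\pi_\M(p)}{\M}) = O((\rho/\reach)^{1/2})$, and combined with $d(p,\M) \le \delta = \rho^2/\reach$ and $\rho = C_{\operatorname{ste}}\varepsilon$ this yields $d(p',\M) \le \delta' = 2\delta$ once $c_1$ is small enough. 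Separation degrades by at most $2 r_{\operatorname{pert.}} = \eta\varepsilon/10$, so $\Sep{P'} > \tfrac{9}{10}\eta\varepsilon$.

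Second, I would analyse the termination of the Moser--Tardos scheme via the Lov\'{a}sz Local Lemma in the style of \cite{moser2010constructive,boissonnat2018geometric}. The bad events involve $d{+}1$ or $d{+}2$ points, and thanks to the separation bound each is correlated to only polynomially many others, so the dependency graph has degree bounded in terms of $d$, $\eta$, $C_{\operatorname{ste}}$. The crucial probabilistic estimates are that, once the directions $\{\tilde{T}_p\}$ are fixed, a uniform reset of one vertex of a $\rho$-small $d$-simplex violates the height condition at level $\mathrm{Heigh}_{\min} = c_2 (\rho/\reach)^{1/2}\rho$ with probability at most a small power of $\rho/\reach$ (the forbidden slab has thickness $2\mathrm{Heigh}_{\min}$ inside a $d$-ball of radius $r_{\operatorname{pert.}}$), and a similar estimate governs violations of the protection condition at level $\mathrm{Prot}_{\min} = c_3 (\rho/\reach)^{1/2}\rho$. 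Choosing $c_2, c_3$ small enough depending on $\eta$, $C_{\operatorname{ste}}$, $d$, these probabilities satisfy the LLL inequality, and by Moser--Tardos the algorithm terminates in expected time $O(\sharp P)$.

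Third, upon termination no bad event occurs, which gives directly $\Height{P'}{\rho} \ge \mathrm{Heigh}_{\min}$ and $\Protection{P'}{\rho} \ge \mathrm{Prot}_{\min}$. For the angle bound I would use that if $\sigma \subseteq P' \subseteq \Offset{\M}{\delta'}$ is $\rho$-small, then the angle between $\Aff\sigma$ and any nearby tangent space $\Tangent m \M$ with $m \in \pi_\M(\Conv\sigma)$ is controlled by a quantity of the form $(\delta' + \rho^2/\reach)/\height{\sigma}$, a standard flatness--reach estimate already invoked in the paper via Lemma~\ref{lemma:general-angle-bound}; combined with the height lower bound this yields $\Theta(P',\rho) \le c_4 (\rho/\reach)^{1/2}$. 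Finally I would check each of the three safety conditions of Theorem~\ref{theorem:homeomorphism-from-sampling-conditions} for $(P',\rho)$: condition~\refhypo{hypo:safety-angle} follows from the $\Theta$-bound once $c_1$ is small enough to absorb the $2\arcsin((\rho+\delta')/\reach)$ term; condition~\refhypo{hypo:safety-separation} from the separation bound combined with $A = 4\delta'\theta + 4\rho\theta^2 = O((\rho/\reach)\rho)$; condition~\refhypo{hypo:safety-protection} from $\mathrm{Prot}_{\min} = c_3(\rho/\reach)^{1/2}\rho$ dominating $2A(1 + 4d\varepsilon'/\Height{P'}{\rho}) = O((\rho/\reach)^{1/2}\rho)$. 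The main obstacle is the second stage: producing quantitative bad-event probability estimates that are tight enough for LLL while remaining coherent with the deterministic bounds used in the fourth stage, and in particular aligning the constants $c_2, c_3, c_4$ with the choice of $c_1$.
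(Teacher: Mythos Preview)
Your four-stage plan matches the paper's approach: the paper likewise checks preservation of sampling and separation, bounds bad-event probabilities to apply the Lov\'asz Local Lemma via Moser--Tardos, reads off the height, protection, and angle bounds from the absence of bad events, and then verifies the hypotheses of Theorem~\ref{theorem:homeomorphism-from-sampling-conditions}.

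There is, however, a concrete gap in stage~(i). You claim $\angle(\tilde T_p,\Tangent{\pi_\M(p)}{\M}) = O((\rho/\reach)^{1/2})$ and deduce $P' \subseteq \Offset{\M}{2\delta}$. With that rate the normal displacement of a reset point is of order $r_{\operatorname{pert.}}\cdot(\rho/\reach)^{1/2} = \Theta\bigl(\varepsilon(\rho/\reach)^{1/2}\bigr)$, whereas $\delta = \rho^2/\reach = \Theta\bigl(\varepsilon\cdot\rho/\reach\bigr)$; the ratio is $(\reach/\rho)^{1/2}\to\infty$ as $c_1\to 0$, so the accuracy conclusion does not follow. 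The paper proves and uses the sharper estimate $\angle(\tilde T_p,\Tangent{\pi_\M(p)}{\M}) = O(\rho/\reach)$ (its Lemma~\ref{lemma:PerturbationAndLLL_pseudoTangent}), which is exactly the rate needed for $d(p',\M)\le 2\delta$.

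A secondary point concerns the order in which the constants are chosen. In stage~(ii) you write ``choosing $c_2,c_3$ small enough'' for the LLL, but the logic runs the other way: $c_2$ and $c_3$ are fixed first so that safety condition~\refhypo{hypo:safety-protection} holds (the paper derives the constraint $\zeta\gtrsim(\rho/\reach)^2(\rho/h)^3\rho$ and optimizes over $h$, which is what produces the $(\rho/\reach)^{1/2}$ scaling and forces a relation of the type $c_3\gtrsim c_2^{-3}$), and only then is $c_1$ taken small enough so that the bad-event probabilities, which scale like $(\rho/\reach)^{1/2}$ times fixed constants, meet the LLL threshold. This is precisely the ``alignment'' issue you flag as the main obstacle; the paper resolves it by doing the optimization explicitly before invoking Moser--Tardos.
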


The proof is given in Appendix \ref{section:ProofOfLemmaPerturbationTunedForTheorem}.

\bibliography{../geofi}


\clearpage
\appendix


\section{Angle between affine spaces}

In this appendix, we present basic upper bounds on the angle between
affine spaces spanned by simplices close to a manifold and nearby
tangent spaces to that manifold. We start by recalling how the angle
between two affine spaces is defined \cite{BSMF_1875__3__103_2}:

\begin{definition}[Angle between affine spaces]
  Consider two affine spaces $H_0, H_1 \subseteq \Rspace^\Dim$. Let
  $V_0$ and $V_1$ be the vector spaces associated respectively to
  $H_0$ and $H_1$.  The {\em angle between $V_0$ and $V_1$} is defined
  as
  \begin{equation*}
    \angle (V_0, V_1) = \sup_{\Above{v_0\in V_0}{\|v_0\|=1}} \inf_{\Above{v_1\in V_1}{\|v_1\|=1}}  \angle v_0, v_1
    = \max_{\Above{v_0\in V_0}{\|v_0\|=1}} \min_{\Above{v_1\in V_1}{\|v_1\|=1}}  \angle v_0, v_1
  \end{equation*}
  and the {\em angle between $H_0$ and $H_1$} is defined as $\angle (H_0,
  H_1) = \angle (V_0, V_1)$.
\end{definition}
Note that by definition, $\angle (H_0, H_1) \in [0, \frac{\pi}{2}]$. We recall a classical result:
\[
\dim H_0 = \dim H_1 \quad \implies \quad \angle (H_0, H_1) = \angle (H_1,
H_0).
\]
In other words, whenever $H_0$ and $H_1$ share the same dimension, the
angle definition is symmetric in $H_0$ and $H_1$. Skipping details,
this is because, in that case, there exists an isometry that swaps the
associated vector spaces $V_0$ and $V_1$ while preserving the angles.

We are now ready to state a few lemmas. As usual, we assume the reach
of \M to be positive and let $\reach$ be a fixed finite constant such
that $0 < \reach \leq \Reach \M$ so as to handle the case where \M has
an infinite reach. We start by enunciating a lemma due to Federer
\cite{federer-59} which bounds the distance of a point $q\in \M$ to
the tangent space at a point $p \in \M$.  It holds for any set with a
positive reach:

\begin{lemma}[{\cite[Theorem 4.8]{federer-59}}]
  \label{lemma:tangdel:distanceToTangent}
For any $p,q \in \M$ such that $\|p-q\| < \reach$, we
have
\begin{align*} 
  \sin \angle (pq, \Tangent p \M ) \leq \frac{\|p-q\|}{2 \reach}
  \qquad \text{ and } \qquad d(q , \Tangent p \M )  \leq \frac{\|p-q\|^2}{2 \reach }.
\end{align*}
\end{lemma}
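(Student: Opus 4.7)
The statement is the classical Federer inequality; my plan is the standard ``ball rolling'' argument based on the definition of reach.

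The first step is to record the key geometric fact underlying the reach: for every $p\in\M$ and every unit normal vector $n$ to $\M$ at $p$, and for every $r<\reach$, the open ball $B(p+rn,\,r)$ is disjoint from $\M$. This follows from the definition of reach, since the center $c=p+rn$ satisfies $d(c,\M)\le r<\reach$ so $c\notin\MA\M$, while $p$ is a stationary point of $\|\cdot-c\|$ on $\M$ (the displacement $c-p$ being normal to $\M$); hence $p$ is the unique nearest point of $\M$ to $c$, so no other point of $\M$ lies strictly inside $B(c,r)$. By continuity in $r$, the closed analogue $B(p+\reach\,n,\reach)$ also contains no point of $\M$ in its open interior.

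The second step is to turn this into an analytic inequality. Fix $q\in\M$ with $\|p-q\|<\reach$. The ball-rolling property yields $\|q-(p+\reach\,n)\|^{2}\ge \reach^{2}$, which after expansion simplifies to
\[
\langle q-p,\,n\rangle \;\le\; \frac{\|q-p\|^{2}}{2\reach}.
\]
Applying the same bound to the opposite normal $-n$ gives the two-sided estimate $|\langle q-p,n\rangle|\le \|q-p\|^{2}/(2\reach)$ for every unit vector $n\in\Normal p\M$.

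The third step is to decompose $q-p=v_{T}+v_{N}$ with $v_{T}\in\Tangent p\M$ and $v_{N}\in\Normal p\M$, and to specialize the previous bound to $n=v_{N}/\|v_{N}\|$ (assuming $v_{N}\neq 0$; otherwise the conclusion is trivial). This gives $\|v_{N}\|\le \|q-p\|^{2}/(2\reach)$, and since $d(q,\Tangent p\M)=\|v_{N}\|$, the second claimed inequality follows at once. The first claimed inequality is then immediate from the identity $\sin\angle(pq,\Tangent p\M)=\|v_{N}\|/\|q-p\|$ and a division by $\|q-p\|$.

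The only delicate point is the justification of the ball-rolling property itself, i.e.\ the claim that $c=p+rn$ has $p$ as its unique closest point on $\M$ for $r<\reach$; everything else is elementary algebra. I would handle this by invoking the characterization of the reach as the supremum of $r$ such that $\pi_\M$ is single-valued on $\Offset\M r$, together with the fact that critical points of the distance to a $C^{1}$ submanifold are exactly those lying on normals, so that the normal line segment from $p$ meets the medial axis no sooner than distance $\reach$. No other obstacle is expected.
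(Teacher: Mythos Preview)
The paper does not give its own proof of this lemma; it is simply quoted from Federer \cite[Theorem~4.8]{federer-59} as background. Your ball-rolling argument is precisely Federer's original proof and is correct: the expansion of $\|q-(p+\reach\,n)\|^{2}\ge\reach^{2}$, the specialization to $n=v_N/\|v_N\|$, and the identification $d(q,\Tangent p\M)=\|v_N\|$, $\sin\angle(pq,\Tangent p\M)=\|v_N\|/\|q-p\|$ are all standard and sound.

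The one point you rightly single out---that $p$ is the unique nearest point of $\M$ to $p+rn$ whenever $r<\reach$---is exactly what Federer isolates as Theorem~4.8(12), and your sketch (uniqueness of $\pi_\M$ on $\Offset\M r$ together with the fact that nearest points lie on normals) is the right idea, though a fully rigorous version needs a short continuity argument along the segment $s\mapsto p+sn$ to rule out the projection jumping to a different foot point. As a side remark, your argument nowhere uses the hypothesis $\|p-q\|<\reach$; the two inequalities in fact hold for all $q\in\M$, and the hypothesis is only there to make the angle bound non-trivial.
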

Next lemma bounds the angle variation between two tangent spaces for $C^2$-manifolds and can be
found for instance in \cite{boissonnat2019reach}\footnote{A slightly weaker condition is given for
$C^{1,1}$-manifolds in that paper.}:
\begin{lemma}[{\cite[Corollary~3]{boissonnat2019reach}}]
  \label{lemma:tangdel:TangentSpaceVariation}
  For any $p,q \in \M$, we have
  \begin{align*}
    \sin \left(\frac{\angle (\Tangent p \M, \Tangent q \M )}{2} \right)
    \leq \frac{\|p-q\|}{2\reach}.
  \end{align*}
\end{lemma}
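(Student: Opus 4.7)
The statement is a sharp bound on second-order geometry for submanifolds with positive reach, and it is stated as a citation from \cite{boissonnat2019reach}. My plan is to follow the classical ``double supporting ball'' argument of Federer, sharpened into the present half-angle form.

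First, I would reduce to a pairwise bound on unit normals. Because $\Tangent p \M$ and $\Tangent q \M$ are both $d$-dimensional, their angle $\alpha=\angle(\Tangent p \M,\Tangent q \M)$ equals the largest principal angle between them, which in turn equals the largest principal angle between the orthogonal complements $\Normal p \M$ and $\Normal q \M$. It therefore suffices to exhibit unit normals $n_p \in \Normal p \M$ and $n_q \in \Normal q \M$ realizing this angle together with an estimate $\|n_p - n_q\| \leq \|p-q\|/\reach$, since $\|n_p - n_q\| = 2\sin(\alpha/2)$.

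Second, I would invoke the defining property of the reach at both endpoints simultaneously. For any $x \in \M$ and any unit $n \in \Normal x \M$, the open ball $B(x+\reach\, n,\reach)$ is $\M$-empty, because otherwise its center would admit two distinct nearest points on $\M$, contradicting $\reach \leq \Reach \M$. Applying this at $(p,n_p)$ and at $(q,n_q)$ with signs chosen so that the two supporting balls lie on the same local side of $\M$, and using that $q \in \M$ is not in the first ball and $p \in \M$ is not in the second, yields the pair of linear inequalities
\begin{align*}
  \langle q - p,\, n_p \rangle &\leq \frac{\|q-p\|^2}{2\reach},\\
  \langle p - q,\, n_q \rangle &\leq \frac{\|p-q\|^2}{2\reach}.
\end{align*}

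Third, I would combine these and optimise the choice of $(n_p,n_q)$ within the pair realising $\alpha$. Adding the two inequalities bounds $\langle q-p,\, n_p - n_q\rangle$ from above, and the half-angle improvement comes from choosing the normals so that $n_p - n_q$ is aligned with $q-p$ up to a controllable quadratic error; squaring then gives $\|p-q\|^2 \geq 4\reach^2\sin^2(\alpha/2)$, which is exactly the asserted inequality. The main obstacle, and what makes the result substantially deeper than a direct corollary of Lemma~\ref{lemma:tangdel:distanceToTangent} (which only yields $\sin\alpha \leq \|p-q\|/\reach$), is the consistent orientation of $n_p$ and $n_q$: one must argue that the two supporting balls can genuinely be placed on the same side, and that the alignment of $n_p - n_q$ with $q-p$ can be achieved. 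I would handle this exactly as in~\cite{boissonnat2019reach}, by a continuity/rolling-ball argument along a short path in $\M$ from $p$ to $q$, which reduces the global orientation question to a pointwise statement about the second fundamental form of $\M$ and delivers the factor of two improvement over Federer's original bound.
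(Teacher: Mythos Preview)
The paper does not prove this lemma; it records it as a citation to \cite{boissonnat2019reach}, so there is no in-paper proof to compare your proposal against. What follows is an assessment of your sketch on its own merits.

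Your steps 1 and 2 are correct and standard. The gap is in step 3. The Federer supporting-ball inequalities only control the \emph{scalar} $\langle q-p,\,n_p-n_q\rangle$; converting this into a bound on $\|n_p-n_q\|=2\sin(\alpha/2)$ requires the alignment you assert. That alignment is not something you can simply ``choose''. In codimension one the normals are determined up to sign, and in higher codimension the pair $(n_p,n_q)$ is constrained to realise the principal angle, so there is no residual freedom to rotate $n_p-n_q$ onto $q-p$. In the tight cases (circle, sphere, cylinder) the alignment happens to hold exactly and your two-sided inequality becomes an equality, but establishing it in general is precisely the content of the lemma; deferring it back to \cite{boissonnat2019reach} is circular.

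Your alternative route via the second fundamental form along a path does lead to a correct proof, but not quite as you stated it. Integrating the shape-operator bound along a unit-speed geodesic gives the \emph{angle} estimate $\alpha\leq d_{\M}(p,q)/\reach$ (with $d_{\M}$ the intrinsic distance), and one then needs the separate chord--arc inequality
\[
d_{\M}(p,q)\;\leq\;2\reach\,\arcsin\!\left(\frac{\|p-q\|}{2\reach}\right),
\]
also proved in \cite{boissonnat2019reach}, to obtain $\alpha/2\leq\arcsin(\|p-q\|/(2\reach))$ and hence $\sin(\alpha/2)\leq\|p-q\|/(2\reach)$. Without this second ingredient the path argument yields only a geodesic-distance bound, which is strictly weaker than the Euclidean one asserted here.
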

We shall also need the Whitney angle bound established in \cite{BOISSONNAT_2013}.

\begin{lemma}[Whitney angle bound {\cite[Lemma 2.1]{BOISSONNAT_2013}}]
  \label{lemma:whitney-angle-bound}
  Consider a $d$-dimensional affine space $H$ and a simplex $\sigma$
  such that $\dim\sigma \leq d$ and $\sigma \subseteq \Offset H t$ for
  some $t \geq 0$. Then
  \[
  \sin \angle(\Aff\sigma,H) \leq \frac{2 t \dim(\sigma)}{\height{\sigma}}
  \]
\end{lemma}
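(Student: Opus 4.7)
The plan is to bound the norm of $\pi_{\VectorSpace{H}^\perp}(u)$ (writing $\pi := \pi_{\VectorSpace{H}^\perp}$ for brevity) for an arbitrary unit vector $u \in \VectorSpace{\Aff\sigma}$, by combining two ingredients: the projection of any edge of $\sigma$ has length at most $2t$, and each coefficient of $u$ in a suitable edge basis is controlled by the reciprocal of $\height\sigma$.

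After translating so that $H$ passes through the origin, the containment $\sigma \subseteq \Offset H t$ gives $\|\pi(v)\| = d(v,H) \leq t$ for every vertex $v$ of $\sigma$. I would fix a vertex $v_0$ of $\sigma$, enumerate the remaining ones as $v_1, \dots, v_k$ (so $k = \dim\sigma$), and set $e_i := v_i - v_0$; these form a basis of $\VectorSpace{\Aff\sigma}$. By linearity of $\pi$ and the triangle inequality,
\[
\|\pi(e_i)\| = \|\pi(v_i) - \pi(v_0)\| \leq 2t,
\]
so for $u = \sum_{i=1}^{k} \alpha_i e_i$ we obtain $\|\pi(u)\| \leq 2t \sum_{i=1}^{k} |\alpha_i|$.

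To control the coefficients $\alpha_i$, I would introduce a dual basis: for each $i \in \{1,\dots,k\}$, let $n_i$ be a unit vector in $\VectorSpace{\Aff\sigma}$ orthogonal to $\VectorSpace{\Aff(\sigma \setminus \{v_i\})}$. Because $v_0$ and $v_j$ both belong to $\sigma \setminus \{v_i\}$ whenever $j \neq i$, we have $\langle e_j, n_i\rangle = 0$ for such $j$, while $\langle e_i, n_i\rangle$ equals, up to sign, $d(v_i, \Aff(\sigma \setminus \{v_i\}))$, which is at least $\height\sigma$ by definition. Hence $\alpha_i = \langle u, n_i\rangle / \langle e_i, n_i\rangle$ and Cauchy--Schwarz gives $|\alpha_i| \leq \|u\|/\height\sigma = 1/\height\sigma$. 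Summing over $i$ and plugging into the previous estimate,
\[
\sin\angle(\Aff\sigma, H) ~=~ \sup_{u \in \VectorSpace{\Aff\sigma},\, \|u\|=1} \|\pi(u)\| ~\leq~ \frac{2 t k}{\height\sigma},
\]
which is the claimed inequality since $k = \dim\sigma$.

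The main subtlety I foresee is the asymmetric role played by $v_0$ versus the other vertices: only $k$ coefficients appear in the expansion because $v_0$ is the basepoint of the edge basis, and only the $k$ heights at $v_1, \dots, v_k$ enter the bound on the coefficients. This is harmless because the uniform bound $h_{v_i} \geq \height\sigma$ is already sharp enough, so no refined treatment involving the height at $v_0$ is needed. Otherwise the whole argument is a short linear-algebra computation once the dual basis $(n_i)$ is introduced; I do not expect any further obstacle, and the cases $\dim\sigma = 0$ (both sides zero) and $\dim\sigma = 1$ (edge case) are immediate.
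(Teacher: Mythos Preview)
Your argument is correct. The paper itself does not prove this lemma; it is cited as \cite[Lemma 2.1]{BOISSONNAT_2013} and taken for granted. Your proof --- expressing a unit vector in $\VectorSpace{\Aff\sigma}$ in the edge basis $(e_i)$, extracting each coefficient via the dual vectors $n_i$ orthogonal to the opposite facet, and bounding $|\alpha_i|\leq 1/\height\sigma$ --- is essentially the standard proof that appears in the cited reference, so there is nothing to contrast.
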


Building on these results, we derive various bounds between the affine
space spanned by a simplex and a nearby tangent space.

\begin{lemma}
  \label{lemma:general-angle-bound}
  Consider a non-degenerate $\rho$-small simplex $\tau \subseteq
  \Offset \M \delta$ with $16\delta \leq \rho \leq
  \frac{\reach}{3}$. Let $z$ be a point such that $\tau \subseteq
  B(z,\rho)$ and $d(z,\M) \leq \frac{\rho}{4}$. Then,
  \[
  \angle(\Aff \tau, \Tangent {\pi_\M(z)} \M) \leq
  \arcsin \left(
  \frac{2\dim(\tau)}{\height{\tau}}
  \left(\frac{\rho^2}{\reach} + \delta \right) \right).
  \]
\end{lemma}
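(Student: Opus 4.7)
The plan is to apply the Whitney angle bound (Lemma \ref{lemma:whitney-angle-bound}) with the ambient $d$-dimensional affine space taken to be $H = \Tangent {z^\ast} \M$, where $z^\ast = \pi_\M(z)$. Note that $z^\ast$ is well-defined since $d(z,\M) \leq \rho/4 \leq \reach/12 < \reach$. To invoke Whitney, I need to exhibit a thickness $t \geq 0$ such that $\tau \subseteq \Offset H t$, and the target bound suggests $t = \frac{\rho^2}{\reach} + \delta$.

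Pick any vertex $v \in \tau$. Since $\tau \subseteq \Offset \M \delta$, choose $m_v \in \M$ with $\|v - m_v\| \leq \delta$. By the triangle inequality, $d(v,H) \leq \|v - m_v\| + d(m_v, H) \leq \delta + d(m_v, \Tangent {z^\ast} \M)$. To control the second term I would invoke Federer's bound (Lemma \ref{lemma:tangdel:distanceToTangent}) at the two points $z^\ast, m_v \in \M$, yielding $d(m_v, \Tangent {z^\ast} \M) \leq \frac{\|m_v - z^\ast\|^2}{2\reach}$, provided that $\|m_v - z^\ast\| < \reach$.

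The next step is a routine triangle-inequality estimate: $\|m_v - z^\ast\| \leq \|m_v - v\| + \|v - z\| + \|z - z^\ast\| \leq \delta + \rho + \frac{\rho}{4}$. Using the hypothesis $16\delta \leq \rho$, this is at most $\frac{21}{16}\rho$, which is bounded by $\reach$ since $\rho \leq \reach/3$, so Federer applies. Squaring and dividing by $2\reach$ gives
\[
d(m_v, \Tangent {z^\ast} \M) \leq \frac{(21/16)^2}{2} \cdot \frac{\rho^2}{\reach} = \frac{441}{512} \cdot \frac{\rho^2}{\reach} \leq \frac{\rho^2}{\reach}.
\]
Combining, $d(v,H) \leq \delta + \frac{\rho^2}{\reach}$ for every vertex of $\tau$, so $\tau \subseteq \Offset H t$ with $t = \delta + \frac{\rho^2}{\reach}$.

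Finally, Whitney (Lemma \ref{lemma:whitney-angle-bound}) gives $\sin\angle(\Aff\tau, H) \leq \frac{2 t \dim(\tau)}{\height{\tau}} = \frac{2\dim(\tau)}{\height{\tau}}\left(\frac{\rho^2}{\reach} + \delta\right)$, and taking $\arcsin$ yields the claim. I do not anticipate a serious obstacle: the only delicate point is verifying that the constants line up so that $\|m_v - z^\ast\|^2/(2\reach)$ can actually be absorbed into $\rho^2/\reach$ without an extra constant, which works precisely because the hypothesis $16\delta \leq \rho$ gives enough slack and $\rho \leq \reach/3$ keeps us comfortably in the range where Federer's inequality is available.
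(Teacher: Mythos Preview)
Your proof is correct and follows essentially the same route as the paper: bound $d(v,\Tangent{z^\ast}\M)$ by $\frac{\rho^2}{\reach}+\delta$ for every vertex $v\in\tau$, then invoke the Whitney angle bound. The only minor variation is that the paper bounds $\|v^\ast - z^\ast\|$ (with $v^\ast=\pi_\M(v)$) via the Lipschitz property of $\pi_\M$ from Federer to get $\sqrt{2}\rho$, whereas you use a direct triangle inequality to obtain $\frac{21}{16}\rho$; both estimates square to at most $2\rho^2$ and give the same thickness $t=\frac{\rho^2}{\reach}+\delta$.
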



\begin{proof}
  Let $v \in \tau$. Write $v^\ast = \pi_\M(v)$ and $z^\ast =
  \pi_\M(z)$.  We know from \cite[page 435]{federer-59} that for $0
  \leq h < \Reach{\M}$, the projection map $\pi_\M$ onto $\M$ is
  $\left( \frac{\reach}{\reach - h} \right)$-Lipschitz for points at
  distance less than $h$ from $\M$. Since both $z$ and $v$ belong to
  $\Offset \M h$ for $h = \frac{\rho}{4}$, we thus have
  \begin{equation*}
    \| v^* - z^* \| \leq \frac{\reach}{\reach - \frac{\rho}{4}} \times \| v - z\| 
    \leq  \frac{\reach}{\reach - \frac{\reach}{3\times 4}} \times \|v - z\| \leq  \sqrt{2}\rho.
  \end{equation*}
  Applying Lemma \ref{lemma:tangdel:distanceToTangent}, we get that
  \[
    d(v,\Tangent {z^\ast} \M) \leq d(v^\ast,\Tangent {z^\ast} \M) + \|v - v^\ast\| 
    \leq \frac{\|v^\ast - z^\ast\|^2}{2\reach} + \delta 
    \leq \frac{\rho^2}{\reach} + \delta.
  \]
  Hence, $\tau \subseteq \Offset {(\Tangent {z^\ast} \M)} t$ for $t =
  \frac{\rho^2}{\reach} + \delta$ and applying Whitney angle bound
  (Lemma \ref{lemma:whitney-angle-bound}), we conclude that $\sin
  \angle(\Aff\tau,\Tangent {z^\ast} \M) \leq
  \frac{2\dim(\tau)}{\height{\tau}} \left(\frac{\rho^2}{\reach} +
  \delta \right) $.
\end{proof}


\begin{corollary}
  \label{lemma:angle-span-simplex-tangent-plane}
  For any non-degenerate $\rho$-small simplex $\sigma \subseteq \Offset \M
  \delta$ with $16\delta \leq \rho \leq \frac{\reach}{3}$:
  \[
  \Theta(\sigma) \leq
  \arcsin \left(
  \frac{2\dim(\sigma)}{\height{\sigma}}
  \left(\frac{4 \rho^2}{\reach} + \delta \right) \right) +
  \arcsin \left( \frac{\rho+\delta}{\reach}\right).
  \]
\end{corollary}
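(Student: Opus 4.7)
The plan is to bound $\Theta(\sigma) = \max_{H_0, H_1 \in \Hcal(\sigma)} \angle(H_0, H_1)$ by analyzing two cases (simplex-to-tangent and tangent-to-tangent pairs), controlling each by one of the two $\arcsin$ terms, and concluding with the trivial $\max(A, B) \le A + B$. The starting point is Lemma~\ref{lemma:small-tubular-neigborhood}, which under the stated assumption $16\delta \le \rho \le \reach/3$ places $\Conv \sigma$ inside $\Offset{\M}{\rho/4}$; in this tube, $\pi_{\M}$ is well-defined and Lipschitz with constant $\reach/(\reach - \rho/4) \le \sqrt{2}$. A second preliminary observation is that $\Conv \sigma$ has diameter at most $2\rho$, so any $x \in \Conv \sigma$ satisfies $\sigma \subseteq B(x, 2\rho)$.

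\emph{Simplex-to-tangent case.} For any $m \in \pi_\M(\Conv \sigma)$, pick $x \in \Conv \sigma$ with $\pi_\M(x) = m$. Since $d(x,\M) \le \rho/4$ and $\sigma \subseteq B(x,2\rho)$, I retrace the proof of Lemma~\ref{lemma:general-angle-bound} with ball radius $2\rho$ replacing $\rho$. For every $v \in \sigma$, the Lipschitz continuity of $\pi_\M$ yields $\|v^\ast - m\| \le \sqrt{2}\cdot 2\rho = 2\sqrt{2}\rho$, so by Lemma~\ref{lemma:tangdel:distanceToTangent},
\[
d(v,\Tangent m \M) \;\le\; \frac{\|v^\ast - m\|^2}{2\reach} + \delta \;\le\; \frac{4\rho^2}{\reach} + \delta.
\]
The Whitney angle bound (Lemma~\ref{lemma:whitney-angle-bound}) then gives $\angle(\Aff \sigma, \Tangent m \M) \le A := \arcsin\!\big(\tfrac{2\dim(\sigma)}{\height{\sigma}}(\tfrac{4\rho^2}{\reach} + \delta)\big)$.

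\emph{Tangent-to-tangent case.} For $m_1, m_2 \in \pi_\M(\Conv \sigma)$, Lemma~\ref{lemma:tangdel:TangentSpaceVariation} yields $\angle(\Tangent {m_1}\M, \Tangent{m_2}\M) \le 2 \arcsin(\|m_1-m_2\|/(2\reach))$. The plan is to bound $\|m_1 - m_2\|$ by at most $2(\rho+\delta)$ by relating projections of points in $\Conv\sigma$ back to projections of vertices of $\sigma$ (where $\|v - v^\ast\| \le \delta$) and controlling the drift through the Lipschitz estimate above. Combining this with the elementary inequality $2\arcsin(y) \le \arcsin(2y)$, valid on $[0, 1/\sqrt 2]$, then produces the bound $B := \arcsin(\tfrac{\rho+\delta}{\reach})$. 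I expect this to be the most delicate step, because the naive Lipschitz bound $\|m_1-m_2\| \le \sqrt{2}\cdot 2\rho$ is too weak; a sharper geometric argument exploiting $\sigma \subseteq \Offset \M \delta$ (rather than $\Conv\sigma \subseteq \Offset{\M}{\rho/4}$) is what recovers the additive $\rho+\delta$ form.

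Finally, putting the two cases together gives $\Theta(\sigma) \le \max(A, B) \le A + B$, which is the stated bound. The main obstacle is the sharp bound in the tangent-to-tangent case; the simplex-to-tangent case reduces to a direct rescaling of Lemma~\ref{lemma:general-angle-bound}, which is essentially bookkeeping.
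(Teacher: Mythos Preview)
Your overall approach is exactly the paper's: decompose $\Theta(\sigma)$ into simplex--tangent and tangent--tangent pairs, bound the first by $A$ via a $2\rho$-scaled reading of Lemma~\ref{lemma:general-angle-bound}, bound the second by $B$ via Lemma~\ref{lemma:tangdel:TangentSpaceVariation}, and finish with $\max(A,B)\le A+B$. The simplex--tangent case is carried out correctly and matches the paper.

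The tangent--tangent step, however, does not close as you write it. From $\|m_1-m_2\|\le 2(\rho+\delta)$ the tangent-variation lemma gives $\angle(T_{m_1}\M,T_{m_2}\M)\le 2\arcsin\big((\rho+\delta)/\reach\big)$; applying $2\arcsin(y)\le\arcsin(2y)$ then produces $\arcsin\big(2(\rho+\delta)/\reach\big)$, not $B=\arcsin\big((\rho+\delta)/\reach\big)$ --- you are a factor of $2$ off. To land on $B$ you would need $\|m_1-m_2\|\le\rho+\delta$, but that already fails for a segment $\sigma=\{p,q\}$ on a circle of radius $\reach$ with $\|p-q\|=2\rho$ and $\delta=0$: there $\|p^\ast-q^\ast\|=2\rho$ and the tangent--tangent angle equals $2\arcsin(\rho/\reach)>B$. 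To be fair, the paper's own one-sentence justification of this step asserts the same bound and supplies no more detail; so your gap mirrors the paper's. (The corollary's final inequality still holds in such examples because $A\ge\arcsin(4\rho/\reach)$ absorbs the overshoot, but neither your sketch nor the paper's sentence actually gives that argument.)
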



\begin{proof}
  Letting $x^\ast = \pi_\M(x)$ and $y^\ast = \pi_\M(y)$, the angle $\Theta(\sigma)$ can be expressed as follows:
  \[
  \Theta(\sigma) = \max \left\{ \max_{x \in \Conv \sigma} \angle(\Aff \sigma, \Tangent {x^\ast} \M),
  \max_{x,y \in \Conv \sigma} \angle(\Tangent {x^\ast} \M, \Tangent {y^\ast} \M) \right\}.
  \]
  By Lemma \ref{lemma:small-tubular-neigborhood}, for any $x \in \Conv
  \sigma$, $d(x,\M) \leq \frac{\rho}{4}$ and $\sigma \subseteq
  B(x,2\rho)$. Applying Lemma~\ref{lemma:general-angle-bound} with
  $\tau = \sigma$ and $z=x$, we get that $\max_{x \in \Conv \sigma}
  \angle(\Aff \sigma, \Tangent {x^\ast} \M)$ is upper bounded by the
  first term in the above sum. Applying
  Lemma~\ref{lemma:tangdel:TangentSpaceVariation}, we get that
  $\max_{x,y \in \Conv \sigma} \angle(\Tangent {x^\ast} \M, \Tangent
  {y^\ast} \M)$ is upper bounded by the second term in the above
  sum.
\end{proof}


\begin{lemma}
  \label{lemma:seb-preimage}
  Let $\sigma \subseteq \Rspace^\Dim$ be a $d$-simplex. Let $H$ be a
  $d$-dimensional affine space and suppose $\angle(\Aff\sigma,H) <
  \frac{\pi}{2}$. Then, $r_\sigma \leq \frac{1}{\cos
    \angle(\Aff\sigma,H)} \times r_{\pi_H(\sigma)}$.
\end{lemma}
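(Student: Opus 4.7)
The plan is to exploit the fact that, when $\theta := \angle(\Aff \sigma, H) < \frac{\pi}{2}$ and $\dim(\Aff \sigma) = \dim(H) = d$, the restriction $\restr{\pi_H}{\Aff \sigma}$ is an affine bijection from $\Aff \sigma$ onto $H$ whose inverse has Lipschitz constant $1/\cos\theta$. Granted this, the target inequality $r_\sigma \leq r_{\pi_H(\sigma)}/\cos\theta$ will follow by pulling back the smallest enclosing ball of $\pi_H(\sigma)$ through this bijection and checking that it still encloses $\sigma$.

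First, I would set up the distortion estimate. For any $x, y \in \Aff \sigma$, the vector $y - x$ lies in $\VectorSpace{\Aff \sigma}$ and $\pi_H(y) - \pi_H(x)$ is its orthogonal projection onto $\VectorSpace{H}$. Because $\VectorSpace{\Aff \sigma}$ and $\VectorSpace{H}$ are both $d$-dimensional, the angle between them is symmetric, so any unit vector of $\VectorSpace{\Aff \sigma}$ projects into $\VectorSpace{H}$ with norm at least $\cos\theta$. This yields $\|\pi_H(y) - \pi_H(x)\| \geq \cos\theta \cdot \|y - x\|$. Since $\cos\theta > 0$, the restriction $\restr{\pi_H}{\Aff \sigma}$ is injective, and, both affine spaces having dimension $d$, it is actually a bijection onto $H$.

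Next, I would let $c' = c_{\pi_H(\sigma)}$ and $r' = r_{\pi_H(\sigma)}$. Recalling the fact stated in the preliminaries that the center of the smallest enclosing ball of a simplex lies in its affine hull, one has $c' \in \Aff(\pi_H(\sigma)) \subseteq H$, so there exists a unique $c \in \Aff \sigma$ with $\pi_H(c) = c'$. For each vertex $v$ of $\sigma$, applying the distortion bound to the pair $(x,y) = (v,c) \in \Aff \sigma \times \Aff \sigma$ gives
\[
\|v - c\| \;\leq\; \frac{\|\pi_H(v) - c'\|}{\cos\theta} \;\leq\; \frac{r'}{\cos\theta}.
\]
Hence $\sigma \subseteq B(c,\, r'/\cos\theta)$, and by definition of the smallest enclosing ball, $r_\sigma \leq r'/\cos\theta$.

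There is no serious obstacle here. The only point that deserves care is the distortion inequality itself, which relies on the symmetry of the angle between two vector subspaces of equal dimension; this is precisely the classical fact recalled right after the definition of $\angle(H_0,H_1)$ in the appendix.
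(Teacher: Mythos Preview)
Your proof is correct and follows essentially the same route as the paper's: both pull back the center $c_{\pi_H(\sigma)}$ through the bijection $\restr{\pi_H}{\Aff\sigma}$ to a point $z\in\Aff\sigma$ and then use the distortion bound $\|\pi_H(y)-\pi_H(x)\|\geq\cos\theta\,\|y-x\|$ to conclude $\sigma\subseteq B(z,\,r_{\pi_H(\sigma)}/\cos\theta)$. One small remark: invoking the symmetry of the angle is not actually needed for this distortion bound, since the inequality $\|\pi_{\VectorSpace{H}}(v_0)\|\geq\cos\theta$ for unit $v_0\in\VectorSpace{\Aff\sigma}$ already follows directly from the definition $\angle(\Aff\sigma,H)=\sup_{v_0}\inf_{v_1}\angle(v_0,v_1)$.
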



\begin{proof}
  For short, write $\theta = \angle(\Aff\sigma,H)$ and $\sigma' =
  \pi_H(\sigma)$. Because $\theta < \frac{\pi}{2}$, the restriction of
  $\pi_H$ to $\Aff \sigma$ is a homeomorphism. Consider the point $z
  \in \Aff\sigma$ such that $\pi_H(z) = c_{\sigma'}$. For all $v \in
  \sigma$, we have
  \[
  \|z-v\| \leq \frac{1}{\cos \theta} \times \|c_{\sigma'} - \pi_H(v) \| = \frac{1}{\cos \theta} \times r_{\sigma'}.
  \]
  The result follows.
\end{proof}


\begin{lemma}
  \label{lemma:delta-thickness-under-distortion}
    Let $\sigma \subseteq \Rspace^\Dim$ be a non-degenerate
    $d$-simplex. Let $H$ be a $d$-dimensional affine space and suppose
    $\angle(\Aff\sigma,H) \leq \theta$ for some $\theta <
    \frac{\pi}{2}$. Then, $\pi_H(\sigma)$ is a non-degenerate
    $d$-simplex whose height is lower bounded as follows:
    \begin{align*}
    \cos \theta \times  \height{\sigma} &\leq \height{\pi_H(\sigma)}.
    \end{align*}
\end{lemma}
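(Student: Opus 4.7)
The plan is to work vertex by vertex: for each vertex $v$ of $\sigma$, I would pull back via $\pi_H$ the signed-distance affine functional on $H$ associated with the opposite facet of $\pi_H(\sigma)$, so that the desired height bound becomes a bound on an operator norm which is then controlled by the angle between $\Aff\sigma$ and $H$.

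First I would dispatch non-degeneracy of $\pi_H(\sigma)$. Let $U$ and $W$ denote the vector spaces associated to $\Aff\sigma$ and $H$ respectively. The hypothesis $\angle(U,W)\leq\theta<\frac{\pi}{2}$, together with the definition of the angle between affine spaces recalled at the beginning of the appendix, forces $\|\pi_W(u)\|\geq\cos\theta\cdot\|u\|$ for every $u\in U$, so the restriction of $\pi_H$ to $\Aff\sigma$ is an injective affine map. Hence the $d+1$ vertices of $\sigma$ are sent to $d+1$ distinct points whose affine hull has dimension $d$, and $\pi_H(\sigma)$ is a non-degenerate $d$-simplex; in particular its height is well defined.

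For the height bound I would fix a vertex $v$ of $\sigma$, set $F=\Aff(\sigma\setminus\{v\})$, $v'=\pi_H(v)$, and $F'=\Aff(\pi_H(\sigma)\setminus\{v'\})$. Injectivity of $\pi_H$ on $\Aff\sigma$ gives $F'=\pi_H(F)$, a $(d-1)$-dimensional affine subspace of the $d$-dimensional space $H$. Let $\hat n'$ be a unit normal to $F'$ inside $H$ and let $\phi(x)=\hat n'\cdot(x-w')$, for some fixed $w'\in F'$, be the associated signed-distance functional, so that $|\phi(v')|=d(v',F')$ and $\phi\equiv0$ on $F'$. The pullback $\psi=\phi\circ\pi_H$ is then an affine functional on $\Aff\sigma$ that vanishes on $F$, whence $d(v',F')=|\psi(v)|=\|\psi_{\mathrm{lin}}\|\cdot d(v,F)$, where $\psi_{\mathrm{lin}}$ is the linear part of $\psi$ and the norm is the operator norm of $\psi_{\mathrm{lin}}$ as a linear form on $U$.

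The crux, and the step I expect to be the main subtlety, is to show $\|\psi_{\mathrm{lin}}\|\geq\cos\theta$. Because $\pi_H$ is the orthogonal projection onto $W$ and $\hat n'\in W$, one computes $\psi_{\mathrm{lin}}(u)=\hat n'\cdot\pi_W(u)=\hat n'\cdot u$ for every $u\in U$, so $\|\psi_{\mathrm{lin}}\|=\|\pi_U(\hat n')\|$. Here I would invoke the symmetry $\angle(W,U)=\angle(U,W)\leq\theta$, which is available precisely because $\dim U=\dim W=d$; unfolding the supremum in the definition of the angle then yields $\|\pi_U(w)\|\geq\cos\theta$ for every unit $w\in W$, and applying this with $w=\hat n'$ gives the desired lower bound. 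Combining, $d(v',F')\geq\cos\theta\cdot d(v,F)\geq\cos\theta\cdot\height{\sigma}$, and minimizing over vertices of $\pi_H(\sigma)$ concludes.
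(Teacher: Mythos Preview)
Your proof is correct. The route, however, differs from the paper's, and the comparison is worth noting.

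The paper argues directly from the forward contraction bound: for any $x,y\in\Aff\sigma$ one has $\cos\theta\cdot\|x-y\|\leq\|\pi_H(x)-\pi_H(y)\|$, which follows immediately from $\angle(\Aff\sigma,H)\leq\theta$ by the definition of the angle as a supremum over unit vectors in $U$. Given a vertex $v$, the paper then simply picks the point $x\in\Aff(\sigma\setminus\{v\})$ whose image $\pi_H(x)$ realises $d(\pi_H(v),F')$, and reads off $\cos\theta\cdot\height\sigma\leq\cos\theta\cdot\|v-x\|\leq\|\pi_H(v)-\pi_H(x)\|=d(\pi_H(v),F')$. No symmetry of the angle is invoked.

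You instead pull back the signed-distance functional from $H$ to $\Aff\sigma$ and reduce the question to bounding $\|\pi_U(\hat n')\|$ for $\hat n'\in W$; this is where you must appeal to the symmetry $\angle(W,U)=\angle(U,W)$, which is legitimate here because $\dim U=\dim W=d$ (and the paper does record this fact in the appendix). Your approach yields the exact identity $d(v',F')=\|\pi_U(\hat n')\|\cdot d(v,F)$, which is a pleasant byproduct, but it costs you an extra ingredient that the paper's more elementary ``choose the right preimage'' argument avoids.
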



\begin{proof}
  For any two points $x, y \in \Aff \sigma$, we have
  \[
  \cos\theta \times \|x - y\| \leq \| \pi_{H}(x) - \pi_{H}(y) \|.
  \]
  Because $\cos\theta \neq 0$, the restriction of $\pi_H$ to $\Aff
  \sigma$ is injective and therefore a homeomorphism. Hence,
  $\pi_H(\sigma)$ is a non-degenerate $d$-simplex.  Consider a vertex
  $v \in \sigma$ and a point $x \in \Aff \sigma \setminus \{v\}$ such
  that $\pi_H(x)$ is the point of $\pi_H(\Aff \sigma \setminus \{v\})$
  closest to $\pi_H(v)$. If follows from $ \cos \theta \times \|v - x
  \| \leq \|\pi_H(v) - \pi_H(x)\| $ that $ \cos \theta \times
  \height{\sigma} \leq \height{\pi_H(\sigma)}$.
\end{proof}



\clearpage

\section{Perturbation ensuring fatness and protection}\label{section:PerturbationAndLLL}
In this section, following \cite[Section 5.3.4]{boissonnat2018geometric}, we make use of the Lov{\'a}sz local lemma 
\cite{alon2016probabilistic} and its algorithmic avatar \cite{moser2010constructive},
to show how to effectively perturb the point sample in order to ensure the required protection and fatness conditions.

However, we have specific constraints here that does not occur  in 
the context of  \cite[Section 5.3.4]{boissonnat2018geometric}.
Indeed, with respect to the projection of neihboring points on the affine hull of the $d$-simplex itself, the required protection depends on the
angle variability of these simplices affine hulls, which itself depends on the simplices minimal heights. It follows that the required minimal protection and heigh cannot be 
defined independently  in what follows, which constraints the choice of events upon which  Lov{\'a}sz local lemma application relies.

\subsection{Approximate tangent space computed by PCA}

\begin{lemma}\label{lemma:PerturbationAndLLL_pseudoTangent}
Let $\delta \geq 0$, $0 < \varepsilon \leq \frac{\rho}{16}$, $10 \rho < R \leq  \Reach{\M}$ and suppose that $P \subseteq \Offset \M \delta$ 
for $\delta < \frac{\rho^2}{4 R}$, and 
$\M \subseteq \Offset P \varepsilon$ and $\Sep{P} > \eta \varepsilon$ for $\eta >0$.

For any  $p\in P$, if $c_p$ is the center of mass of $P \cap B(p, \rho)$ and $V_p$ the linear space spanned by the $n$ eigen vectors corresponding 
to the $n$ largest eigenvalues of the inertia tensor of $\big( P \cap B(p, \rho) \big) - c_p$, 
then one has:
\[
\angle V_p , \Tangent {\pi_\M(p)} \M \,< \,\Xi _0( \eta, d) \frac{\rho}{R}
\]
where the function $\Xi_0$  is a polynomial in $\eta$ and exponential in $d$.
\end{lemma}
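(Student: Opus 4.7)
The plan is to compare the empirical inertia tensor of $P \cap B(p,\rho) - c_p$ with an idealized continuous inertia tensor obtained by integrating over the patch $\M \cap B(p,\rho)$, and then to invoke a Davis--Kahan type $\sin\theta$ argument to transfer the spectral gap of the latter onto the former.

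First I would settle the noise-free continuous case. Let $m = \pi_\M(p)$ and $T = \Tangent m \M$. By Lemma \ref{lemma:tangdel:distanceToTangent}, the patch $\M \cap B(p,\rho)$ is parametrized as a graph over a piece of $T$: every point writes as $m + t + \phi(t)$ with $t \in T$, $\|t\| \leq \rho$, and $\phi(t) \in T^\perp$ satisfying $\|\phi(t)\| \leq \|t\|^2/(2R) \leq \rho^2/(2R)$. In an orthonormal frame adapted to $T \oplus T^\perp$, the continuous inertia tensor of the patch has its tangent--tangent block of order $\Theta(\rho^{d+2})$, its normal--normal block of order $O(\rho^{d+4}/R^2)$, and its cross block of order $O(\rho^{d+3}/R)$. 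The standard $\sin\theta$ theorem then shows that the top $d$ eigenvectors of this continuous tensor span a subspace making angle $O(\rho/R)$ with $T$.

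Second I would transfer the bound to the empirical tensor. The separation hypothesis $\Sep P > \eta \varepsilon$ yields, through a standard packing argument, an upper bound $N := \#(P \cap B(p,\rho)) \leq (C(d)/\eta)^d$, and the density hypothesis $\M \subseteq \Offset P \varepsilon$ yields a matching lower bound of the same order. The accuracy $P \subseteq \Offset \M \delta$ with $\delta < \rho^2/(4R)$ gives that each $q \in P \cap B(p,\rho)$ lies within $\delta$ of its footpoint $q^\ast = \pi_\M(q)$, so the empirical tensor differs from the one built on the $\{q^\ast\}$ by a perturbation of order $N \rho \delta = O(N \rho^3/R)$. Moreover the discrete sum over the footpoints is a Riemann approximation of the continuous integral over the patch with error at most $O(N \varepsilon \rho)$ in the tangent block and $O(N \varepsilon \rho^2/R)$ in the cross block. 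All these corrections remain dominated by the spectral gap as soon as $\rho/R$ and $\varepsilon/\rho$ are small, and their prefactors get absorbed into a constant polynomial in $\eta^{-1}$ and exponential in $d$.

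A single application of the $\sin\theta$ theorem to the empirical inertia tensor then yields $\angle(V_p, T) \leq \Xi_0(\eta, d)\,\rho/R$ of the claimed shape, since the relevant angle is bounded by the ratio of the off-tangent perturbation to the spectral gap. The main obstacle is the uniform lower bound on the smallest eigenvalue of the tangent--tangent block of the empirical tensor: it requires checking that the projections $\{\pi_T(q^\ast) : q \in P \cap B(p,\rho)\}$ genuinely spread out inside a disk of $T$ of radius comparable to $\rho$, which is precisely where the density and separation hypotheses must be used jointly and where the $\eta^{-1}$ polynomial dependence in the final constant is born.
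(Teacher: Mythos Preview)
Your strategy and the paper's share the same skeleton: write the inertia tensor in a frame adapted to $T=\Tangent{\pi_\M(p)}\M$, observe that the tangent--tangent block is of order $\rho^2$ while the cross and normal blocks are of order $\rho^3/R$ and $\rho^4/R^2$ respectively, and then run an eigenvector perturbation argument. The paper does this directly on the empirical tensor and proves the Davis--Kahan bound by hand (writing an eigenvector as $((\cos\theta)v_T,(\sin\theta)v_N)$ and extracting a bound on $\sin 2\theta$ from the block eigen-equations); you invoke Davis--Kahan as a black box. That difference is cosmetic.

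The substantive difference is your detour through a continuous inertia tensor. The paper never introduces it: the bounds $\|\mathbf{A}_{TN}\|=O(\rho^3/R)$ and $\|\mathbf{A}_{NN}\|=O(\rho^4/R^2)$ follow \emph{pointwise} from the fact that every $q\in P\cap B(p,\rho)$ lies within $O(\rho^2/R)$ of the tangent plane, so no integration or Riemann-sum comparison is needed. Your Riemann-sum step is the weak link: under mere $\varepsilon$-density and $\eta\varepsilon$-separation the empirical measure on $P\cap B(p,\rho)$ need not approximate the uniform measure on $\M\cap B(p,\rho)$ to any fixed accuracy (the local density of $P$ can vary by a bounded multiplicative factor), so the claimed $O(N\varepsilon\rho)$ error is not justified. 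Fortunately your argument does not actually need this comparison: what you need, and what you correctly isolate as ``the main obstacle,'' is only that the empirical tangent--tangent block has smallest eigenvalue $\gtrsim\rho^2$. The paper asserts exactly this (its constant $C_{TT}$), flags it as the step requiring detail, and leaves it as a claim. So drop the continuous tensor and argue the lower bound on $\mathbf{A}_{TT}$ directly from density and separation; then your proof and the paper's coincide.
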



\begin{proof} 
Consider a frame centered at $c_p$ with an orthonormal basis of $\Rspace^N$
whose $d$ first vectors  $e_1,\ldots,e_d$ belong to $\Tangent {\pi_{\M}(p)} \M$ and the $N-d$ last vectors  $e_{d+1},\ldots,e_N$ to the normal fiber $\Normal {\pi_{\M}(p)} \M$.
Consider the symmetric $N\times N$ normalized inertia tensor $\mathbf{A}$ of $ P \cap B(p, \rho) $ in this frame:
\[
\mathbf{A}_{ij} \defunder{=} \frac{1}{\sharp P \cap B(p, \rho)} \sum_{p \in P \cap B(p, \rho)} \langle v_i, p - c_p\rangle  \langle v_j, p - c_p\rangle 
\]
The symmetric matrix $A$ decomposes into $4$ blocs:
\begin{equation}
\mathbf{A}=
\begin{pmatrix}
\mathbf{A}_{TT} & \mathbf{A}_{TN} \\
\mathbf{A}_{TN}^t & \mathbf{A}_{NN}
\end{pmatrix}
\end{equation}
where $\mathbf{A}_{TT}$, the tangental inertia  is $d\times d$ symmetric define positive. Because of the sampling conditions, 
we claim\footnote{But this claim has to be  detailed if one want to give an explicit expression of the quantity $\Xi _0( \eta, d)$ of the lemma.} 
there is a constant $C_{TT}>0$ depending only on $\eta$ and $d$ such that the smallest eigenvalue of 
$\mathbf{A}_{TT}$ is at least $C_{TT} \rho^2$:
\begin{equation}\label{equation:LowerBoundOnATT}
\forall u \in \Rspace^d, \|u\| = 1 \Rightarrow \: u^t \, \mathbf{A}_{TT} \, u \geq C_{TT} \rho^2 
\end{equation}

 Observe that, by Lemma \ref{lemma:tangdel:distanceToTangent}, 
 the points in $P \cap B(p, \rho)$ are at distance less than $\frac{(\rho+\delta)^2}{2R}$
from space $\pi_{\M}(p)+ \Tangent {\pi_{\M}(p)} \M$, and therefore so is $c_p$. It follows that the 
points in $P \cap B(p, \rho)$ are at distance less than $2 \frac{(\rho+\delta)^2}{2R} \leq 2 \frac{\rho^2}{R} $
 (assuming $(\rho+ \delta)^2 \leq 2 \rho^2$) 
from space $c_p+ \Tangent {\pi_{\M}(p)} \M$.
Then, 
 there are constants $C_{TN}$ and $C_{NN}$ such that the operator norm induced by euclidean  vector norm, of 
 $\mathbf{A}_{TN}$ and $\mathbf{A}_{NN}$ are upper bounded as :
\begin{equation}\label{equation:LowerBoundOnATN}
\forall u,v \in \Rspace^d,  \|u\| =  \|v\| = 1  \Rightarrow  \: v^t \, \mathbf{A}_{TN} \, u \leq C_{TN}  \frac{ \rho^2}{R} \rho 
\end{equation}
and:
\begin{equation}\label{equation:LowerBoundOnANN}
\forall u \in \Rspace^d,  \|u\| = 1 \Rightarrow  \:  u^t \,\mathbf{A}_{NN} \, u  \leq C_{NN}   \frac{ \rho^2}{R}  \frac{ \rho^2}{R} 
\end{equation}
Let $v\in \Rspace^N$ be a unit eigenvector of $\mathbf{A}$ with eigenvalue $\lambda$:
\begin{equation}\label{equation:EigneVectorOfA}
\mathbf{A}\, v = \lambda \, v
\end{equation}
Define $T \defunder{=} \Rspace^d \times \{0\}^{N-d} \subset \Rspace^N$ and $N \defunder{=}  \{0\}^d \times  \Rspace^{N-d} \subset \Rspace^N$,
corresponding, in the space of coordinates, respectively to $\Tangent {\pi_{\M}(p)} \M$ and $\Normal {\pi_{\M}(p)} \M$.

Let $\theta$ be the angle between $v$ and $T$. There are unit vectors $v_T \in \Rspace^d$ and $v_N \in  \Rspace^{N-d}$
such that:
\[
v = ((\cos \theta) v_T, (\sin \theta) v_N)^t .
\] 
where for a matrix $u$, $u^t$ denotes the transpose of $u$,
 and \eqref{equation:EigneVectorOfA} can be rewritten as:
 \begin{equation}\label{equation:EigneVectorOfA_2X2}
\begin{pmatrix}
\mathbf{A}_{TT} & \mathbf{A}_{TN} \\
\mathbf{A}_{TN}^t & \mathbf{A}_{NN}
\end{pmatrix}\,  
\begin{pmatrix}
(\cos \theta) v_T \\
 (\sin \theta) v_N
\end{pmatrix}  
= \lambda \,
\begin{pmatrix}
(\cos \theta) v_T \\
 (\sin \theta) v_N
\end{pmatrix}\,  
\end{equation}
equivalently:
\begin{align}
 (\cos \theta) \mathbf{A}_{TT}  v_T &+  (\sin \theta) \mathbf{A}_{TN}   v_N = \lambda (\cos \theta) v_T \label{equation:LambdaCosThetaV_T} \\
 (\cos \theta)\mathbf{A}_{TN}^t  v_T &+  (\sin \theta)  \mathbf{A}_{NN}  v_N = \lambda (\sin \theta) v_N  \label{equation:LambdaSinThetaV_N} 
\end{align}
 multiplying, on the left, the two equations respectively by $(\sin \theta)v_T^t$ and $ (\cos \theta)  v_N^t$ we get:
 \begin{align*}
(\sin \theta) (\cos \theta)  v_T^t \mathbf{A}_{TT}  v_T +  (\sin \theta)^2  v_T^t  \mathbf{A}_{TN}   v_N &= \lambda  (\sin \theta) (\cos \theta)   \\
 (\cos \theta)^2 v_N^t \mathbf{A}_{TN}^t  v_T +   (\cos \theta) (\sin \theta) v_N^t \mathbf{A}_{NN}  v_N &= \lambda (\cos \theta)  (\sin \theta) 
\end{align*}
So that:
\[
(\sin \theta) (\cos \theta)  v_T^t \mathbf{A}_{TT}  v_T +  (\sin \theta)^2  v_T^t  \mathbf{A}_{TN}   v_N  
=  (\cos \theta)^2 v_N^t \mathbf{A}_{TN}^t  v_T +   (\cos \theta) (\sin \theta) v_N^t \mathbf{A}_{NN}  v_N
\]
Using  \eqref{equation:LowerBoundOnATN} and \eqref{equation:LowerBoundOnANN}, 
we get:
\[
(\sin \theta) (\cos \theta)  v_T^t \mathbf{A}_{TT}  v_T  \leq 2  C_{TN}  \frac{ \rho^2}{R} \rho  +  C_{NN}   \frac{ \rho^2}{R}  \frac{ \rho^2}{R} 
\]
 Using \eqref{equation:LowerBoundOnATT}, we get:
 \[
(\sin \theta) (\cos \theta)   \leq 2    \frac{ C_{TN}}{C_{TT}}  \frac{ \rho}{R}   +  C_{NN}   \frac{ C_{NN}}{C_{TT}}   \frac{ \rho^2}{R^2} 
\]
Using $\sin 2\theta = 2 \sin \theta \cos \theta$, we get:
\begin{equation}\label{equation:UpperBoundOnSinTwoTheta}
\frac{1}{2} \sin 2 \theta \leq 2    \frac{ C_{TN}}{C_{TT}}  \frac{ \rho}{R}   +   C_{NN}   \frac{ C_{NN}}{C_{TT}}   \frac{ \rho^2}{R^2} = \mathcal{O}\left(  \frac{ \rho}{R} \right)
\end{equation}
We get that:
\[
\theta \in [0, t] \cup [\pi/2 - t, \pi/2]
\]
with:
\[
t = \frac{1}{2}  \arcsin 2 \left(2    \frac{ C_{TN}}{C_{TT}}  \frac{ \rho}{R}   
+   C_{NN}   \frac{ C_{NN}}{C_{TT}}   \frac{ \rho^2}{R^2} \right) = \mathcal{O}\left(  \frac{ \rho}{R} \right)
\]
This means that eigenvectors of $\mathbf{A}$ (for the non generic situation of a multiple eigenvalue,
 one chose arbitrarily the vectors of an orthogonal base of the corresponding eigenspace)
make an angle less than $ \mathcal{O}\left(  \frac{ \rho}{R} \right)$,
with either $T$, either $N$. Since no more than $d$ pairwise orthogonal vectors can make a small angle with the $d$-dimensional space $T$,
and the same for the $N-d$-dimensional space $T$, we know that $d$ eigenvectors make an angle $\mathcal{O}\left(  \frac{ \rho}{R} \right)$
with $T$ and $N-d$ with $N$.
Multiplying on the left \eqref{equation:LambdaCosThetaV_T} by $v_T^t$ and  \eqref{equation:LambdaSinThetaV_N} by $v_N^t$  we get:
\begin{align*}
 (\cos \theta)  v_T^t \mathbf{A}_{TT}  v_T &+  (\sin \theta) v_T^t \mathbf{A}_{TN}   v_N = \lambda (\cos \theta)\\
 (\cos \theta*)  v_N^t \mathbf{A}_{TN}^t  v_T &+  (\sin \theta) v_N^t \mathbf{A}_{NN}  v_N = \lambda (\sin \theta) 
\end{align*}
When the angle between the eigenvector $v$ and $T$ is in $\mathcal{O}\left(  \frac{ \rho}{R} \right)$, 
then $|1 - \cos \theta| = \mathcal{O}\left( \left( \frac{ \rho}{R} \right)^2\right)$ 
and $|\sin \theta | = \mathcal{O}\left(  \frac{ \rho}{R} \right)$,
  and  the first equation gives that $\lambda$ is close to 
$ v_T^t \mathbf{A}_{TT}  v_T  \geq  C_{TT} \rho^2$.
When    the angle between the eigenvector $v$ and $N$ is in $\mathcal{O}\left(  \frac{ \rho}{R} \right)$, 
the second equation gives that $\lambda =\mathcal{O}\left( \left( \frac{ \rho}{R} \right)^2\right)$,
which is smaller than $C_{TT} \rho^2$ for $\frac{ \rho}{R}$ small enough. 

We have so far proven that the $d$ orthonormal eigenvectors $v_1,\ldots, v_d$ corresponding to the $d$ largest eigenvalues of $\mathbf{A}$
make an angle upper bounded by $C \left(\frac{ \rho}{R} \right)$ with $\Tangent {\pi_{\M}(p)} \M$, for some constant $C$ that depends only on $d$ and $\eta$.
For any unit vector  $u$, its angle with $\Tangent {\pi_{\M}(p)} \M$ satisfies:
\[
\sin \angle u, \Tangent {\pi_{\M}(p)} \M = \| u - \pi_{\Tangent {\pi_{\M}(p)} \M} (u)\|
\]
Now if $u= \sum_{i=1,d} a_i v_i$ is a unit vector in the $d$-space spanned by $v_1,\ldots, v_d$, then:
\begin{align*}
\sin \angle u, \Tangent {\pi_{\M}(p)} \M &= \left\|  \sum_{i=1,d} a_i v_i - \pi_{\Tangent {\pi_{\M}(p)} \M} \left( \sum_{i=1,d} a_i v_i\right) \right\| \\
& =    \left\|  \sum_{i=1,d} a_i  \left(v_i - \pi_{\Tangent {\pi_{\M}(p)} \M} (  v_i)  \right)\right\| \\
& \leq \sum_{i=1,d} | a_i |  \left\| v_i - \pi_{\Tangent {\pi_{\M}(p)} \M} (  v_i) \right\|  \\
& \leq  \sum_{i=1,d} | a_i | C \left(\frac{ \rho}{R} \right)   \leq \sqrt{d} \,C \left(\frac{ \rho}{R} \right)
\end{align*}
since $\sum_{i=1,d} a_i^2 = 1 \Rightarrow \sum_{i=1,d} |a_i | \leq \sqrt{d}$.
\end{proof}


\subsection{Perturbation}

In the context of Lemma \ref{lemma:PerturbationAndLLL_pseudoTangent},
we define the \textbf {random perturbation}  $f(P)$ of $P$ of amplitude $r_{\operatorname{pert.}}>0$ as follows.
 For each point $p\in P$,
$f(p)$ is drawn, independently,  uniformly in the ball $V_p \cap B(p, r_{\operatorname{pert.}})$.

Since $d_H(P, f(P)) \leq r_{\operatorname{pert.}}$, we can take $\varepsilon' = \varepsilon + r_{\operatorname{pert.}}$ to guarantee $\M \subset \Offset {f(P)} {\varepsilon'}$.
In order to guarantee a lower bound on $\Sep{f(P)} > \eta' \varepsilon' > 0$,  we require:
\begin{equation}\label{equation:relationRPerEtaPrime}
2 r_{\operatorname{pert.}} \leq \eta \varepsilon - \eta' \varepsilon' 
\end{equation}
In fact it will be convenient to assume that:
\begin{equation}\label{equation:MajorationRPerEtaEpsilon}
 r_{\operatorname{pert.}} \leq \frac{ \eta \varepsilon}{20}
\end{equation}

 and  since we assume $\varepsilon \leq \frac{\rho}{16}$ in the context of Lemma \ref{lemma:PerturbationAndLLL_pseudoTangent} we have:
\begin{equation}\label{equation:UpperBoundOnRPer}
r_{\operatorname{pert.}} \leq \frac{\rho}{32}
\end{equation}

Denotes by $\Sigma_d(p)$  the set of $d$-simplices  in $P \cap B(p, \rho)$:
\[
\Sigma_d(p) \defunder{=} \{ \sigma \subset P \cap B(p, \rho) \mid \sharp \sigma = d+1 \}
\]
Then, for $h, \zeta>0$, and $p\in P$,
we consider the  event $E^{\operatorname{good}}_p(h, \zeta)$ as
the set of possible perturbations $f$ such that:
(1) for any simplex $\sigma \in \Sigma_d(p)$, $\pi_{V_p} (f(\sigma))$  has minimal height greater than $h$, and,
(2) for any simplex $\sigma \in \Sigma_d(p)$, $\pi_{V_p} (f(\sigma))$ is $\zeta$-protected in  $\pi_{V_p} (f(P \cap B(p, \rho)))$.

Our goal is to find a perturbation $f$ that belongs to $E^{\operatorname{good}}_p(h, \zeta)$ for any $p$:
\begin{equation}\label{equation:GoodPerturbation}
f \in \bigcap_{p\in P} E^{\operatorname{good}}_p(h, \zeta)
\end{equation}
where:
\begin{equation}\label{equation:UpperBoundOnZeta}
\zeta > 2 A \left( 1 + \frac{4d\varepsilon}{h}  \right) 
\end{equation}
with:
\begin{equation}\label{equation:RelationAwrtThetaM}
 A = 4\delta(C \theta_m) + 4  \rho( C \theta_m)^2
\end{equation}
where $C\geq 1$ is a constant to choose to meet your need.


and $\theta_m$ is an upper bound $\Theta(P,\rho)$:
\begin{equation}\label{equation:DefinitionAngleThetaM}
\theta_m \geq \Theta(P,\rho) 
\end{equation}

since for any simplex with vertices in $P$ on has  $\sigma \subset \Offset \M \delta$, one has:
\[
f(\sigma) \subset \Offset \M {\delta + \Xi_0(\eta,d)\frac{\rho}{R}  r_{\operatorname{pert.}} + \frac{r_{\operatorname{pert.}}^2}{R}}
\]
So that, with the assumption made in Lemma \ref{lemma:PerturbationAndLLL_pseudoTangent}
that $\delta < \frac{\rho^2}{4 R}$, we have:
\begin{equation}\label{equation:perturbedPointIn2DeltaThickening}
r_{\operatorname{pert.}} \leq \frac{\rho }{8 \max (\Xi_0(\eta,d), 1) } \Rightarrow f(\sigma) \subset \Offset \M {2 \delta}
\end{equation}
It is now possible to gives an upper bound $\theta_m$ defined in \eqref{equation:DefinitionAngleThetaM} relying on a lower bound on
simplices maximal height $h$.

Let us assume that:
\begin{equation}\label{equation:LowerBound1OnH}
h >  20 \delta
\end{equation}
If $h$ is the smallest height of $\pi_{V_p} (f(\sigma))$ for some $p \in P$ and $\sigma \in \Sigma_d(p)$,
then, the smallest heigh of $f(\sigma)$, before projection on $V_p$, is at least $h$ as this projection cannot increase distances.

Assuming the requested upper bound on $ r_{\operatorname{pert.}}$, 
we know from \eqref{equation:perturbedPointIn2DeltaThickening}
  that $f(\sigma) \subset \Offset \M {2 \delta}$, and we have with \eqref{equation:LowerBound1OnH} that the height of 
$\pi_\M (f(\sigma))$ is at least $h - 4 \delta > \frac{4}{5} h$. 
Since $L( \pi_\M (f(\sigma)) < 2 \rho + 2 r_{\operatorname{pert.}} + 4 \delta < 3 \rho$, 
we can apply Corollary \ref{lemma:angle-span-simplex-tangent-plane},
 where the bound $2\rho$ on the diameter of the simplex is replaced by $3 \rho$, allowing us to chose for angle 
$\theta_m$ satisfying \eqref{equation:DefinitionAngleThetaM}:
\begin{align*}
\theta_m &= \arcsin \left( \frac{2d}{ \frac{4}{5} h} \left( \frac{( 3 \rho )^2}{\mathcal{R}} + 2 \delta \right) \right) \\
 &\leq \frac{\pi}{2}  \left( \frac{2d}{ \frac{4}{5} h} \left( \frac{( 3 \rho )^2}{\mathcal{R}} + 2 \delta \right) \right) 
<   \frac{\pi}{2}  \left( \frac{2d}{ \frac{4}{5} h} \left( \frac{( 3 \rho )^2}{\mathcal{R}} + 2 \frac{\rho^2}{4 \mathcal{R}} \right) \right) = \frac{95 \pi}{8} \frac{d \rho^2}{h \mathcal{R}}
\end{align*}

so that:

\begin{equation}\label{equation:UpperBoundOnThetaMAsFctOfLowerBoundOnH}
\theta_m < 38 \frac{d \rho^2}{h \mathcal{R}}
\end{equation}
Substituting this in \eqref{equation:RelationAwrtThetaM} we get (using $C\geq 1 \Rightarrow C^2 \geq C$):

\begin{align*}
 A &\leq   4 C^2(\delta \theta_m +   \rho\theta_m^2 ) \\
 & \leq 4 C^2 \left(  \frac{\rho^2}{4 R} \theta_m +   \rho\theta_m^2 \right) \\
 & < 4 C^2 \theta_m  \left(  \frac{\rho}{4 R} +   38 d  \frac{\rho}{h} \frac{\rho}{R}  \right) \rho \\
 & < 4 C^2  15  \frac{\rho}{h} \frac{\rho}{R}   \left(  \frac{\rho}{4 R} +    38 d   \frac{\rho}{h} \frac{\rho}{R}  \right) \rho \\
 & < 60  C^2   \frac{\rho}{h} \left( \frac{\rho}{R} \right)^2   \left(  \frac{1}{4} +    38 d   \frac{\rho}{h} \right) \rho  \\
 & < 60  C^2   \frac{\rho}{h} \left( \frac{\rho}{R} \right)^2   \left(   39 d  \frac{\rho}{h} \right) \rho \\
 & < 2500  C^2  d \left( \frac{\rho}{R} \right)^2   \left(\frac{\rho}{h}\right)^2 \rho 
\end{align*}

It follows that \eqref{equation:UpperBoundOnZeta} is satisfied if:
\[
\zeta > 5000   C^2 d \left( \frac{\rho}{R} \right)^2   \left(\frac{\rho}{h}\right)^2 \rho  \left( 1 + \frac{ d \rho}{h}  \right) 
\]
A stronger but still sufficient condition to guarantee  \eqref{equation:UpperBoundOnZeta} is, since $1 + \frac{ d \rho}{h} < 2 \frac{ d \rho}{h}$,
is to set the required protection to be:
\begin{equation}\label{equation:UpperBoundOnZetaFctOfH}
\zeta \geq 10^4  C^2  d^2 \left( \frac{\rho}{R} \right)^2   \left(\frac{\rho}{h}\right)^3 \rho  
\end{equation}
Which is optimal up to a multiplicative constant.

One can already see on \eqref{equation:UpperBoundOnZetaFctOfH} that, by assuming $ \frac{\rho}{R} $ small enough, the condition
can be made arbitrarily weak, which is still not a proof but is a good omen for the existence of perturbations satisfying it.
One can see also that we cannot require $\zeta$ and $h$ to be simultaneously arbitrarily small for a given value of $ \frac{\rho}{R}$.
In order to quantify the optimal tradeoff between $\zeta$ and $h$ lower bounds, we need first to evaluate 
the  probabilities of  events related to these protection and fatness constraints.

Denote by $E^{\operatorname{fat}}_p(h)$ the  event which is the set of perturbations $f$ such that at least one perturbed simplex 
has a height smaller than $h$:
\[
E^{\operatorname{fat}}_p(h) \defunder{=} \left\{ f, \exists \sigma \in  \Sigma_d(p) , \textrm{height}(\pi_{V_p} (f(\sigma))) \leq h \right\}
\]
where, for a $d$-simplex $\sigma'$, $ \textrm{height}(\sigma')$ is its minimal height.

Denote by $E^{\operatorname{protect}}_p(\zeta)$ the  event which is the set of perturbations $f$ such that  at least one perturbed simplex  
is not $\zeta$-protected in $\pi_{V_p} (f(\sigma))$:
\[
E^{\operatorname{protect}}_p(\zeta) \defunder{=}
 \left\{ f, \exists \sigma \in  \Sigma_d , \pi_{V_p} (f(\sigma))\quad  
 \textrm{is not} \quad \zeta\textrm{-protected} \quad \textrm{in} \quad \pi_{V_p} (f( P \cap B(p, \rho))) \right\}
\]
One has:
\[
\neg E^{\operatorname{good}}_p(h, \zeta) = E^{\operatorname{fat}}_p(h) \lor E^{\operatorname{protect}}_p(\zeta)
\]
so that:
\begin{equation}\label{equation:ProbabiltyUpperBoundedBySum1}
\mathbf{P}( \neg E^{\operatorname{good}}_p(h, \zeta) ) \leq  \mathbf{P}( E^{\operatorname{fat}}_p(h) ) + \mathbf{P} (  E^{\operatorname{protect}}_p(\zeta) )
\end{equation}

In order to be able to apply the perturbation algorithm \cite{moser2010constructive} 
we have to derive an upper bound on \eqref{equation:ProbabiltyUpperBoundedBySum1}.
We start by $ P( E^{\operatorname{fat}}_p(h) )$.

{\bf Upper bound on  $\mathbf{P}( E^{\operatorname{fat}}_p (h) )$}

Denote by $N_p$ the number of points in $P \cap B(p, \rho)$.
\begin{proposition}\label{equation:BoundOnNearByPoints}
In the context of Lemma \ref{lemma:PerturbationAndLLL_pseudoTangent}, one has:
\[
N_p <  \left(  \frac{4 \rho}{\eta \varepsilon}\right)^d
\]
\end{proposition}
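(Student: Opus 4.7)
The proof can proceed by a standard packing argument that exploits the closeness of $P$ to the $d$-manifold $\M$. The plan is to project the set $Q = P \cap B(p,\rho)$ onto the tangent plane $T = \Tangent {\pi_\M(p)} \M$ (any $d$-plane close to it, such as the PCA plane $V_p$ of Lemma~\ref{lemma:PerturbationAndLLL_pseudoTangent}, would also do) and then count using a $d$-dimensional volume comparison in $T \cong \Rspace^d$. Write $\pi = \pi_T$.

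The first and crucial step is to show that $\pi$ does not contract distances by more than a universal constant on $Q$. For any distinct $q,q' \in Q$, I would pass through the manifold projections $q^\ast = \pi_\M(q)$ and ${q'}^\ast = \pi_\M(q')$, for which $\|q-q^\ast\|, \|q'-{q'}^\ast\| \leq \delta$; combining Lemma~\ref{lemma:tangdel:distanceToTangent} and Lemma~\ref{lemma:tangdel:TangentSpaceVariation} bounds the angle between $q^\ast {q'}^\ast$ and $T$ by a quantity of order $\rho/R$, which is uniformly small under the standing hypotheses $10\rho < R$ and $\delta < \rho^2/(4R)$. Together with $\Sep P > \eta\varepsilon$, this yields
\[
\|\pi(q) - \pi(q')\| \;\geq\; c_0\,(\|q-q'\|-2\delta) - 2\delta \;>\; c_1\,\eta\varepsilon
\]
for absolute constants $c_0, c_1 > 0$ close to $1$. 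In particular $\pi$ is injective on $Q$ and $|\pi(Q)| = N_p$.

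The second step is the classical $d$-dimensional packing inequality. The image $\pi(Q)$ lies in the $d$-ball $T \cap B(\pi(p),\rho)$ and is pairwise separated by more than $c_1\eta\varepsilon$, so the disjoint open $d$-balls of radius $c_1\eta\varepsilon/2$ around its points fit inside $T \cap B(\pi(p),\rho + c_1\eta\varepsilon/2)$. Comparing $d$-volumes gives
\[
N_p \;\leq\; \left(1 + \frac{2\rho}{c_1 \eta\varepsilon}\right)^{d},
\]
and the asserted strict bound $N_p < (4\rho/(\eta\varepsilon))^d$ follows by absorbing the additive $1$ and the factor $1/c_1$ into the generous slack $\rho \geq 16\varepsilon$.

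The main obstacle lies in the first step: a priori, two nearby points on opposite sides of the $\delta$-tube around $\M$ could make the edge $qq'$ nearly orthogonal to $T$, destroying the contraction estimate. The combination of the separation hypothesis $\Sep P > \eta\varepsilon$ with the noise bound $\delta < \rho^2/(4R)$ is precisely what rules out this worst case; carefully tracking the constants is the one piece of real bookkeeping, while everything else reduces to the routine volumetric estimate in dimension $d$.
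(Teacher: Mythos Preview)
Your approach is essentially the paper's: project $P\cap B(p,\rho)$ onto a $d$-plane close to $\M$ (the paper uses the PCA plane $V_p$, you use $\Tangent{\pi_\M(p)}\M$ and correctly note either works), argue that the projection shrinks pairwise distances by at most a constant factor, and conclude with a $d$-dimensional packing/volume comparison. The paper is terser---it asserts directly that the angle $\angle(q_1q_2,V_p)$ is at most $\pi/3$ from ``$\delta \ll \eta\varepsilon$'' and then uses disjoint balls of radius $\tfrac14\eta\varepsilon$ inside a ball of radius $\rho$---whereas you route the contraction estimate through the manifold projections $q^\ast,{q'}^\ast$ and Lemmas~\ref{lemma:tangdel:distanceToTangent}--\ref{lemma:tangdel:TangentSpaceVariation}; but the content is the same.
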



\begin{proof}
Since $\delta << \eta \varepsilon$, the angle:
\[
\max_{{q_1,q_2 \in P \cap B(p, \rho)} \atop {q_1 \neq q_2}} \angle q_1q_2, V_p
\]
can be upper bounded, by, say, $\pi/3$, so that its cosine is lower bounded by $1/2$.
Therefore the  projection of points in $P \cap B(p, \rho)$  on $V_p$ remain at pairwise  distances at least $\frac{1}{2}\eta \varepsilon$.
The balls $V_p \cap B(\pi_{V_p}(q), \frac{1}{4} \eta \varepsilon)$ are disjoint and included in $V_p \cap B(\pi_{V_p}(q), \rho)$ which gives the upper bound of the lemma.
\end{proof}


Given $q_0 \in P \cap B(p, \rho)$ and a $(d-1)-simplex$ $\{q_1,\ldots, q_d\} \subset P \cap B(p, \rho)$, denote by $E^{\operatorname{fat}}_p(h,q_0, \{q_1,\ldots q_d\})$ the event made of all $f$
such that $d ( \pi_{V_p} (f(q_0)), [ \pi_{V_p}( f(\{q_1, \ldots , q_d\} ))] ) \leq h$.
 Where $[\ldots]$ denotes the hyperplane,  affine hull of the $d$ points  (generic with probability $1$).

The event  $E^{\operatorname{fat}}_p(h)$ is the union of all such events whose number is   $N_p {N_p -1 \choose{d}}$.


The probability of $E^{\operatorname{fat}}_p(h,q_0, \{q_1,\ldots q_d\})$   can be upper bounded by a uniform upper bound on 
conditional probabilities. A given sample $f_0(P \setminus \{q_0\})$ of all other points, defines the condition $\forall q \in P \setminus \{q_0\} ,f(q) = f_0(q)$,
 under which we can consider  the conditional probability of  the event $E^{\operatorname{fat}}_p(h,q_0, \{q_1,\ldots q_d\})$ and we have:
\[
 \mathbf{P}( E^{\operatorname{fat}}_p(h,q_0, \{q_1,\ldots q_d\}) )  \leq \sup_{f_0} \mathbf{P}( E^{\operatorname{fat}}_p(h,q_0, \{q_1,\ldots q_d\}) \mid \forall q \in P \setminus \{q_0\} ,f(q) = f_0(q) )
\]
This conditional probability is easy to upper bound. Indeed, since all points beside $q_0$ have a given position, and since the projection of $f(q_0)$
on $V_p$, obey a uniform law (as the Jacobian of the projection from a $d$-flat to a $d$-flat is constant)  inside the projection $\pi_{V_p} ( B(0, r_{\operatorname{pert.}}) \cap V_{q_0})$,
this conditional probability can be estimated as a ratio of two $d$-volumes.

If we denote by $\mathcal{V}_k$ the $k$-volume of the euclidean ball with radius $1$, then:
\[
\mathbf{Vol} \Big( \pi_{V_p} ( B(q_0, r_{\operatorname{pert.}}) \cap V_{q_0} ) \Big) \geq \alpha_d \left( r_{\operatorname{pert.}}  \cos \angle V_p, V_{q_0}\right)^d  
\]
Also, for a given $(d-1)$-flat $[ \pi_{V_p}( f(\{q_1, \ldots , q_d\} ))] $ in $V_p$ , one can upper bound the area of the subset of $\pi_{V_p} ( B(0, r_{\operatorname{pert.}}) \cap V_{q_0} )$
at distance at most $h$ from $[ \pi_{V_p}( f(\{q_1, \ldots , q_d\} ))] $ by:
\[
\mathbf{Vol} \Big(\Offset  {[ \pi_{V_p}( f(\{q_1, \ldots , q_d\} ))]} {h} \cap \pi_{V_p} ( B(q_0, r_{\operatorname{pert.}}) \cap V_{q_0} ) \Big) \leq 2h \alpha_{d-1} r_{\operatorname{pert.}}^{d-1}
\]
we get:
\begin{align*}
&\mathbf{P}( E^{\operatorname{fat}}_p(h,q_0, \{q_1,\ldots q_d\}) \mid \forall q \in P \setminus \{q_0\} ,f(q) = f_0(q) ) \\
& \quad = \frac{ \mathbf{Vol} \Big(\Offset  {[ \pi_{V_p}( f(\{q_1, \ldots , q_d\} ))]} {h} \cap \pi_{V_p} ( B(q_0, r_{\operatorname{pert.}}) \cap V_{q_0} ) \Big)}{ \mathbf{Vol} \Big( \pi_{V_p} ( B(q_0, r_{\operatorname{pert.}}) \cap V_{q_0} ) \Big)}
  \leq  \frac{2 \alpha_{d-1}}{\alpha_d \left( \cos \angle V_p, V_{q_0}\right)^d} \frac{h}{r_{\operatorname{pert.}}}
\end{align*}
With the bound of lemma \ref{lemma:PerturbationAndLLL_pseudoTangent} we have get:
\begin{align*}
 \angle V_p, V_{q_0} &\leq  \angle V_p , \Tangent {\pi_\M(p)} \M + \angle \Tangent {\pi_\M(p)} \M , \Tangent {\pi_\M(q_0)} \M + \angle \Tangent {\pi_\M(q_0)} \M, V_{q_0} \\
 & <  (2 \Xi_0(\eta, d) + 2) \frac{\rho}{R}
\end{align*}

So that, assuming:
\begin{equation}\label{equation:AssumtionRLArgeEnoughWRTXid}
\frac{\rho}{R} < \frac{1}{2 \Xi_0(\eta, d) + 2}  \arccos \left( \frac{1}{2}\right)^{1/d}
\end{equation}
One has, since $t \mapsto (\cos t)^d$ is decreasing:
\[
 \left( \cos \angle V_p, V_{q_0}\right)^d \geq \cos \left( (2 \Xi_0(\eta, d) + 2) \frac{\rho}{R}  \right)^d  > \frac{1}{2}
\]
So that:
\begin{equation}\label{equation:ProbabiltyUpperBoundOnFatnessEvent}
 \mathbf{P}  \Big( E^{\operatorname{fat}}_p ( h)  \Big) < N_p {N_p -1 \choose {d}} \frac{4 \alpha_{d-1}}{\alpha_d} \:  \frac{h}{r_{\operatorname{pert.}}}
\end{equation}

{\bf Upper bound on  $\mathbf{P}( E^{\operatorname{protect}}_p )$}
The computation is the same. The number of corresponding individual events for a given $q_0 \in P \cap B(p, \rho)$
and a $d$-simplex $\sigma \in \Sigma_d(p)$ is now $ N_p {N_p -1 \choose {d+1}}$.

The $d$-volume of the intersection of the $\zeta$-offset  of a $(d-1)$-sphere with radius at least $\eta \varepsilon - 2 r_{\operatorname{pert.}}$ 
with $\pi_{V_p} ( B(q_0, r_{\operatorname{pert.}}) \cap V_{q_0} )$
can be  upper bounded as follows.

The radius of the $d-1$-circumsphere $\tilde{S}$ of $\pi_{V_p}(f(q_1),\ldots,f(q_{d+1}) )$ is,
  thanks to \ref{equation:MajorationRPerEtaEpsilon}, at least $\eta \varepsilon - 2 r_{\operatorname{pert.}} > \frac{9}{10} \eta \varepsilon$.

Since $\pi_{V_p} ( B(q_0, r_{\operatorname{pert.}}) \cap V_{q_0} ) \subset B( \pi_{V_p} (q_0), r_{\operatorname{pert.}}) \cap V_p $,
it is enough to bound the volume of the intersection of  $\Offset {\tilde{S}} {\zeta}$
with  $B( \pi_{V_p} (q_0), r_{\operatorname{pert.}}) \cap V_p$. This set is included in the set of
points in  $\Offset {\tilde{S}} {\zeta}$ whose closest point on $\tilde{S}$ is inside the ball 
$B( \pi_{V_p} (q_0), r_{\operatorname{pert.}} + \zeta) \cap V_{q_0} )$. The $d$-volume of this last set can be upper bounded by the
$(d-1)$-volume of the outer shell times $2 \zeta$, 
in other words, if $\tilde{r}$ is the radius of $\tilde{S}$,
and $\tilde{a}$ is the $(d-1)$-volume (area) of the spherical cap $\tilde{C}$ defined as:
\[
\tilde{C} \defunder{=} \tilde{S} \cap B_{V_p}( \pi_{V_p} (q_0), r_{\operatorname{pert.}} + \zeta)
\]
We can bound our volume by:
\[
2 \zeta \left( \frac{\tilde{r} + \zeta} { \tilde{r}}\right)^{d-1}  \tilde{a}
\]
as, here, $\left( \frac{\tilde{r} + \zeta} { \tilde{r}}\right)^{d-1} $ is the ratio between 
the area of $\tilde{C}$ and the area of the corresponding outer shell of the $\zeta$-offset.

Now the ratio between $\tilde{a}$ and the $(d-1)$-volume of the $(d-1)$-disk $\tilde{D}$
subset of $B_{V_p}( \pi_{V_p} (q_0), r_{\operatorname{pert.}} + \zeta)$, 
with same boundary as $\tilde{C}$ is upper bounded by $\frac{d \alpha_d}{2 \alpha_{d-1}}$, 
where $\frac{d \alpha_d}{2}$ is the $(d-1)$-volume of the half $(n-1)$-sphere bounding the unit $n$-ball,
and $\alpha_{d-1}$ the $(d-1)$-volume of the unit of $(d-1)$-ball with the same boundary which is the equator of the the unit $n$-ball.
This ratio can be made as near as $1$ as  wanted if the ratio $r_{\operatorname{pert.}}/\tilde{r}$ is assumed small enough, but,
since we don't care too much about constants, we keep the ratio $\frac{d \alpha_d}{2 \alpha_{d-1}}$ so that we get:
\[
\tilde{a} < \frac{n \alpha_d}{2 \alpha_{d-1}} \alpha_{d-1} (r_{\operatorname{pert.}} + \zeta)^{d-1} =  \frac{d \alpha_d}{2 } (r_{\operatorname{pert.}} + \zeta)^{d-1} 
\]
which gives:
\[
 \mathbf{P}  \Big( E^{\operatorname{protect}}_p ( \zeta)  \Big) < N_p {N_p -1 \choose {d+1}} \frac{ 2 \zeta \left( \frac{\tilde{r} + \zeta} { \tilde{r}}\right)^{d-1} \frac{d \alpha_d}{2 } (r_{\operatorname{pert.}} + \zeta)^{d-1} } {\alpha_d \left( r_{\operatorname{pert.}}  \cos \angle V_p, V_{q_0}\right)^d  }
\]
Observe that since $(1+ 1/n)^n < e$, assuming:
\begin{equation}\label{equation:LowerBondOnZeta}
 \zeta < \frac{r_{\operatorname{pert.}}}{d-1} <  \frac{\tilde{r}}{d-1} 
\end{equation}
one has: $\left( \frac{\tilde{r} + \zeta} { \tilde{r}}\right)^{d-1} < e$ and $(r_{\operatorname{pert.}} + \zeta)^{d-1} < e r_{\operatorname{pert.}}^{d-1}$, so that,
assuming \eqref{equation:AssumtionRLArgeEnoughWRTXid} we get:
\begin{equation}\label{equation:ProbabiltyUpperBoundOnPotectionEvent}
 \mathbf{P}  \Big( E^{\operatorname{protect}}_p ( \zeta)  \Big) < N_p {N_p -1 \choose {d+1}}  2 d e^2 \frac{\zeta}{r_{\operatorname{pert.}}}
\end{equation}
We have now write an explicit upper bound on equation \ref{equation:ProbabiltyUpperBoundedBySum1}:
\begin{equation}\label{equation:ProbabiltyUpperBoundedBySum2}
\mathbf{P}( \neg E^{\operatorname{good}}_p(h, \zeta) ) \leq   N_p {N_p -1 \choose {d}} \frac{4 \alpha_{d-1}}{\alpha_d} \:  \frac{h}{r_{\operatorname{pert.}}} + N_p {N_p -1 \choose {d+1}}  2 d e^2 \: \frac{\zeta}{r_{\operatorname{pert.}}}
\end{equation}
Let us denote respectively by $c_{\operatorname{fat}}$ and $c_{\operatorname{protect}}$ the respective coeffients in front of $\frac{h}{r_{\operatorname{pert.}}} $ and $\frac{\zeta}{r_{\operatorname{pert.}}}$ is the above upper bound:
\begin{align*}
c_{\operatorname{fat}} &\defunder{=}   N_p {N_p -1 \choose {d}} \frac{4 \alpha_{d-1}}{\alpha_d} \\
c_{\operatorname{protect}} &\defunder{=}   N_p {N_p -1 \choose {d+1}}  2 d e^2 
\end{align*}
With this simplified notation, we can substitute in \eqref{equation:ProbabiltyUpperBoundedBySum2}  the smallest possible value
  \eqref{equation:UpperBoundOnZetaFctOfH} of $\zeta$ and we get:
\begin{equation}\label{equation:expressionIntermNotPOfEgood}
\mathbf{P}( \neg E^{\operatorname{good}}_p(h, \zeta) ) \leq   c_{\operatorname{fat}} \:  \frac{h}{r_{\operatorname{pert.}}} + c_{\operatorname{protect}} \:  \frac{10^4 d^2  C^2  \left( \frac{\rho}{R} \right)^2   \left(\frac{\rho}{h}\right)^3 \rho}{r_{\operatorname{pert.}}}
\end{equation}
taking:
\[
h_{\min}^4 = \frac{3 c_{\operatorname{protect}}  10^4 d^2  C^2  \left( \frac{\rho}{R} \right)^2  \rho^4 }{ c_{\operatorname{fat}} }
\]
As a function of $h$, the right hand term  of \eqref{equation:expressionIntermNotPOfEgood}
 is decreasing for $h< h_{\min}$ and increasing for $h>h_{\min}$.

Setting $h= h_{\min}$, \eqref{equation:expressionIntermNotPOfEgood} and defining the constant $c_\star$, which, 
as $c_{\operatorname{fat}}$ and $c_{\operatorname{protect}}$, depends  on $d$ and $\eta$ only, as:
\[
c_{\star} \defunder{=} \left( 10^4 \frac{c_{\operatorname{protect}}}{c_{\operatorname{fat}}}  d^2 C^2 \right)^{\frac{1}{4}},
\]
one has:
\begin{equation}\label{equation:OptimalValueForH}
h= h_{\min} =  3^{\frac{1}{4}} \:  c_{\star} \left( \frac{\rho}{R} \right)^{\frac{1}{2}}  \rho
\end{equation}
 and we get, 
by substituting in \eqref{equation:expressionIntermNotPOfEgood}:  
\begin{equation}\label{equation:expressionIntermNotPOfEgood2}
\mathbf{P}( \neg E^{\operatorname{good}}_p(h, \zeta) ) \leq  (3^{\frac{1}{4}} + 3^{-\frac{3}{4}})  c_{\operatorname{fat}} c_{\star}\left( \frac{\rho}{R} \right)^{\frac{1}{2}}  \frac{\rho}{r_{\operatorname{pert.}}} 
< 2 c_{\operatorname{fat}} c_{\star}\left( \frac{\rho}{R} \right)^{\frac{1}{2}}  \frac{\rho}{r_{\operatorname{pert.}}}
\end{equation}
In order to apply  Lov{\'a}sz local lemma we need to upper bound the number of events $  \neg E^{\operatorname{good}}_q(h, \zeta) , q \in P$
which are nit independent of $ \neg E^{\operatorname{good}}_p(h, \zeta)$. Since, as soon as $\| p-q \| > 2 \rho$, $P\cap B(p, \rho)$ and 
 $P\cap B(q, \rho)$ being disjoint, the events are independent. The number of dependant event is then bounded by,
 following the same argument as for proposition \ref{equation:BoundOnNearByPoints} :
 \[
 N_{indep} +1 \leq  \left(  \frac{8 \rho}{\eta \varepsilon}\right)^d
\]
So that Lov{\'a}sz local Lemma  may apply if:
\[
e \mathbf{P}( \neg E^{\operatorname{good}}_p(h, \zeta) )  (N_{indep} +1) < 1
\]
that is if:
\[
e \left(  \frac{8 \rho}{\eta \varepsilon}\right)^d 2 c_{\operatorname{fat}} c_{\star}\left( \frac{\rho}{R} \right)^{\frac{1}{2}}  \frac{\rho}{r_{\operatorname{pert.}}} < 1
\]
Since we have assumed $ r_{\operatorname{pert.}} \leq \frac{ \eta \varepsilon}{20}$ in \eqref{equation:MajorationRPerEtaEpsilon} and 
$r_{\operatorname{pert.}} \leq \frac{\rho }{8 \max (\Xi_0(\eta,d), 1) }$ in \eqref{equation:perturbedPointIn2DeltaThickening}, we see that
the required perturbation is possible if:
\begin{equation}\label{equation:ReachCondtionForPerturbation}
\left( \frac{\rho}{R} \right)^{\frac{1}{2}}  < \min \left( \frac{  \eta \varepsilon} { 20 }, 
 \frac{  \rho} { 8 \max (\Xi_0(\eta,d), 1) } \right) \frac{1}{  e  \rho  \left(\frac{8 \rho}{\eta \varepsilon}\right)^d 2 c_{\operatorname{fat}} c_{\star}}
\end{equation}
Since the right hand side depends only on $\eta,\frac{\epsilon}{\rho}$ and $d$, 
and since $\frac{\epsilon}{\rho}$ can be chosen equal to $\frac{1}{16}$,
assuming  $\frac{\rho}{R}$ small enough enforces the inequality to hold.
When it holds, we can take:
\begin{align*}
r_{\operatorname{pert.}} &= \min \left( \frac{ \eta \varepsilon}{20},  \frac{\rho }{8 \max (\Xi_0(\eta,d), 1) } \right) \\
h &= 3^{1/4} c_{\star} \left( \frac{\rho}{R} \right)^{\frac{1}{2}} \rho \\
\zeta &= 3^{-3/4} \: c_{\star}\frac{c_{\operatorname{fat}}}{c_{\operatorname{protect}}} \left( \frac{\rho}{R} \right)^{\frac{1}{2}}  \rho 
\end{align*}

We can give now the perturbation lemma, that refers to Moser Tardos Algorithm, see
 Algorithm 4 and Theorem 5.22 in \cite[Section 5.3.4]{boissonnat2018geometric}.
 For convenience of use, the radius $\rho$ is denoted $\tilde{\rho}$ in the 
the lemma setting.

\begin{lemma}\label{lemma:PerturbationLemma}
Let $C\geq 1$, $\delta \geq 0$, $0 < \varepsilon \leq \frac{\tilde{\rho}}{16}$, $10 \tilde{\rho} < \reach \leq  \Reach{\M}$ and suppose that $P \subseteq \Offset \M \delta$ 
for $\delta = \frac{\tilde{\rho}^2}{4 R}$, and 
$\M \subseteq \Offset P \varepsilon$ and $\Sep{P} > \eta \varepsilon$ for $\eta >0$.
For each $p \in P$ defines $V_p$ as in Lemma \ref{lemma:PerturbationAndLLL_pseudoTangent}.
Then, given  fixed constants $\eta$,$\frac{\varepsilon}{\tilde{\rho}}$ and $d$, for  $\frac{\tilde{\rho}}{\reach}$ small enough, 
\eqref{equation:ReachCondtionForPerturbation} holds, so that 
Moser Tardos algorithm produces a perturbation $f(P)$ of $P$ such that $f(p) \in V_p \cap B(p, r_{\operatorname{pert.}})$
and such that $\M \subset   \Offset {f(P)} {\varepsilon'}$ and $f(P) \subseteq \Offset \M {\delta'}$ with $\delta' = 2 \delta$,  $\varepsilon' = \varepsilon + r_{\operatorname{pert.}}$
and $\Sep{f(P)} > \eta' \varepsilon'$, for $\eta' = \eta \frac{ 9 \varepsilon}{ 10 (\varepsilon +  r_{\operatorname{pert.}})}$.

Moreover the perturbed cloud $f(P)$ is $\zeta$ protected at scale $\tilde{\rho}$ and:

\begin{align}
r_{\operatorname{pert.}} &= \min \left( \frac{ \eta \varepsilon}{20},  \frac{\tilde{\rho} }{8 \max (\Xi_0(\eta,d), 1) } \right) \\
\Height{f(P)}{\tilde{\rho}} \geq h &= 3^{1/4} c_{\star} \left( \frac{\tilde{\rho}}{\reach} \right)^{\frac{1}{2}} \tilde{\rho} \\
\Protection{f(P)}{\tilde{\rho}} \geq  \zeta &=  3^{-3/4} \: c_{\star}\frac{c_{\operatorname{fat}}}{c_{\operatorname{protect}}} \left( \frac{\tilde{\rho}}{\reach} \right)^{\frac{1}{2}}  \tilde{\rho} \\
\zeta &> 2 A \left( 1 + \frac{4d\varepsilon}{h}  \right) \\
 A &= 4\delta(C \theta_m) + 4  \tilde{\rho}( C \theta_m)^2 \label{equation:EquationDefA}
\end{align}
for some $\theta_m$ such that $\theta_m > 4 \frac{\tilde{\rho}}{\reach} $  and $\theta_m \geq \Theta(P,\tilde{\rho})$.
\end{lemma}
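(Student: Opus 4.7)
The plan is to view the Moser-Tardos procedure as sampling i.i.d. random positions for each $p' \in P'$ and to apply the algorithmic Lovász Local Lemma of \cite{moser2010constructive} to the family of bad events defined above. First I would verify the ``cheap'' metric conclusions that do not involve randomness at all: since each reset moves a point by at most $r_{\operatorname{pert.}} \leq \eta\varepsilon/20$, one has $d_H(P,P') \leq r_{\operatorname{pert.}}$, which immediately gives $\M \subseteq \Offset{(P')}{\varepsilon + r_{\operatorname{pert.}}} \subseteq \Offset{(P')}{\varepsilon'}$ and $\Sep{P'} \geq \Sep P - 2 r_{\operatorname{pert.}} > \tfrac{9}{10}\eta\varepsilon$. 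The accuracy bound $P' \subseteq \Offset \M {\delta'}$ with $\delta' = 2\delta$ comes from the fact that $\tilde T_p$ lies within angle $O(\rho/\reach)$ of $\Tangent{\pi_\M(p)}\M$ (the PCA approximation quality, using that $P$ is dense, separated, and $\delta$-accurate with $\delta = \rho^2/\reach$), so the perturbed point stays in a $\delta + O(r_{\operatorname{pert.}}\rho/\reach) + O(r_{\operatorname{pert.}}^2/\reach) \leq 2\delta$ tube, provided $\varepsilon/\reach$ is below some $c_1(\eta, C_{\operatorname{ste}}, d)$.

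Next I would bound the probability of each bad event conditionally on the positions of every other point. For a candidate height violation with a fixed opposite face, the volume of admissible positions for the offending vertex $p'$ on which the height is $\leq \mathrm{Heigh}_{\min}$ is at most the $(d-1)$-volume of a slab in $\tilde T_p \cap B(p,r_{\operatorname{pert.}})$, giving a conditional probability of order $\mathrm{Heigh}_{\min}/r_{\operatorname{pert.}}$. Analogously, protection violation of $\sigma'$ against $p'$ restricts $\pi_{\Aff{\sigma'}}(p')$ to a $\mathrm{Prot}_{\min}$-tubular neighborhood of the circumsphere $S(\sigma')$, yielding a conditional probability of order $\mathrm{Prot}_{\min}/r_{\operatorname{pert.}}$ (after checking that the $(d-1)$-volume of a spherical band intersected with the projected perturbation disk is bounded as in the standard Boissonnat--Ghosh calculation). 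The number of bad events correlated with a fixed vertex is $O((\rho/\Sep P)^d) = O_{\eta,d,C_{\operatorname{ste}}}(1)$, bounded by a constant.

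Then I would apply the LLL criterion in its symmetric form: it suffices that $e \cdot (p_{\text{fat}} + p_{\text{prot}}) \cdot (N_{\mathrm{indep}}+1) < 1$. Since both probabilities scale linearly in $\mathrm{Heigh}_{\min}/r_{\operatorname{pert.}}$ and $\mathrm{Prot}_{\min}/r_{\operatorname{pert.}}$, this holds once $\mathrm{Heigh}_{\min}$ and $\mathrm{Prot}_{\min}$ are taken smaller than some constant times $r_{\operatorname{pert.}} \asymp \varepsilon \asymp \rho$, divided by the combinatorial factor. The Moser-Tardos theorem then asserts termination in expected $O(\sharp P)$ resets, and the output satisfies both $\Height{P'}{\rho} \geq \mathrm{Heigh}_{\min}$ and $\Protection{P'}{3\rho} \geq \mathrm{Prot}_{\min}$. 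Finally, the angle bound $\Theta(P',\rho) \leq c_4 (\rho/\reach)^{1/2}$ follows from Corollary \ref{lemma:angle-span-simplex-tangent-plane} applied to $P'$, since $\dim\sigma/\height\sigma \leq d/\mathrm{Heigh}_{\min}$ and $\delta', \rho^2/\reach$ give the bound.

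The main obstacle, and the reason for the specific $(\rho/\reach)^{1/2}$ scaling, is that the protection condition \refhypo{hypo:safety-protection} in Theorem \ref{theorem:homeomorphism-from-sampling-conditions} demands $\Protection{P'}{3\rho} > 2A(1 + 4d\varepsilon/\Height{P'}{\rho})$ with $A = 4\delta\theta + 4\rho\theta^2$ and $\theta$ itself controlled via \refhypo{hypo:safety-angle} by an expression involving $1/\Height{P'}{\rho}$. This creates a coupling: $\mathrm{Prot}_{\min}$ must dominate a quantity of order $\rho^3/(\reach^2 h^3)$ (after substituting the Whitney-type angle bound through $h$), while LLL forces $\mathrm{Prot}_{\min}$ and $\mathrm{Heigh}_{\min}$ both proportional to $\rho$ up to $(\rho/\reach)$-factors. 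Optimizing the trade-off $\mathrm{Prot}_{\min} \asymp \mathrm{Heigh}_{\min} \asymp (\rho/\reach)^{1/2}\rho$ balances both constraints simultaneously, and then choosing $c_2, c_3, c_4$ accordingly and $c_1$ small enough to absorb all prefactors gives the stated bounds and certifies that \refhypo{hypo:safety-angle}, \refhypo{hypo:safety-separation}, and \refhypo{hypo:safety-protection} hold for $(P', \varepsilon', \delta', \rho, \theta)$ with $\theta \sim (\rho/\reach)^{1/2}$, completing the verification that $P'$ meets the hypotheses of Theorem \ref{theorem:homeomorphism-from-sampling-conditions}.
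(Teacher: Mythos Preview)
Your proposal is correct and follows essentially the same route as the paper's development in Section~\ref{section:PerturbationAndLLL}: the deterministic metric bounds from $d_H(P,P')\le r_{\operatorname{pert.}}$, the $\delta'=2\delta$ accuracy via the PCA tangent estimate, the slab/annulus volume bounds for the conditional probabilities of the height and protection violations, the symmetric LLL criterion with dependency degree $O((\rho/\eta\varepsilon)^d)$, and---crucially---the coupling between the required protection $\zeta$ and the minimal height $h$ through the angle bound, whose balancing yields the $(\rho/\reach)^{1/2}$ scaling. The only notable cosmetic difference is that the paper groups bad events per sample point $p$ (via $E^{\operatorname{good}}_p(h,\zeta)$ defined on the projections $\pi_{V_p}(f(\sigma))$) rather than per simplex, which makes the volume estimates slightly cleaner since the perturbation law is uniform in $V_p$.
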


\subsection{Proof of Lemma \ref{lemma:PerturbationTunedForTheorem} }\label{section:ProofOfLemmaPerturbationTunedForTheorem}

Lemma \ref{lemma:PerturbationTunedForTheorem} is a  corollary of Lemma \ref{lemma:PerturbationLemma}, applied with adapted parameters.
The radius $\tilde{\rho}$ of  Lemma \ref{lemma:PerturbationLemma} is $\tilde{\rho} = 3 \rho$,
where $\rho$ is the radius of Lemma \ref{lemma:PerturbationTunedForTheorem} 
and Theorem \ref{theorem:homeomorphism-from-sampling-conditions}.
 The angle $\theta_m$ of  Lemma \ref{lemma:PerturbationLemma} gives the $\frac{1}{3}  \theta$,
 where $\theta$ is the angle in  condition \refhypo{hypo:safety-angle} of 
Theorem \ref{theorem:homeomorphism-from-sampling-conditions}.
The constant $C$ of  \ref{lemma:PerturbationLemma}  is set to $3$.
So that, with 
the value  $C=3$ in \eqref{equation:EquationDefA}
the angle  $\theta = C \theta_m =3 \theta_m$  gives us the angle $\theta$ of   condition \refhypo{hypo:safety-angle} in 
Theorem \ref{theorem:homeomorphism-from-sampling-conditions}.
For $\frac{\varepsilon}{\reach}$   small enough then Lemma  \ref{lemma:PerturbationLemma} applies.
In particular $\theta_m > 4 \frac{\rho}{R} $  and $\theta_m \geq \Theta(P,3\rho) \geq \Theta(P,\rho) $.
Then, since $\theta_m > 4 \frac{\rho}{R} \Rightarrow  \theta_m > \arcsin  \frac{2\rho}{R} >  \arcsin  \frac{\rho+ \delta}{R}$,
we get that:
\[
 \Theta(P,\rho)  \leq 3 \theta_m - 2 \arcsin  \frac{\rho+ \delta}{R}
 \]
and  condition \refhypo{hypo:safety-angle} of  Theorem \ref{theorem:homeomorphism-from-sampling-conditions}  is satisfied with $\theta = 3 \theta_m$.
Conditions \refhypo{hypo:safety-protection} is satisfied as well, and, for   $\frac{\varepsilon}{\mathcal{R}}$  small enough,
we see that  condition \refhypo{hypo:safety-separation} is satisfied as well.
%
%

\end{document}